\documentclass[reqno]{amsart}
\usepackage{amssymb}

\DeclareMathOperator{\supp}{supp}
\DeclareMathOperator{\tr}{tr}
\DeclareMathOperator{\Ran}{Ran}
\DeclareMathOperator{\dist}{dist}
\DeclareMathOperator{\Ker}{Ker}

\newcommand{\eq}[1]{\eqref{#1}}
\newcommand{\up}[1]{^{(#1)}}
\newcommand{\bp}[1]{^{\{#1\}}}

\newcommand\beq{\begin{equation}}
\newcommand\eeq{\end{equation}}

\newtheorem{theorem}{Theorem}[section]

\newtheorem{lemma}[theorem]{Lemma}
\newtheorem{corollary}[theorem]{Corollary}

\numberwithin{equation}{section}

\newcommand\R{\mathbb R}
\newcommand\N{\mathbb N}

\newcommand\Z{\mathbb Z}

\newcommand\di{\mathrm d}

\newcommand{\T}{\mathcal{T}}

\newcommand\e{\mathrm{e}}

\renewcommand\P{\mathbb P}
\newcommand\E{\mathbb E}

\newcommand{\qnorm}[1]{\left\lVert\!\left\Vert #1 \right\rVert\!\right\rVert}
\newcommand{\abs}[1]{\left\lvert #1 \right\rvert}
\newcommand{\norm}[1]{\left\lVert #1 \right\rVert}
\newcommand{\set}[1]{\left\{ #1 \right\}}
\newcommand{\pa}[1]{\left( #1 \right)}
\newcommand{\pb}[1]{\left[ #1 \right]}

\begin{document}

\title[Ergodic Landau Hamiltonians]
{Quantization of the Hall conductance and delocalization in ergodic Landau Hamiltonians}

\author[F. Germinet]{Fran\c cois Germinet}
\address[Germinet]{ Universit\'e de Cergy-Pontoise,
D\'epartement de Math\'ematiques, CNRS UMR 8088, IUF,
95000 Cergy-Pontoise, France}
\email{germinet@math.u-cergy.fr}

\author[A. Klein]{Abel Klein}
\address[Klein]{University of California, Irvine,
Department of Mathematics,
Irvine, CA 92697-3875,  USA}
\email{aklein@uci.edu}

\author[J. H. Schenker]{Jeffrey H. Schenker}
\address[Schenker]{Michigan State University, Department of Mathematics, East Lansing, MI 48823, USA}
\email{jeffrey@math.msu.edu}

\thanks{2000 \emph{Mathematics Subject Classification.} 
Primary 82B44; Secondary  47B80, 60H25}

\thanks{A.K was  supported in part by NSF Grant DMS-0457474.}

%
\begin{abstract} 
We prove quantization of the Hall conductance  for continuous ergodic Landau Hamiltonians under a condition on the decay of the Fermi projections.   This condition and continuity of the integrated density of states are shown to imply continuity of the Hall conductance. In addition, we 
prove the existence of  delocalization near each Landau level  for these two-dimensional
Hamiltonians.  More precisely, we prove that for some  ergodic Landau Hamiltonians there exists  an energy $E$ near each Landau level where a ``localization length'' diverges.
  For  the Anderson-Landau Hamiltonian we also obtain a transition between dynamical localization and dynamical delocalization in the Landau bands, with a minimal rate of transport,  even  in cases when the spectral gaps are closed. 
\end{abstract}

\maketitle
\setcounter{tocdepth}{1}
\tableofcontents

\section{Introduction}

Ergodic Landau Hamiltonians   
describe an  electron moving in  a very thin flat conductor
with impurities under the influence of a constant
magnetic field perpendicular to the plane of the conductor. 
They play an important role in the understanding of the quantum Hall effect
\cite{L,AA,T,Ha,NT,Ku,Be,ASS,BES}.  Laughlin's argument relies on  
the assumption that under weak disorder and strong magnetic field the energy
spectrum consists of
bands of extended states separated by energy regions of localized states and/or energy
gaps \cite{L,Ha,AA,T}. 
 Kunz \cite{Ku} formulated assumptions under which he
derived the divergence of a  ``localization length" 
near each Landau level at weak disorder. 

Previous to our recent paper \cite{GKS}, there had been no rigorous results concerning delocalization for continuous ergodic Landau Hamiltonians.   Divergence of a  ``localization length" had only been  proved  for an ergodic Landau Hamiltonian in  a tight-binding approximation, a discrete ergodic Schr\"odinger operator. 
The first  results were obtained by
Bellissard, van Elst and Schulz-Baldes \cite{BES}, who proved that, for an
ergodic Landau Hamiltonian in  a tight-binding approximation,
if the Hall
conductance jumps from one integer value to another between two Fermi
energies, then there is an energy between these Fermi energies at which a
certain
localization length diverges.  Their results relied on a proof of
the quantization of Hall conductance (the quantum Hall effect) for ergodic Landau Hamiltonians in a tight binding representation (discrete ergodic Landau Hamiltonians) in energy intervals characterized by  a condition on the decay of the Fermi projections. Their proof relies on  noncommutative geometry and the Dixmier trace. Aizenman and Graf \cite{AG} gave a more elementary derivation of this result, incorporating  ideas of Avron,  Seiler and Simon \cite{ASS}, paying the price of  a slightly stronger condition on the decay of the Fermi projections.

In \cite{GKS} we proved that the (continuous)   Anderson-Landau Hamiltonian (the random Landau Hamiltonian  in \cite{GKS}) exhibits dynamical delocalization in each Landau band.  More precisely, under the disjoint bands condition (open spectral gaps between Landau bands), which holds (bounded potentials)
under  weak disorder and/or strong magnetic field, we proved the existence of a transition between   dynamical localization and dynamical delocalization in each Landau band, with a lower bound on the rate of transport.  
We used nontrivial consequences  of the multiscale analysis for
random Schr\"odinger operators  to prove that the Hall conductance for  the  Anderson-Landau Hamiltonian  is well defined and  
constant in intervals of  dynamical localization. We used  the knowledge of the  precise values
of the Hall conductance for the (free) Landau Hamiltonian: it is constant
between Landau levels and jumps by one at each Landau level,   a well
known fact (e.g., \cite{ASS,BES}).  In addition, we showed that the Hall conductance is constant as a function of the disorder
parameter in the gaps between the  Landau bands,
a result previously  derived by Elgart and Schlein \cite{ES}  for smooth potentials. Under the disjoint bands conditions (open spectral gaps), we combined these ingredients to conclude that there must be
dynamical delocalization as we cross a  Landau band. Moreover, since the existence of dynamical localization at the edges of these Landau bands was
known \cite{CH,Wa,GKgafa}, we proved  the existence of dynamical mobility
edges. 

In \cite{GKS} we circumvented the use of  the quantization of the Hall conductance.
For continuous Landau Hamiltonians quantization of the   Hall conductance had only been known  on spectral gaps \cite{ASS}.
A proof of  quantization of the   Hall conductance inside the spectrum of continuous ergodic Landau Hamiltonians has been a long-standing open problem. Although it was  promised in 1994 \cite{BES}, the proof never appeared.   (As mentioned in \cite{BES}, in the discrete case their proof studies a compact noncommutative manifold, while in the continuous case the corresponding noncommutative manifold is locally compact, but not compact.)

In this article we prove quantization of the Hall conductance
for continuous ergodic Landau Hamiltonians under a condition on the decay of the Fermi projections.  We also show that this condition and  continuity of the integrated density of states imply continuity of the Hall conductance. In particular, we get quantization and continuity of the Hall conductance
for the Anderson-Landau Hamiltonian in the region of localization.

Our condition on the decay of the Fermi projections is reminiscent of the condition used in \cite{AG}, but it is not the same because of differences between the continuous and the discrete cases.  Although the weaker condition given in \cite{BES} is very natural (it  was shown by
Bouclet and the authors \cite{BGKS} to be sufficient for a rigorous
derivation of the Kubo-St\u{r}eda formula for the Hall conductance
in continuous  ergodic Landau Hamiltonians), its use for a derivation of the quantization of the Hall conductance
seems to require methods of noncommutative geometry and the Dixmier trace that have not been extended to the continuous case.

In \cite{GKS} we did not use the quantization of  the Hall conductance, but required
 the disjoint bands condition.  The results in this paper not only give a new proof of the delocalization results in \cite{GKS}, but they allow the extension of those results to ergodic Landau Hamiltonians, in the sense of divergence of a ``localization length".

In this paper  we  go beyond the disjoint bands condition, proving  dynamical delocalization  in the Landau bands  for  the Anderson-Landau Hamiltonian
in cases where the spectral gaps are closed.  Using our results on the quantization of the Hall conductance,  we prove the existence of a transition between   dynamical localization and dynamical delocalization in  a Landau band, with a lower bound on the rate of transport, for Anderson-Landau Hamiltonians with closed  spectral gaps.  Although in this paper we assume, as in \cite{GKS}, that the  potentials are bounded, this restriction can be removed.
This extension appears in a companion article \cite{GKM}, which considers an Anderson-Landau Hamiltonian with unbounded random amplitudes (e.g.,  with a Gaussian distribution), where all the gaps close as soon as  the disorder  is turned on.  The main results of this paper still hold for such unbounded  Anderson-Landau Hamiltonians;  the theorem concerning the existence of a dynamical transition is stated below  for completeness.

\section{Definitions and main results}
\label{sectintro}

We consider a 
$\mathbb{Z}^2$-ergodic Landau Hamiltonian 
\begin{equation} \label{landauh} 
H_{B,\lambda,\omega} =H_B +
\lambda  V_\omega \quad \mathrm{on} \quad
\mathrm{L}^2(\mathbb{R}^2, {\mathrm{d}}x), 
\end{equation}
where $H_B$ is the (free) Landau Hamiltonian,
\begin{equation}\label{free Landau}
H_B =  (-i\nabla-\mathbf{A})^2 \quad \text{with} \quad 
\mathbf{A}= \frac B2 (x_2,-x_1)
\end{equation} 
($\mathbf{A}$ is the vector potential and
$B>0$ is the strength of the magnetic field, we use the symmetric gauge
and incorporated the charge of the electron in the vector potential), 
$\lambda \ge 0$ is
the disorder parameter,   and
$V_\omega$ is a bounded ergodic (real) potential.  Thus, there is a probability space $(\Omega, \P)$ equipped with an ergodic group $\{\tau(a); \ a \in \Z^2\}$ of
measure preserving transformations, a potential-valued map $V_\omega$ on $\Omega$, measurable in the sense that $\langle \phi, V_\omega \phi \rangle$ is a measurable function of $\omega$ for all $\phi \in C^\infty_c(\mathbb{R}^2)$.   Such a family of potentials includes random as well as quasiperiodic potentials.  We  assume that
\beq \label{Vomega}
-M_1 \le   V_\omega(x) \le M_2, \quad\text{where} \quad M_1,M_2 \in [0,\infty)\quad \text{with} \quad M_1 + M_2 >0,
\eeq
and
\beq
V_\omega(x-a)=   V_{\tau_a\omega}(x) \quad \text{for all  $a \in \Z^2$}.
\eeq

An important example of an ergodic Landau Hamiltonian is the Anderson-Landau Hamiltonian  
\beq \label{ALH}
H_{B,\lambda,\omega}\up{A}:= H_B + \lambda V_\omega\up{A},
\eeq 
where $ V_\omega\up{A}$ is the random potential 
\begin{equation} \label{potVL}
 V_\omega\up{A}(x) = \sum_{i \in\mathbb{Z}^2} \omega_i\, u(x-i)  ,
\end{equation} 
with $u(x) \ge 0$  a bounded measurable function with compact support, 
$u(x) \ge u_0$ on some nonempty open set for some constant $u_0 >0$, and 
$\omega =\{\omega_i; \ i\in\mathbb{Z}^2\}$ a 
family of independent,
identically distributed random variables taking values in a bounded interval 
$[-M_1, M_2]$ ($0 \le M_1,M_2 < \infty$, $ M_1 + M_2 > 0$), whose common probability
distribution $\mu$ has a bounded density $\rho$.  Without
loss of generality
we set $\left\|\sum_{i \in\mathbb{Z}^2} \, u(x-i)\right\|_\infty=1$, and hence
 $\, -M_1 \le   V_\omega\up{A}(x)\le M_2$.

An ergodic Landau Hamiltonian $ H_{B,\lambda,\omega}$ is a self-adjoint measurable
operator, i.e., with probability one  $ H_{B,\lambda,\omega}$ is a self-adjoint operator and the mappings $\omega \to f( H_{B,\lambda,\omega})$ are strongly
measurable for all bounded measurable functions on $\mathbb{R}$ (cf. \cite{PF}).   The magnetic translations $U_a= U_a(B)$,  $ a \in \R^2$, defined  by
\begin{equation}
\left(U_a \psi\right)(x) = \e^{-i \frac B2  (x_2a_1 - x_1 a_2)} \psi(x -a),
\end{equation}
give a projective unitary representation of 
$\R^2$ on $\mathrm{L}^2(\mathbb{R}^2, {\mathrm{d}}x)$:
\begin{equation}
U_a U_b =   \e^{i \frac B2  (a_2b_1 - a_1 b_2)}  U_{a+b} =
\e^{i  B  (a_2b_1 - a_1 b_2)} U_b U_a, \quad a,b \in \Z^2. 
\end{equation}
We have  $U_a H_{B} U_a^* = H_B$ for all $a \in \R^2$, and  the following covariance relation
for magnetic translation by elements of $\Z^2$:
\begin{equation}\label{covariance}
U_a H_{B,\lambda,\omega} U_a^* = H_{B,\lambda,\tau_a\omega}
\quad \text{for all  $a \in \Z^2$}.
\end{equation}

It follows from ergodicity that that 
$H_{B,\lambda,\omega}$ has a nonrandom spectrum: there exists a nonrandom set $\Sigma_{B,\lambda}$ such that
$\sigma (H_{B,\lambda,\omega})=\Sigma_{B,\lambda} $ with probability one.
Moreover  the
decomposition of $\sigma (H_{B,\lambda,\omega})$ into pure point spectrum,
absolutely continuous spectrum, and singular continuous spectrum is also
independent of the choice of $\omega $ with probability one \cite{KM,CL,PF}.
In addition, the integrated density of states  $N(B,\lambda,E)$  is well defined 
and may be written as (cf.  \cite{HLMW1})
\begin{equation}\label{IDS11}
N(B,\lambda,E) = \E \left\{\tr \left\{\chi_0 P_{B,\lambda,E,\omega}  \chi_0\right\}\right\} .
\end{equation}
Here and throughout the paper, $\chi_{x}$ denotes the characteristic function of a cube of side length $1$ centered at $x \in \Z^{2}$.

The spectrum    of the  Landau
Hamiltonian $H_B$, denoted by $\Sigma_B$, consists of a sequence of infinitely
degenerate eigenvalues, the
Landau levels:
\begin{equation} \label{landaulevels}
\Sigma_B=\set{B_n :=(2n-1)B ,\quad n=1,2,\dotsc} .
\end{equation}
We also set  $B_0=-\infty$ for convenience.
Standard arguments (see Appendix~\ref{appLsp})  show  that 
\begin{equation} \label{splandau}
\Sigma_{B,\lambda} \subset
 \bigcup_{n=1}^\infty \mathcal{B}_n(B,\lambda) ,\quad\mbox{where}
\quad \mathcal{B}_n(B,\lambda)=
[B_n - \lambda M_1, B_n +\lambda M_2] . 
\end{equation}

For a given magnetic field $B >0$, disorder $\lambda \ge 0$ and energy $E \in \R$, 
the Fermi projection $P_{B,\lambda,E,\omega}$ is just the spectral
projection of the ergodic Landau Hamiltonian $H_{B,\lambda,\omega}$
onto energies $\le E$, i.e.,
\begin{equation}
P_{B,\lambda,E,\omega}= \chi_{(-\infty,E]}(H_{B,\lambda,\omega}).
\end{equation}
Estimates on the decay of  the operator kernel  of
the Fermi projection,
$$\left\{\chi_x P_{B,\lambda,E,\omega} \chi_y\right\}_{x,y \in\Z^2},$$  play an important role in the study of the Hall conductance.

To state these estimates we  introduce norms on random operators (see Subsection~\ref{aprandomop} for more details). 
A random operator $S_\omega$ is a strongly measurable 
map from the probability
space $(\Omega,\P)$ to bounded operators on 
$\mathrm{L}^2(\mathbb{R}^2, {\mathrm{d}}x)$.   
We set
\begin{align}\notag
\qnorm{S_\omega}_p  & := \left\{ \E \left\{ \tr |S_\omega|^p   \right\}\right\}^{\frac 1p}
= 
\left\lVert  \| S_\omega \|_p \right\rVert_{\text{L}^p(\Omega,\P) }\quad \text{for} \quad p \in [1,\infty),\\
\qnorm{S_\omega}_\infty  & :=
\left\lVert  \| S_\omega \| \right\rVert_{\text{L}^\infty(\Omega,\P) }.\label{opnorms}
\end{align}

The Hall conductance $\sigma_H(B,\lambda,E)$ is given by
\begin{equation}\label{sigmaH}
\sigma_H (B,\lambda,E) = - 2\pi i\,
\E \left\{\tr \left\{\chi_0 P_{B,\lambda,E,\omega}
\left[ \left[P_{B,\lambda,E,\omega},X_1\right], 
\left[P_{B,\lambda,E,\omega},X_2\right]\right]
\chi_0\right\}\right\},
\end{equation}
defined for $B >0$, $\lambda \ge 0$ and energy $E \in \R$ such that
\begin{equation} \label{welldef}
\qnorm{\chi_0 P_{B,\lambda,E,\omega}
\left[\left[ P_{B,\lambda,E,\omega},X_1\right],
\left[ P_{B,\lambda,E,\omega},X_2\right]\right] \chi_0}_1 < \infty.
\end{equation}
($X_i$ denotes the operator given by multiplication by the coordinate
$x_i$, $i=1,2$, and  $|X|$ the operator given by multiplication by $|x|$.)

A natural  condition for  \eqref{welldef} and quantization of the Hall conductance was given by Bellissard et al \cite{BES}:
\begin{equation}\label{K2}
\sum_{x  \in \Z^2} |x|^2\,
\qnorm{
\chi_x  P_{B,\lambda,E,\omega} \chi_0}_2^2
< \infty.
\end{equation} 
They showed the sufficiency of this  condition in an abstract $C^*$-algebra setting, from which they obtained existence and quantization of the Hall conductance for ergodic Landau Hamiltonians in a tight binding representation (ergodic Landau Hamiltonians).
This condition was also shown by
Bouclet and the authors \cite{BGKS} to be sufficient for a rigorous
derivation of \eqref{sigmaH} for ergodic Landau Hamiltonians as a Kubo formula.

Aizenman and Graf \cite{AG} gave a more elementary derivation of  the existence and quantization of the Hall conductance
for an ergodic Landau Hamiltonian $H_{B,\lambda,\omega}$ on $\ell^2(Z^2)$,  under the condition  \cite[condition (5.4)]{AG}, namely
\beq \label{AG54}
\sum_{x  \in \Z^2} |x| \set{\E \set{ \abs {\langle \delta_x, P_{B,\lambda,E,\omega}\delta_0 \rangle}^q}}^{\frac 1 q} < \infty \quad \text{for some} \quad q >2,
\eeq
which implies \eq{K2} in the discrete setting. 

In the discrete setting, given an interval where the integrated density of states is continuous, constancy of the Hall conductance  follows if either \eq{K2} or  \eq{AG54} holds with a  uniform bound in the interval \cite{BES,AG}.

On the continuum,  it is natural to work with  estimates on the the decay of $ \qnorm{
\chi_x  P_{B,\lambda,E,\omega} \chi_0}_2$.  In fact, it is known that for the Anderson-Landau Hamiltonian $ \qnorm{
\chi_x  P_{B,\lambda,E,\omega} \chi_0}_2$ exhibits sub-exponential in $x$ in the region of localization \cite[Theorem~3]{GKjsp},\cite[Eq.~(3.2)]{GKS}. We will prove that a sufficient condition for the  existence and quantization of the Hall conductance for ergodic Landau Hamiltonians is given by
\begin{equation}\label{newcond}
\sum_{x  \in \Z^2} |x|\,
\qnorm{
\chi_x  P_{B,\lambda,E,\omega} \chi_0}_2^\beta < \infty  \quad \text{for some} \quad \beta  \in (0,1).
\end{equation} 
We will also show that for an interval where the integrated density of states is continuous, we have constancy of the Hall conductance if \eq{newcond} holds with a locally bounded bound.
Note that \eq{newcond} implies \eq{K2}.

We consider the magnetic field-disorder-energy parameter space
\beq
\Xi=
\left\{(0,\infty)\times [0,\infty) \times \R\right\}\backslash 
\cup_{B \in (0,\infty)} \{(B,0)\times \Sigma_B \}  ;
\eeq
we exclude the Landau levels at no disorder. We give $\Xi$ the relative topology as a subset of $ \R^3$.  
Given a subset $\Phi \subset \Xi$,
we set
\begin{equation}\label{Phiup}
\Phi\up{B,\lambda} := \left\{E \in \R; \; (B,\lambda,E) \in \Phi\right\},
\end{equation}
with a similar definition for  $\Phi\up{B,E}$.

We now introduce  a (generalized)  ``localization length" $ L (B,\lambda,E)$,   based on \eq{newcond}. Given $\beta \in (0,1]$ and $(B,\lambda,E)\in \Xi$, we set
\beq \label{locL0}
L (B,\lambda,E) :=  \lim_{\beta \uparrow 1} L_\beta (B,\lambda,E),
\eeq 
where
 \beq \label{locLbeta0}
L_\beta (B,\lambda,E) :=\sum_{x  \in \Z^2} |x|\,
\qnorm{
\chi_x  P_{B,\lambda,E,\omega} \chi_0}_2^\beta \quad \text{for} \quad  \beta \in (0,1].
\eeq
We will also need `localization lengths" that take into account what happens near  $(B,\lambda,E)$.  We let
\begin{align}\label{locL0+}
L_+ (B,\lambda,E)& :=  \lim_{\beta \uparrow 1} L_{\beta +} (B,\lambda,E), \\
L _{+}\up{B,\lambda}(E)& := \lim_{\beta \uparrow 1} L _{\beta +}\up{B,\lambda}(E), \label{locL0+B}
\end{align}
where 
\begin{align}\label{locL+}
L _{\beta+}(B,\lambda,E)&:= \inf_{\substack{\Phi \ni (B,\lambda,E)\\
\Phi \subset \Xi\  \text{open} }}\
\sup_{(B^\prime,\lambda^\prime,E^\prime) \in \Phi} L_\beta (B^\prime,\lambda^\prime,E^\prime),\\ \label{locL+2}
L _{\beta +}\up{B,\lambda}(E)&:= \inf_{\substack{I \ni E\\
I \subset \R \  \text{open} }}\
\sup_{E^\prime \in I} 
L_\beta  (B,\lambda,E^\prime).
\end{align}
The justification of  the definitions \eq{locL0},  \eq{locL0+} and   \eq{locL0+B}, that is, the existence of the limits, is found in Subsection~\ref{subslocleng}. Note that $ L_1 (B,\lambda,E)<\infty$ implies \eq{K2}, and that  in general we only have   $ L_1 (B,\lambda,E) \le  L (B,\lambda,E)$.

We also define the subsets of $\Xi$ where these ``localization lengths" are finite:
\begin{equation}\begin{split}
\Xi_{\#}&:= \left\{ (B,\lambda,E) \in \Xi; \quad {\#}(B,\lambda,E) < \infty \right\},
\quad  \# = L, L_+,L_\beta, L _{\beta +} \,,\\
\Xi_{\#}\bp{B,\lambda} &: = \left\{ E \in \R; \quad {\#} \up{B,\lambda}(E) < \infty \right\},
\quad  \# =  L, L_+,L_\beta, L _{\beta +}\,.
\end{split} \end{equation}
$\Xi_{L_+}$ is,
by definition,  a relatively
open subset of $\Xi$, and $\Xi_{L_+}\bp{B,\lambda}$ is  an open subset of $\R$.  Note that   $\Xi_{\#}\bp{B,\lambda} \supset \Xi_{\#}\up{B,\lambda}$,  with $\Xi_{\#}\up{B,\lambda}$ defined as in \eqref{Phiup}, but we may not have equality. 

In Subsection~\ref{subslocleng} we show that the sets $\Xi_{\#} $ and $\Xi_{\#}\bp{B,\lambda}$,   $\# = L_\beta, L _{\beta +}$, are monotone increasing in $\beta \in (0,1]$, with
\beq\label{Xiunion}
\Xi_{L} = \bigcup_{\beta\in (0,1)} \Xi_{L_\beta},\quad  \Xi_{L_+} = \bigcup_{\beta\in (0,1)} \Xi_{L_\beta+}, \quad \Xi\bp{B,\lambda}_{L_+} = \bigcup_{\beta\in (0,1)} \Xi\bp{B,\lambda}_{L_\beta+}.
\eeq

Note that
\begin{equation}\label{NSq}
{\Xi}_{\text{NS}} :=
\left\{ (B,\lambda,E) \in {\Xi}; \; E \notin  \Sigma_{B,\lambda}\right\}\subset  \Xi_{L+}\, ;
\end{equation}
${\Xi}_{\text{NS}}$ being the  \emph{region of no spectrum}.

We are now ready to state our main results.

\begin{theorem}\label{sigmathm0}  Let  $H_{B,\lambda,\omega}$ be an ergodic  Landau Hamiltonian.  Then
 the Hall conductance $\sigma_H(B,\lambda,E)$ is defined and integer valued on  
$\Xi_{L}$.
In addition, $\sigma_H(B,\lambda,E)$ is
locally bounded on  $\Xi_{L_+}$ and on each $\Xi_{L_+}\bp{B,\lambda}$.
\end{theorem}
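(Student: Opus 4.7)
The strategy is threefold: (i) prove the trace-class condition \eq{welldef} from the summability $L_\beta<\infty$, (ii) identify $\sigma_H(B,\lambda,E)$ with a Fredholm index to obtain integer quantization, and (iii) make all the estimates uniform on the neighborhoods implicit in $L_{\beta+}$ to obtain local boundedness on $\Xi_{L_+}$ and $\Xi_{L_+}\bp{B,\lambda}$.

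For (i), I would insert resolutions $I = \sum_{x\in\Z^2}\chi_x$ between the factors in $\chi_0 P_{B,\lambda,E,\omega}[[P_{B,\lambda,E,\omega},X_1],[P_{B,\lambda,E,\omega},X_2]]\chi_0$. Writing $\chi_y X_j = y_j \chi_y + O(1)\chi_y$ uniformly on the unit cube, the double commutator reduces to a sum of chains $\chi_{x_0} P_{B,\lambda,E,\omega} \chi_{x_1} P_{B,\lambda,E,\omega} \chi_{x_2} P_{B,\lambda,E,\omega} \chi_{x_3}$ weighted by products of position differences $|x_i-x_{i+1}|$. Bounding the trace norm via H\"older for Schatten classes ($\norm{AB}_1 \le \norm{A}_2\norm{B}_2$), together with the standard uniform bound $\qnorm{\chi_xP_{B,\lambda,E,\omega}\chi_y}_2 \le C_0$ available for Fermi projections of Landau-type Hamiltonians (via diamagnetic/Combes--Thomas semigroup estimates), and the elementary interpolation $\qnorm{\chi_x P_{B,\lambda,E,\omega}\chi_y}_2 \le C_0^{1-\beta}\qnorm{\chi_x P_{B,\lambda,E,\omega}\chi_y}_2^\beta$, the expectation of the sum is dominated by a Young-type convolution of $|x|\qnorm{\chi_xP_{B,\lambda,E,\omega}\chi_0}_2^\beta$ against itself and a uniformly bounded kernel. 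Finiteness of $L_\beta$ together with \eq{Xiunion} give well-definedness on all of $\Xi_L$.

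For (ii), I would adapt the continuum version of the Aizenman--Graf and Avron--Seiler--Simon index argument. Introduce switch functions $\Lambda_i(x)$ equal to the indicator of $\{x_i>a_i\}$, and show that in the ergodic trace per unit area the replacement $X_i \to \Lambda_i$ leaves $\sigma_H$ invariant: using covariance \eq{covariance} and a $\Z^2$-periodization of the unit-cell average, the bounded difference $X_i-\Lambda_i$ (suitably regularized) contributes zero after telescoping the commutators. One then recognizes
\[-2\pi i\,\E\tr\{\chi_0 P_{B,\lambda,E,\omega}[[\Lambda_1,P_{B,\lambda,E,\omega}],[\Lambda_2,P_{B,\lambda,E,\omega}]]\chi_0\}\]
as the Connes--Chern pairing of $P_{B,\lambda,E,\omega}$ with the two switches, which coincides with the Fredholm index of $P_{B,\lambda,E,\omega} U_\Lambda P_{B,\lambda,E,\omega}$ on $\Ran P_{B,\lambda,E,\omega}$ for a suitable unitary $U_\Lambda$ built from $\Lambda_1,\Lambda_2$, and is therefore an integer. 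All trace-class inputs required at each step follow from \eq{newcond} via the estimates of (i).

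For (iii), every bound in (i)--(ii) is monotone in $L_\beta$ and expressed as sums of $\qnorm{\chi_xP_{B,\lambda,E,\omega}\chi_0}_2^\beta$, so on any relatively open $\Phi\subset\Xi_{L_+}$ with $\sup_\Phi L_\beta<\infty$ (which exists for some $\beta\in(0,1)$ by \eq{locL+}--\eq{Xiunion}) one obtains a uniform bound on $|\sigma_H|$; the one-parameter analogue on $\Xi_{L_+}\bp{B,\lambda}$ is identical. The main obstacle will be step (ii): transporting the discrete index argument of \cite{AG} to the continuum, where the switch functions and their commutators with $P_{B,\lambda,E,\omega}$ are unbounded operators on an infinite-dimensional fiber. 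It is precisely the \emph{strengthening} of \eq{K2} to \eq{newcond} with exponent $\beta<1$ that should supply the extra summability needed to absorb these continuum error terms without invoking the Dixmier-trace machinery that blocked BES's original continuum proof.
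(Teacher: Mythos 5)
Your steps (i) and (iii) track the paper's actual argument quite closely. The paper converts $L_\beta<\infty$ into finiteness of the auxiliary length $\ell_q(B,\lambda,E)=\sum_{x}\max\{|x|,1\}\,\qnorm{\chi_x P_{B,\lambda,E,\omega}\chi_0}_q$ with $q=2/\beta$ via the interpolation $\qnorm{S_\omega}_q\le\qnorm{S_\omega}_2^{2/q}$ (valid since $\qnorm{P_\omega}_\infty\le 1$), and then a three-factor H\"older estimate with $\tfrac1p+\tfrac2q=1$ gives \eq{welldef} together with the explicit bound $\abs{\sigma_H}\le 4\pi\,\kappa_p\,\{\ell_q\}^2$; local boundedness on $\Xi_{L_+}$ and $\Xi_{L_+}\bp{B,\lambda}$ is then immediate from the definitions \eq{locL+}--\eq{locL+2}. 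Your ``Young-type convolution'' of $|x|\qnorm{\chi_xP_\omega\chi_0}_2^\beta$ against itself is this same estimate in different clothing, and the replacement of $X_i$ by a discretized position operator (the paper's $\hat X_i$, with the bounded difference $X_i-\hat X_i$ shown to drop out of the trace by centrality and covariance) is also the paper's device.

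The genuine gap is in step (ii), which you flag as the main obstacle but do not carry out, and the route you sketch has a concrete error: with $\Lambda_i$ the indicator of $\{x_i>a_i\}$, the difference $X_i-\Lambda_i$ is \emph{not} bounded, so the telescoping argument you invoke to pass from the trace-per-unit-volume expression \eq{sigmaH} to a single honest trace of $P_\omega[[\Lambda_1,P_\omega],[\Lambda_2,P_\omega]]$ does not get off the ground as stated; that passage (from an ergodic average to one Fredholm index) is precisely where the work lies, and it is not a matter of covariance and periodization alone. The mechanism the paper uses is absent from your sketch: one introduces the unitary $\Gamma_a$ of multiplication by $\gamma_a(x)=(\hat x_1-a_1+i(\hat x_2-a_2))/\lvert\hat x-a\rvert$ for $a$ in the dual lattice $\Z^{2*}$, proves $\E\bigl(\lVert P_\omega-\Gamma_aP_\omega\Gamma_a^*\rVert_3\bigr)<\infty$ --- and this is exactly where the restriction to $q\in(2,3]$, i.e.\ $\beta\in[2/3,1)$, enters, since one needs $\sum_x\lvert x-a\rvert^{-q}<\infty$, which is the precise sense in which $\beta<1$ ``absorbs the continuum error terms'' --- then shows by ergodicity and $a$-independence that $\mathrm{Index}(P_\omega,\Gamma_aP_\omega\Gamma_a^*)=\tr(P_\omega-\Gamma_aP_\omega\Gamma_a^*)^3$ is an a.s.\ constant integer, and finally recovers $\sigma_H$ as the expectation of that index by summing Connes' formula over $a\in\Z^{2*}$, which reproduces the weight $-2\pi i(u_1v_2-u_2v_1)$ in \eq{uvsigma}. (Note also that the continuum difficulty is not that $[\Lambda_i,P_\omega]$ is unbounded --- it is bounded --- but that $\chi_xP_\omega\chi_y$ is an operator rather than a scalar, which is why all of the above must be run in Schatten norms.) Without this dual-lattice-averaged flux-insertion argument, or an equivalent device, the identification of \eq{uvsigma} with an integer is not established, so integrality on $\Xi_L$ remains unproved in your proposal.
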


We  set $\sigma_H\up{B,\lambda}(E):=\sigma_H(B,\lambda,E)$,  $N\up{B,\lambda}(E):=N(B,\lambda,E)$, and $L_+\up{B,\lambda}(E):=L_+(B,\lambda,E)$.

\begin{theorem}\label{sigmathmN0}  Let  $H_{B,\lambda,\omega}$ be an ergodic  Landau Hamiltonian.  If for a given $(B,\lambda) \in (0,\infty)\times [0,\infty)$  the integrated density of states $N\up{B,\lambda}(E)$ is continuous in $E$,  then
 the Hall conductance $\sigma_H\up{B,\lambda}(E)$ is  continuous on  $\Xi_{L_+}\bp{B,\lambda}$.  In particular,
$\sigma_H\up{B,\lambda}(E)$ is constant on each connected component 
of $\Xi_{L_+}\bp{B,\lambda}$.
\end{theorem}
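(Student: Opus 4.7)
By Theorem~\ref{sigmathm0}, $\sigma_H\up{B,\lambda}$ is integer-valued on $\Xi_L\bp{B,\lambda}\supset\Xi_{L_+}\bp{B,\lambda}$, and $\Xi_{L_+}\bp{B,\lambda}$ is open in $\R$; a continuous integer-valued function on an open subset of $\R$ is locally constant, so both claims of the theorem will follow once continuity is established. To prove continuity at a fixed $E_0\in\Xi_{L_+}\bp{B,\lambda}$, the plan is to use \eqref{Xiunion} to choose $\beta\in(0,1)$ and an open interval $I_0\ni E_0$ with $C_0:=\sup_{E\in I_0}L_\beta(B,\lambda,E)<\infty$, and then to show $\sigma_H(B,\lambda,E)\to\sigma_H(B,\lambda,E_0)$ as $E\to E_0$ inside $I_0$.

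Write $P_E:=P_{B,\lambda,E,\omega}$ and $\Delta_E:=P_E-P_{E_0}$, so that $\Delta_E$ is, up to sign, the spectral projection of $H_{B,\lambda,\omega}$ onto the interval between $E$ and $E_0$. The continuity hypothesis on the IDS, combined with \eqref{IDS11} and the magnetic covariance \eqref{covariance}, yields
$$\qnorm{\chi_x\Delta_E\chi_x}_1 = |N\up{B,\lambda}(E)-N\up{B,\lambda}(E_0)| \xrightarrow[E\to E_0]{} 0\quad\text{for every }x\in\Z^2.$$
The key algebraic move is to expand
$$P_E\big[[P_E,X_1],[P_E,X_2]\big]-P_{E_0}\big[[P_{E_0},X_1],[P_{E_0},X_2]\big]$$
as a finite telescoping sum whose summands each contain exactly one factor of $\Delta_E$, the remaining factors being of the form $P_E$, $P_{E_0}$, $[P_E,X_i]$ or $[P_{E_0},X_i]$.

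For each such summand I would estimate $|\E\{\tr(\chi_0(\cdots)\chi_0)\}|$ by inserting resolutions $I=\sum_{z\in\Z^2}\chi_z$ between factors, applying the commutator bound $\norm{\chi_x[P_\bullet,X_i]\chi_y}\leq C(|x-y|+1)\norm{\chi_x P_\bullet\chi_y}$, and reducing via \eqref{covariance} to multi-sums over $\Z^2$ of products of $\qnorm{\chi_z P_\bullet\chi_0}_2$ factors with a single $\qnorm{\chi_z\Delta_E\chi_0}_2$ factor. Since $\Delta_E^2=\Delta_E$, Cauchy--Schwarz gives $\qnorm{\chi_z\Delta_E\chi_0}_2\leq\qnorm{\chi_0\Delta_E\chi_0}_1^{1/2}$, supplying the vanishing factor, while the remaining geometric sums are dominated by $L_\beta(B,\lambda,E)+L_\beta(B,\lambda,E_0)\leq 2C_0$ once one uses $\qnorm{\chi_z P_\bullet\chi_0}_2\leq 1$ to trade the exponent $1$ for $\beta<1$.

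The hard part will be the combinatorial bookkeeping required to carry out the telescoping and the subsequent H\"older estimates in the quasi-norms $\qnorm{\cdot}_p$ uniformly over the many summands produced: each summand places $\Delta_E$ in a different slot among commutators and projections, and one must verify that the same schematic bound (small $\Delta_E$ factor times a geometric sum absorbable into $L_\beta\leq C_0$) applies to every one of them. In effect this is a refinement of the machinery developed for Theorem~\ref{sigmathm0}, where \eqref{newcond} with $\beta<1$ already controls $\qnorm{\chi_0 P[[P,X_1],[P,X_2]]\chi_0}_1$; here the same control must be produced with an extra $\Delta_E$ factor whose norm vanishes as $E\to E_0$, and the integrality of $\sigma_H$ then upgrades the resulting continuity into local constancy.
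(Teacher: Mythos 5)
Your overall architecture coincides with the paper's: telescope the difference of the two triple products into three terms, each containing a single factor $\Delta_E=P_E-P_{E_0}$; extract a vanishing factor from continuity of the IDS via $\qnorm{\chi_0\Delta_E\chi_u}_1\le\qnorm{\chi_0\Delta_E\chi_0}_1=|N\up{B,\lambda}(E)-N\up{B,\lambda}(E_0)|$ (Cauchy--Schwarz plus covariance, since $\pm\Delta_E$ is a projection); absorb the geometric sums into the locally bounded localization length; and let integrality from Theorem~\ref{sigmathm0} upgrade continuity to local constancy. This is exactly Lemma~\ref{lemcont} combined with the final display in the proof of Theorem~\ref{sigmathmN}.

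The gap is in the H\"older exponents. You measure the single $\Delta_E$ factor in $\qnorm{\cdot}_2$ and claim the remaining sums, built from $\qnorm{\chi_z P_\bullet\chi_0}_2$ factors, are dominated by $L_\beta\le C_0$ after ``trading the exponent $1$ for $\beta$.'' But the trace-norm H\"older inequality \eqref{Holdersineq} for a product of three kernels requires $\frac1{p_1}+\frac1{p_2}+\frac1{p_3}=1$; putting $\Delta_E$ in the $2$-slot forces the two projection factors into (say) $\qnorm{\cdot}_4$, and then $\qnorm{\chi_xP_\bullet\chi_0}_4\le\qnorm{\chi_xP_\bullet\chi_0}_2^{1/2}$ turns the geometric sums into $L_{1/2}$, not $L_\beta$. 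Membership in $\Xi_{L_+}\bp{B,\lambda}$ only gives $L_{\beta+}\up{B,\lambda}(E_0)<\infty$ for \emph{some} $\beta<1$, possibly $\beta>1/2$, and since the sets $\Xi_{L_\beta}$ are increasing in $\beta$ this does not yield $L_{1/2}<\infty$. (Also $\qnorm{\chi_zP_\bullet\chi_0}_2\le1$ is false in general --- it is only $\le N^{1/2}$, which is why the paper normalizes with $\widetilde L_\beta=N^{1-\beta}L_\beta$.) The paper's resolution is the opposite split: choose $q\in(2,2/\beta]$, put the two projections in $\qnorm{\cdot}_q$, so that $\qnorm{\chi_xP_\bullet\chi_0}_q\le\qnorm{\chi_xP_\bullet\chi_0}_2^{2/q}$ with $2/q\ge\beta$ bounds the sums by $\ell_q\le\kappa_q+L_{2/q}<\infty$, and put $\Delta_E$ in $\qnorm{\cdot}_p$ with $\frac1p=1-\frac2q$; the vanishing factor is then recovered from $\qnorm{\chi_0\Delta_E\chi_u}_p\le\qnorm{\chi_0\Delta_E\chi_u}_1^{1/p}$ (using $\|\Delta_E\|\le1$), giving the modulus of continuity $|N\up{B,\lambda}(E)-N\up{B,\lambda}(E_0)|^{1/p}$ as in \eqref{contestimate}. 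One further point to make explicit in your bookkeeping: after telescoping \eqref{uvsigma}, the coefficient $u_1v_2-u_2v_1$ must be rewritten as $(u_1-v_1)v_2-(u_2-v_2)v_1$ so that both weights $|u-v|$ and $|v|$ attach to the two projection factors and none to $\Delta_E$, whose kernel carries no usable decay.
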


If  we have  
\begin{equation} \label{gapcond}
\lambda {(M_1 + M_2)}<  {2B},
\end{equation}
it follows from \eq{splandau} that the bands $\mathcal{B}_n(B,\lambda)$ are
disjoint,  and the spectral gaps remain open. We will refer to \eqref{gapcond} as the
\emph{disjoint bands condition}; it clearly holds
under  weak disorder and/or strong magnetic
field.

\begin{corollary} \label{maincorErg0} Let  $H_{B,\lambda,\omega}$ be an ergodic  Landau Hamiltonian.  Suppose  the integrated density of states $N\up{B,\lambda}(E)$ is continuous in $E$ for  all $(B,\lambda) \in (0,\infty)\times [0,\infty)$  satisfying  
the disjoint bands condition \eq{gapcond}. Then  for  all such  $(B,\lambda)$  the  ``localization length"
$L_+\up{B,\lambda}(E)$  diverges near each Landau level:
for
each $n=1,2,\ldots$ there exists an energy 
$E_n(B,\lambda) \in \mathcal{B}_n(B,\lambda) $ such that
\begin{equation}\label{divergent0}
L_{+}\bp{B,\lambda}(E_n(B,\lambda)) = \infty.
\end{equation}
\end{corollary}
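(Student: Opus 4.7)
The plan is to follow the contradiction strategy of Bellissard--van Elst--Schulz-Baldes: the Hall conductance must jump by $1$ across each Landau band, but Theorem~\ref{sigmathmN0} forces $\sigma_H\up{B,\lambda}$ to be constant on connected components of $\Xi_{L_+}\bp{B,\lambda}$. Hence the band cannot lie entirely inside this set, which is exactly the desired conclusion.

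First I would pin down the values of $\sigma_H\up{B,\lambda}$ on the two spectral gaps adjacent to $\mathcal{B}_n(B,\lambda)$. Under \eq{gapcond} and \eq{splandau}, the intervals $I_n(B,\lambda):=(B_{n-1}+\lambda M_2,\,B_n-\lambda M_1)$ (with $B_0=-\infty$) lie in the resolvent set of $H_{B,\lambda,\omega}$ almost surely, so $I_n(B,\lambda)\subset \Xi_{\mathrm{NS}}\bp{B,\lambda}\subset \Xi_{L_+}\bp{B,\lambda}$ by \eq{NSq}. For the free Landau Hamiltonian the classical computation (e.g.\ \cite{ASS,BES}) gives $\sigma_H(B,0,E)=n-1$ for $E\in(B_{n-1},B_n)$. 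To promote this to positive disorder, I would fix $E\in I_n(B,\lambda)$ and consider the path $\lambda'\in[0,\lambda]$: by \eq{gapcond}, $E$ stays in the resolvent set of $H_{B,\lambda',\omega}$ throughout, and the Avron--Seiler--Simon/Elgart--Schlein constancy-in-disorder argument from \cite{ASS,ES,GKS} yields $\sigma_H(B,\lambda',E)=\sigma_H(B,0,E)=n-1$. The same argument applied to $I_{n+1}(B,\lambda)$ gives $\sigma_H\up{B,\lambda}\equiv n$ on $I_{n+1}(B,\lambda)$.

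Next, assume for contradiction that $L_+\up{B,\lambda}(E)<\infty$ for every $E\in\mathcal{B}_n(B,\lambda)$. Then, combined with the first step, the open interval $J:=I_n(B,\lambda)\cup \mathcal{B}_n(B,\lambda)\cup I_{n+1}(B,\lambda)$ is connected and contained in $\Xi_{L_+}\bp{B,\lambda}$. Since $N\up{B,\lambda}$ is continuous by hypothesis, Theorem~\ref{sigmathmN0} forces $\sigma_H\up{B,\lambda}$ to be constant on $J$, contradicting the two different values $n-1$ and $n$ computed above. Therefore some $E_n(B,\lambda)\in\mathcal{B}_n(B,\lambda)$ must satisfy \eq{divergent0}.

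The main obstacle is the first step. Theorems \ref{sigmathm0}--\ref{sigmathmN0} by themselves only give integer-valuedness and continuity of $\sigma_H\up{B,\lambda}$ on $\Xi_{L_+}\bp{B,\lambda}$; they do not identify the actual integer value on any given connected component. Transporting the known values from $\lambda=0$ to $\lambda>0$ requires a separate constancy-in-$\lambda$ argument on open spectral gaps, which under \eq{gapcond} is available from \cite{ASS,ES,GKS}. Once this input is in hand, the rest of the argument is a clean topological application of Theorem~\ref{sigmathmN0}.
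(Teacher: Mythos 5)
Your proposal follows the same strategy as the paper's proof: fix the integer values of the Hall conductance on the two open spectral gaps adjacent to $\mathcal{B}_n(B,\lambda)$ by transporting the free values along $\lambda' \in [0,\lambda]$, and then use Theorem~\ref{sigmathmN0} (constancy of $\sigma_H\up{B,\lambda}$ on connected components of $\Xi_{L_+}\bp{B,\lambda}$) to conclude that the band cannot lie entirely in that set. The contradiction step is exactly the paper's.

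The one point to tighten is the constancy-in-$\lambda$ step on the open gaps, which you correctly single out as the main obstacle but then outsource to \cite{ASS,ES,GKS}. Those references do not quite cover the stated generality: \cite{ES} assumes smooth potentials and \cite{GKS} treats the Anderson--Landau Hamiltonian, whereas the corollary concerns an arbitrary bounded ergodic potential (e.g., quasiperiodic). The paper closes this step internally: for $E$ in a fixed gap and $\lambda$ ranging over an interval on which \eq{gapcond} keeps $E$ at uniform distance from the spectrum, $P_{B,\lambda,E,\omega}$ is a contour integral of the resolvent, and the resolvent identity gives
\begin{equation*}
\sup_{u\in\Z^2}\qnorm{\chi_0\left(P_{B,\xi,E,\omega}-P_{B,\lambda,E,\omega}\right)\chi_u}_1 \le C\,\abs{\xi-\lambda}.
\end{equation*}
Feeding this into Lemma~\ref{lemcont} shows that $\lambda\mapsto\sigma_H(B,\lambda,E)$ is continuous on the gap, and since it is integer valued by Theorem~\ref{sigmathm0} it is constant, equal to its value at $\lambda=0$. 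With this substitution in place of the external citations, your argument is complete and coincides with the paper's proof.
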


For the Anderson-Landau Hamiltonian  $H\up{A}_{B,\lambda,\omega}$ we can say more. Following \cite{GKduke,GKjsp,GKS} we introduce the region of dynamical localization. (It  was called the strong insulator region in \cite{GKduke} and the  region of complete localization in \cite{GKjsp}.) This can be done in many equivalent ways, as shown in \cite{GKduke,GKjsp}, but for the purposes of this paper  we define it  by the decay of the Fermi projection, using \cite[Theorem~3 and following comments]{GKjsp}: The region of dynamical localization $\Xi_{\mathrm{DL}}$  consists of those $(B,\lambda,E) \in \Xi$ 
for which there exists an open interval $I \ni E$ 
such that
\begin{equation} \label{fermidecay}
\sup_{E^\prime \in I} \qnorm{\chi_{x}
P_{B,\lambda,E^\prime,\omega}
\chi_{0}}_2  
\leq  C_{I,B,\lambda} \pa{1 + \abs{x}}^{-\eta_1} \quad \text{for all} \quad x \in \Z^2,
\end{equation} 
where   $\eta_1>0$  is  a  fixed   number that can be calculated from the proof of \cite[Theorem~3]{GKjsp}.  (The condition stated  in \cite[Theorem~3]{GKjsp} is of the form
\begin{equation} \label{fermidecay2}
\mathbb{E}\left\{\sup_{E^\prime \in I} \left\|\chi_{x}
P_{B,\lambda,E^\prime,\omega}
\chi_{0}\right\|_2^2\right\}
\leq   C_{I,B,\lambda} \pa{1 + \abs{x}}^{-\eta_1} \quad \text{for all} \quad x \in \Z^2,
\end{equation} 
but an inspection of the proof shows that it can be replaced by \eq{fermidecay}.)
Its complement in $\Xi$ will be called   the region of dynamical delocalization: $\Xi_{\mathrm{DD}}:= \Xi \setminus \Xi_{\mathrm{DL}}$. (See \cite{GKS} for background, definitions, and discussion.)
It follows that  that there exists $\beta_1 \in (0,1)$ such that
\beq \label{XiDLXiLB}
\Xi_{\mathrm{DL}}\up{B,\lambda} =  \Xi_{L_{\beta_1 +}}\bp{B,\lambda} \subset     \Xi_{L_+}\bp{B,\lambda}. 
\eeq
Moreover, the integrated density of states $N(B,\lambda,E)$ of the  the Anderson-Landau Hamiltonian is jointly H\"older-continuous  in $(B,E)$ for $\lambda >0$ \cite{CHKR}. ( $N(B,\lambda,E)$ is actually Lipshitz continuous in $E$ for fixed $(B,\lambda)$ \cite{CHK2}.)
Thus \eq{divergent0} implies \cite[Eq.~(2.20)]{GKS}, that is,
\begin{equation}\label{notempty}
\Xi_{\text{DD}}\up{B,\lambda} \cap \mathcal{B}_n(B,\lambda)\not=
 \emptyset,
\end{equation}
 and hence Corollary~\ref{maincorErg0}
provides a new proof for  \cite[Theorems~2.1 and 2.2]{GKS}.

We actually have more. Using the characterization of  $\Xi_{\mathrm{DL}}$ as the region of applicability of the multiscale analysis \cite{GKduke}, we can  get the constant $C_{I,B,\lambda}$ in \eq{fermidecay} locally bounded in $B$ and $\lambda$, obtaining
\beq \label{XiDLXiL}
\Xi_{\mathrm{DL}} =  \Xi_{L_{\beta_1 +}} \subset     \Xi_{L_+}. 
\eeq

For the Anderson-Landau Hamiltonian we have a slightly stronger version of Theorems~\ref{sigmathm0}  and \ref{sigmathmN0}.

\begin{theorem}\label{sigmathm20}  Let  $H\up{A}_{B,\lambda,\omega}$ be the Anderson-Landau Hamiltonian.  Then
 the Hall conductance $\sigma_H(B,\lambda,E)$ is defined and integer valued on
$\Xi_{L}$,  
and H\"older-continuous on $\Xi_{L_+}$.
In particular,
$\sigma_H(B,\lambda,E)$ is constant on each connected component 
of $\Xi_{L_+}$.

It follows that  on $\Xi_{\mathrm{DL}} $, the region of dynamical localization , the Hall conductance $\sigma_H(B,\lambda,E)$ is defined, integer valued,
and  constant on each connected component .
\end{theorem}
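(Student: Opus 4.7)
The plan has two parts: first, reduce the statement to a joint H\"older continuity claim on $\Xi_{L_+}$ using Theorem~\ref{sigmathm0}; second, upgrade the continuity argument behind Theorem~\ref{sigmathmN0} to H\"older continuity using the strong regularity of the integrated density of states available for the Anderson--Landau Hamiltonian.

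Integer-valuedness of $\sigma_H$ on $\Xi_L$ is immediate from Theorem~\ref{sigmathm0} applied to $H_{B,\lambda,\omega}\up{A}$. Since $\Xi_{L_+}\subset\Xi_L$ by the definitions, once $\sigma_H$ is shown to be continuous on $\Xi_{L_+}$, its integer values force it to be locally constant, hence constant on each connected component. For the final claim on $\Xi_{\mathrm{DL}}$, the inclusion $\Xi_{\mathrm{DL}}\subset\Xi_{L_+}$ from \eq{XiDLXiL} transfers the first two properties; since each connected component of $\Xi_{\mathrm{DL}}$ sits inside a single connected component of $\Xi_{L_+}$, constancy is preserved. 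Thus the task reduces to proving joint H\"older continuity of $\sigma_H$ on $\Xi_{L_+}$.

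To prove joint H\"older continuity, I would mirror the proof scheme behind Theorem~\ref{sigmathmN0}, which writes $\sigma_H(B,\lambda,E)-\sigma_H(B',\lambda',E')$ as a sum of traces controlled by the trace-norm difference of two Fermi projections times constants involving the ``localization length''. The required local uniform bound is exactly the one encoded in $L_{\beta+}$ via the inf--sup in \eq{locL+}. For a general ergodic Landau Hamiltonian, the only modulus-of-continuity input is continuity of $N(B,\lambda,\cdot)$. For $H_{B,\lambda,\omega}\up{A}$, however, $N$ is Lipschitz in $E$ by \cite{CHK2} and jointly H\"older in $(B,E)$ by \cite{CHKR}, which upgrades continuity to H\"older continuity in the $E$- and $B$-directions. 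For the $\lambda$-direction, since $\partial_\lambda H_{B,\lambda,\omega}\up{A}=V_\omega\up{A}$ is a bounded ergodic perturbation, a Duhamel/Birman--Solomyak trace-norm estimate combined with the Lipschitz dependence of $N$ on $E$ yields the corresponding H\"older bound.

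The main obstacle will be controlling the $B$-dependence. Unlike $\lambda$ and $E$, a change of $B$ modifies the vector potential $\A$, the form domain of $H_{B,\lambda,\omega}\up{A}$, and the magnetic translations $U_a(B)$ that implement the covariance used to define the trace per unit volume and the Hall conductance itself. Keeping the trace-norm perturbation estimates and the localization-length factors uniformly bounded as one moves in a small neighborhood of $B$ will require careful gauge control, together with the locally uniform bound built into the definition of $L_+$.
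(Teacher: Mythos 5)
Your proposal takes essentially the same route as the paper: the reduction (integer values from Theorem~\ref{sigmathm0}, transfer to $\Xi_{\mathrm{DL}}$ via \eqref{XiDLXiL}, and H\"older continuity via the perturbation bound of Lemma~\ref{lemcont}) is exactly the paper's argument, with the same inputs (\cite{CHKR} for joint $(B,E)$-H\"older continuity of $N$ and the locally uniform bound built into $L_{+}$). The one step you only sketch --- H\"older continuity of $\sup_{u}\qnorm{\chi_0\left(P'_\omega-P_\omega\right)\chi_u}_1$ in the $\lambda$- and $B$-directions --- is precisely the paper's Lemma~\ref{oldsublemma}: replace the sharp Fermi projection by a smooth $g$ that interpolates over a spectral window of width $\gamma$ below $E$, control the window terms by the H\"older continuity of $N$ in $E$, control $g(H_{B,\lambda',\omega})-g(H_{B,\lambda'',\omega})$ by the Helffer--Sj\"ostrand formula with resolvent trace estimates (and $g(H_{B',\lambda,\omega})-g(H_{B'',\lambda,\omega})$ by \cite[Proposition~5.1]{CHKR}, which is what disposes of the gauge/domain issue you flag as the main obstacle), and then optimize over $\gamma$.
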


The results in this article for the Anderson-Landau Hamiltonian go beyond   \cite[Theorems~2.1 and 2.2]{GKS};  they show the existence of a dynamical metal-insulator transition, in the sense of \cite{GKduke}, inside the Landau bands  of  the Anderson-Landau Hamiltonian
in cases when the disjoint bands condition does not hold and  the spectral gaps are closed.
We give a simple example in the next theorem.

As shown in  \cite{GKduke}, the region of dynamical localization  $\Xi^{(B,\lambda)}_{\mathrm{DL}}$ can be characterized as follows. To measure `dynamical localization' we introduce
\begin{equation}\label{moment}
M_{B,\lambda,\omega}(p,\mathcal{X},t)  = 
\left\|  {\langle} x {\rangle}^{\frac p 2}
{\mathrm{e}^{-i tH_{B,\lambda,\omega} }}
\mathcal{X}(H_{B,\lambda,\omega}) \chi_0 
\right\|_2^2 ,
\end{equation}
the
random moment of order
$p\ge 0$ at time $t$ for the time evolution  in the Hilbert-Schmidt norm,  
initially spatially localized in the square of side one around the origin
(with characteristic function $\chi_0$), and ``localized" 
in energy by the function $\mathcal{X}\in C^\infty_{c,+} (\mathbb{R})$. (Notation: $\langle x\rangle:= \sqrt{1 +\abs{x}^2}$.)
Its time averaged expectation is given by
\begin{equation}   \label{tam}
\mathcal{M}_{B,\lambda}( p ,\mathcal{X}, T )   =  
\frac1{T} \int_0^{\infty}
\mathbb{E}\left\{ M_{B,\lambda,\omega}(p,\mathcal{X},t)\right\}
{\mathrm{e}^{-\frac{t}{T}}} \,{\rm d}t .
\end{equation}
It is proven in  \cite{GKduke} that $ \Xi^{(B,\lambda)}_{\mathrm{DL}}$ is the set of energies $E$ for which there exists 
$\mathcal{X}\in C^\infty_{c,+} (\mathbb{R})$ with $\mathcal{X}\equiv 1$ on some open interval containing $E$, $\alpha \ge 0$, and $p > 4\alpha + 22$, such that
\beq \label{slowdec}
\liminf_{T \to \infty} \frac 1 {T^\alpha}\mathcal{M}_{B,\lambda}( p ,\mathcal{X}, T ) < \infty,
\eeq
in which case it is also shown in \cite{GKduke} that  \eq{slowdec} holds for any $p\ge 0$ with $\alpha=0$.

\begin{theorem}\label{closedgaps0} Let  $H\up{A}_{B,\lambda,\omega}$ be an Anderson-Landau Hamiltonian as in \eq{ALH}-\eq{potVL}, where  the   common probability
distribution $\mu$ has  density
\beq \label{rhoeta}
\rho(s)= \tfrac  {\eta +1}  2  \pa{1 - \abs{s}}^\eta  \chi_{[-1.1]}(s), \quad \eta >0, 
\eeq
and the single-site potential $u$   satisfies 
\beq\label{covcond0}
0< U_- \le U(x):= \sum_{i \in\mathbb{Z}^2}  u(x-i) \le  1,  \quad\text{with $U_-$  a constant}.
\eeq
Let $B>0$.  Then:\\
\noindent{(i)}  The spectral gaps are all closed for $\lambda \ge  \frac 1 {U_-} B$:
 \beq
\Sigma_{B,\lambda}= [ E_{0}(B,\lambda), \infty) \quad \text{for} \quad \lambda \ge  \tfrac 1 {U_-} B,
\eeq
where  $ E_{0}(B,\lambda):=\inf \Sigma_{B,\lambda} \in (B-\lambda,B- \lambda U_- )$.\\

\noindent{(ii)}   Let  $\widehat{\lambda} >  \frac 1 {U_-} B  $,  and $ \delta \in (0,B)$.  Set
\beq \begin{split}\label{JnB}
J_n(B)&:=\pa{{B}_n + \delta, {B}_{n+1} - \delta},  \quad n\in \N ,\\
J_0(B)&:=\pa{ -\infty , B -  \delta} \subset (-\infty, B).
\end{split}\eeq
Then for all $N \in \N$ there exists
${\eta}_N>0$ such that,  taking  $\eta \ge {\eta}_N$, for all $\lambda \in [0, \widehat{\lambda}]$  we have 
\beq\label{Jloc}
J_n(B)  \subset \Xi_{\mathrm{DL}}\up{B,\lambda}  \quad \text{for all} \quad  \lambda \in [0, \widehat{\lambda}], \quad n=0,1,2,\ldots,N  . 
\eeq
Moreover, 
for all 
$ \lambda \in [0, \widehat{\lambda}]$ there exists 
\begin{align}\label{EnB}
E_n(B,\lambda) \in
\pb{B_n -\delta, {B}_n +\delta }\cap \Xi_{\mathrm{DD}}\up{B,\lambda}\quad \text{for} \quad n=1,2,\ldots,N. 
\end{align}
In particular, for    $n=1,2,\ldots,N$ we have   $L_{+}\bp{B,\lambda}(E_n(B,\lambda)) = \infty$, and
for every  
$\mathcal{X}\in C^\infty_{c,+} (\mathbb{R})$  with
$\mathcal{X} \equiv 1$  on some open interval  $J\ni E_n(B,\lambda) $
and  $p>24$,  we have 
\begin{equation}\label{momentgrowth9}
\mathcal{M}_{B,\lambda}( p ,\mathcal{X}, T )   \ge \
C_{p,\mathcal{X}} \, T^{\frac p4 - 6}
\end{equation}
for all  $T \ge 0$  with  $  C_{p,\mathcal{X}} > 0 $.
\end{theorem}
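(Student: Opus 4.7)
The argument splits into Part (i), a direct spectral computation, and Part (ii), which combines uniform-in-$\lambda$ dynamical localization in each gap $J_n(B)$ with a delocalization conclusion drawn from Theorem~\ref{sigmathm20}, followed by the consequent moment bound. For Part (i), the lower bound $E_0(B,\lambda)\ge B-\lambda$ is immediate from $H_B\ge B$ and $\abs{V_\omega\up{A}}\le 1$. For the remaining claims, every constant configuration $\omega_i\equiv c$, $c\in[-1,1]$, lies in $\supp\P$, so $\sigma(H_B+\lambda c\,U)\subset \Sigma_{B,\lambda}$; testing the periodic operator $H_B+\lambda c\,U$ against $n$-th Landau coherent states, together with $U\ge U_-$, yields $[B_n-\lambda U_-,B_n+\lambda U_-]\subset \Sigma_{B,\lambda}$ for every $n$, so consecutive Landau clusters overlap once $\lambda U_-\ge B$. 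The strict inclusion $E_0(B,\lambda)\in(B-\lambda,B-\lambda U_-)$ follows by testing $H_B-\lambda U$ against a coherent state in $\Ran P_1$ concentrated at an essential minimum of $U$.

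For the localization part of (ii), we establish dynamical localization in each $J_n(B)$, $n=0,\dots,N$, uniformly in $\lambda\in[0,\widehat\lambda]$, via the bootstrap multiscale analysis of \cite{GKduke}. The Wegner estimate is supplied by the bounded density $\rho$; the initial length scale estimate---the main technical obstacle---requires
\[
\P\bigl(\sigma(H^{\Lambda_{L_0}}_{B,\lambda,\omega})\cap J_n(B)\ne\emptyset\bigr)\ \text{small uniformly in}\ \lambda\in[0,\widehat\lambda]
\]
for a suitable initial scale $L_0$. This is where $\eta\ge\eta_N$ enters: since $\P(\abs{\omega_i}>\epsilon)=(1-\epsilon)^{\eta+1}$, large $\eta$ concentrates $\omega_i$ near $0$ exponentially, so $\lambda V_\omega\up{A}$ is with high probability uniformly small on $\Lambda_{L_0}$, and combined with the deterministic gap $\dist(J_n(B),\Sigma_B)\ge\delta$ a Lifshitz-tail-type bound gives the initial length scale estimate. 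The bootstrap then yields \eq{fermidecay} with $C_{I,B,\lambda}$ locally bounded in $(B,\lambda)$, hence \eq{Jloc} and $[0,\widehat\lambda]\times J_n(B)\subset \Xi_{L_+}$ via \eq{XiDLXiL}.

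For the delocalization part, fix $n\in\{1,\dots,N\}$ and $\lambda\in[0,\widehat\lambda]$, and suppose for contradiction that $[B_n-\delta,B_n+\delta]\subset \Xi_{\mathrm{DL}}\up{B,\lambda}$. Then $I_n:=(B_{n-1}+\delta,B_{n+1}-\delta)\subset \Xi_{L_+}\bp{B,\lambda}$, so by Theorem~\ref{sigmathm20} $\sigma_H\up{B,\lambda}$ is integer-valued and constant on $I_n$. On the other hand, for each $E\in J_{n-1}(B)$ the line segment $\set{(B,\lambda',E):\lambda'\in[0,\lambda]}$ lies in $\Xi_{L_+}$ by the uniform localization just established, so joint H\"older continuity of $\sigma_H$ on $\Xi_{L_+}$ (Theorem~\ref{sigmathm20}) combined with integer-valuedness forces $\sigma_H\up{B,\lambda}(E)=\sigma_H\up{B,0}(E)=n-1$; likewise $\sigma_H\up{B,\lambda}(E)=n$ for $E\in J_n(B)$. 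These two distinct integer values on subintervals of $I_n$ contradict constancy on $I_n$, producing $E_n(B,\lambda)\in[B_n-\delta,B_n+\delta]\cap \Xi_{\mathrm{DD}}\up{B,\lambda}$. The moment lower bound \eq{momentgrowth9} is then the contrapositive of \eq{slowdec}: failure for some admissible $\mathcal{X}$ and $p>24$ with $\alpha=p/4-6$ (so that $p>4\alpha+22$) would place $E_n(B,\lambda)$ in $\Xi_{\mathrm{DL}}\up{B,\lambda}$, contradicting $E_n(B,\lambda)\in\Xi_{\mathrm{DD}}\up{B,\lambda}$.
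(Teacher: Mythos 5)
Your overall architecture matches the paper's: part (i) by showing each Landau band stretches at least $\lambda U_-$ to each side of $B_n$, and part (ii) by uniform-in-$\lambda$ localization on the gaps $J_n(B)$ followed by the Hall-conductance contradiction (constancy in $\lambda$ on $[0,\widehat\lambda]\times J_n(B)$ forces the values $n-1$ and $n$ on $J_{n-1}$ and $J_n$, which is incompatible with constancy across $[B_n-\delta,B_n+\delta]$ if that interval were localized), and the moment bound as the contrapositive of \eqref{slowdec}. For part (i) you take a different but workable route: the paper (Appendix~\ref{apSp}) gets $E_+(n,B,\lambda)\ge B_n+\lambda M_2U_-$ by writing $H_{B,\lambda,M_2}=(H_B+\lambda M_2U_-)+\lambda M_2(U-U_-)$ and applying Kato's band-perturbation theorem, together with a Birman--Schwinger/unique-continuation argument to seed the bands at small $\lambda$ and monotonicity of $\Sigma_{B,\lambda}$ in $\lambda$; your coherent-state computation on the periodic operators $H_B+\lambda cU$ can be made to work but needs the band separation to localize the Rayleigh quotient in the $n$-th cluster and continuity of the band edge in $c$ to fill in the whole interval $[B_n,B_n+\lambda U_-]$ -- details you gloss over.

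The one genuine gap is in the localization step. You write that the Wegner estimate ``is supplied by the bounded density $\rho$'' and that large $\eta$ makes the initial length scale estimate hold; but here $\norm{\rho}_\infty=\tfrac{\eta+1}{2}$, so the Wegner constant -- and hence the smallness threshold that the finite-volume resolvent bound must beat -- itself degrades as $\eta$ grows. As written, your argument is circular-adjacent: you increase $\eta$ to improve the probability that $\lambda V_\omega$ is small on $\Lambda_{L_0}$, but increasing $\eta$ simultaneously tightens the condition you must verify. The paper resolves this explicitly (the inequality $L^9\tfrac{C_1}{\delta}\e^{-C_2\delta L}<\tfrac{C_3}{\eta+1}$ versus the probability $1-L^2(1-\tfrac{\delta}{2\widehat\lambda})^\eta$): the threshold degrades only linearly in $\eta$ while the probability of the good event improves exponentially, so one chooses $L_0$ and then $\eta(n,L_0)$ growing exponentially in $L_0$, and checks $L_0^2(1-\tfrac{\delta}{2\widehat\lambda})^{\eta(n,L_0)}\to 0$. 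Your proof needs this competition made explicit to justify the existence of $\eta_N$. Also, no Lifshitz-tail machinery is needed or used: the initial estimate is a union bound on the event $\set{\lambda\abs{\omega_i}\le\delta/2 \text{ on } \widetilde\Lambda}$ combined with the deterministic spectral localization \eqref{splandau} for the finite-volume operator and a Combes--Thomas bound, fed into the finite-volume criterion of \cite{GKgafa}.
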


Note that for all $\lambda \in [\frac 1 {U_-} B, \widehat{\lambda}]$ all the spectral gaps are closed, but we still  show existence  of  at least
one dynamical mobility edge near the first $N$ Landau levels,
namely a boundary  point between the regions of dynamical localization and dynamical delocalization.

Another application of the results in this paper can be found in a companion article \cite{GKM}, which considers an Anderson-Landau Hamiltonian  $H\up{A}_{B,\lambda,\omega}$ 
as in \eq{ALH}-\eq{potVL}, but with a   common probability
distribution $\mu$ which has    a bounded density $\rho$  with  $\supp \rho= \R$ and fast decay: 
\beq \label{rhodecay}
\rho(\omega) \le \rho_0 \exp(-|\omega|^\alpha) \quad \text{for some $\rho_0\in(0,+\infty)$ and $\alpha>0$}.
\eeq
(In particular, $\mu$ may have a Gaussian distribution.) The random potential $V_\omega$ is now an unbounded ergodic potential, but   $H_{B,\lambda,\omega}$ is  essentially self-adjoint on $\mathcal{C}_c^\infty(\R^d)$ with probability one,
and  we have  (see \cite{BCH})
\beq\label{BCHsp}
\Sigma_{B,\lambda}=\R  \quad \text{for all} \quad \lambda>0,
\eeq
where $\Sigma_{B.\lambda}$ is the  spectrum of $H_{B,\lambda,\omega}$ with probability one.

It is shown in \cite{GKM} that the main results of this paper,
and in particular Theorems~\ref{sigmathm0},  \ref{sigmathmN0} and \ref{sigmathm20}, as well as the relevant results from \cite{GKduke},
hold for these Anderson-Landau Hamiltonians with $\supp \mu=\R$ (and hence unbounded potentials). Note that \eq{XiDLXiL} is still valid, although its proof  must be modified, taking into account that the Wegner estimate  can be controled as $\lambda \to 0$ for intervals that do not contain Landau levels. The fact that the Landau gaps are immediately filled up as soon as the disorder is turned on implies that  the approach used in \cite{GKS} and in Corollary~\ref{maincorErg0}  is  not applicable. Proving the existence of a dynamical transition in that case requires the full set of conclusions of Theorem~\ref{sigmathm20}, namely that the Hall conductance is integer valued and continuous on connected components of $\Xi_{L_+}$, as used in the proof of Theorem~\ref{closedgaps0}.  The continuity of the Hall conductance for arbitrary small $\lambda$ (in order to let $\lambda$ go to zero) given by Theorem~\ref{sigmathm20} is required.
A result similar to Theorem~\ref{closedgaps0}(ii) is proved in \cite{GKM}: given $n\in \N$, there is   at least
one dynamical mobility edge near the first $N$ Landau levels for small $\lambda$.  It can be stated as follows.

\begin{theorem}[\cite{GKM}]\label{thmlimit2}   Let  $H_{B,\lambda,\omega}$ be a  random Landau Hamiltonian   as in \eq{ALH}-\eq{potVL}, but with a   common probability
distribution $\mu$ which has    a bounded density $\rho$  with  $\supp \rho= \R$
and  \eq{rhodecay}, so \eq{BCHsp} holds for all $\lambda>0$.
Let $B>0$.  Then,
for
each $n\in \N$, there exists $\lambda(n)>0$, such that for $\lambda \in (0,\lambda(n)]$  there exist 
$E_{n}\up{\pm} (B,\lambda)$, with $B_n -B < E_{n}\up{-}(B,\lambda)<B_n <E_{n}\up{+}(B,\lambda)<B_n +B$,
\begin{align} 
\left \lvert  {E}_{n}\up{\pm}(B,\lambda)  - B_n  \right \rvert  
\le  K_n(B)\lambda
\abs{\log \lambda}^{\frac 1 \alpha} \to  0 \quad 
\text{as $\lambda \to 0$},
\label{lambda20}
\end{align}
with a finite  constant  $ K_n(B) $, and 
  \beq
  \pa{E_{n}\up{+}(B,\lambda),(E_{n+1}\up{-}(B,\lambda)}\subset  \Xi_{\mathrm{DL}}\up{B,\lambda}.
\eeq
We also have   $\pa{-\infty, E_{1}\up{-}(B,\lambda)}\subset  \Xi_{\mathrm{DL}}\up{B,\lambda}$ for  $\lambda \in (0,\lambda(0)]$, $\lambda(0)>0$.

Moreover, for $\lambda \in (0,\min \set{\lambda(n-1),\lambda(n)})$ there exists
\begin{align}\label{EnB234}
E_n(B,\lambda) \in
\pb{E_{n}\up{-}(B,\lambda), E_{n}\up{+}(B,\lambda)}\cap \Xi_{\mathrm{DD}}\up{B,\lambda},
\end{align}
 and hence  \eq{momentgrowth9} holds
for every  
$\mathcal{X}\in C^\infty_{c,+} (\mathbb{R})$  with
$\mathcal{X} \equiv 1$  on some open interval  $J\ni E_n(B,\lambda) $
and  $p>24$.
\end{theorem}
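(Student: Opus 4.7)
The plan is to mimic the two-step strategy behind Theorem~\ref{closedgaps0}: first establish dynamical localization on intervals of the form $(E_n^+(B,\lambda), E_{n+1}^-(B,\lambda))$ that leave only shrinking windows around each Landau level $B_n$, and then use the integer valuedness and continuity of the Hall conductance on $\Xi_{L_+}$ provided by Theorem~\ref{sigmathm20} to force a delocalized energy in each window. The main adaptation required is that, because $\Sigma_{B,\lambda}=\R$ for every $\lambda>0$, we must accept some spectrum inside the ``gaps'' and use the rapid decay \eqref{rhodecay} of $\rho$ to show that this spectrum is thin enough to support a multiscale analysis.

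For part (i), I would run the bootstrap multiscale analysis of \cite{GKduke} on an energy interval contained in $(B_n+\delta,B_{n+1}-\delta)$ for $\lambda$ small. The single-site bump is still covering and the initial length-scale estimate follows from a Combes--Thomas-type argument combined with a large-deviation bound: in order to produce an eigenvalue at distance less than $\lambda^{-1}\mathrm{dist}(E,\{B_n\})$ from a Landau level one needs a site value $\omega_i$ of comparable size, and by \eqref{rhodecay} this has probability $\lesssim \exp(-c(\mathrm{dist}(E,\{B_n\})/\lambda)^\alpha)$. Balancing this against the volume factors of the MSA yields localization on $(B_n+\delta,B_{n+1}-\delta)\setminus[B_n-c\lambda|\log\lambda|^{1/\alpha},\,B_n+c\lambda|\log\lambda|^{1/\alpha}]$ with a suitable constant $c$. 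A Wegner estimate uniform in $\lambda\in(0,\widehat\lambda]$ (again available on intervals bounded away from the Landau levels, using $\rho$ bounded) is the remaining ingredient; defining $E_n^\pm(B,\lambda)$ to be the endpoints of the largest such localization interval yields \eqref{lambda20} and the inclusion of $(E_n^+,E_{n+1}^-)$ in $\Xi_{\mathrm{DL}}\up{B,\lambda}$.

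For part (ii), I invoke the fact (noted in the excerpt, relying on \eqref{XiDLXiL}) that these localization intervals lie in $\Xi_{L_+}$, and that Theorem~\ref{sigmathm20} continues to hold for the unbounded model. By Theorem~\ref{sigmathm20}, $\sigma_H(B,\lambda,\cdot)$ is integer valued and continuous on $\Xi_{L_+}$, and jointly continuous on $\Xi_{L_+}$ as a subset of $(0,\infty)\times[0,\infty)\times\R$. Fix an energy $E$ in the classical Landau gap $(B_n,B_{n+1})$; then $(B,0,E)\in\Xi_{NS}\subset\Xi_{L_+}$ and $\sigma_H(B,0,E)=n$. Joint continuity on $\Xi_{L_+}$ together with integer valuedness forces $\sigma_H(B,\lambda,E')=n$ on the entire connected component of $\Xi_{L_+}\bp{B,\lambda}$ containing $(E_n^+(B,\lambda),E_{n+1}^-(B,\lambda))$, provided $\lambda\le\lambda(n)$ is small enough to keep $(B,\lambda,E)$ in $\Xi_{L_+}$ together with a path back to $(B,0,E)$. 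Similarly, for $\lambda\le\lambda(n-1)$ the Hall conductance equals $n-1$ on the component containing $(E_{n-1}^+(B,\lambda),E_n^-(B,\lambda))$. Hence for $\lambda<\min\{\lambda(n-1),\lambda(n)\}$ the interval $[E_n^-(B,\lambda),E_n^+(B,\lambda)]$ cannot lie entirely in $\Xi_{L_+}\bp{B,\lambda}$, for otherwise continuity would contradict the jump from $n-1$ to $n$. This gives some $E_n(B,\lambda)\in[E_n^-,E_n^+]$ with $L_+\bp{B,\lambda}(E_n(B,\lambda))=\infty$, and hence by $\Xi_{\mathrm{DL}}\subset\Xi_{L_+}$ the inclusion \eqref{EnB234} follows. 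Finally, \eqref{momentgrowth9} is the standard quantitative transport lower bound at a dynamical delocalization energy, obtained as the contrapositive of the characterization of $\Xi_{\mathrm{DL}}$ via \eqref{slowdec}: since $E_n(B,\lambda)\notin\Xi_{\mathrm{DL}}\up{B,\lambda}$, for any admissible $\mathcal{X}$ and $p>24$ the slow decay condition fails with $\alpha=p/4-6$, yielding the stated growth.

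The hard part is the quantitative localization of part (i), because one must simultaneously keep the Wegner constant from blowing up as $\lambda\to 0$ and convert the Gaussian-like tail \eqref{rhodecay} into the precise window size $\lambda|\log\lambda|^{1/\alpha}$; everything else is an extraction from Theorem~\ref{sigmathm20} and \cite{GKduke} once one knows that the ergodic framework, magnetic covariance, and the decay estimate \eqref{fermidecay} survive for the unbounded model, which is precisely the content of \cite{GKM}.
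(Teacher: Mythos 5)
The paper itself contains no proof of this theorem---it is imported verbatim from the companion article \cite{GKM} and stated ``for completeness''---but your proposal follows exactly the strategy the paper attributes to \cite{GKM} and carries out in the parallel proof of Theorem~\ref{closedgaps0}: a multiscale analysis whose initial estimate uses the tail bound \eqref{rhodecay} to localize outside windows of width $O(\lambda\abs{\log\lambda}^{1/\alpha})$ around each Landau level, followed by the three-parameter continuity and integrality of $\sigma_H$ on $\Xi_{L_+}$ from Theorem~\ref{sigmathm20} (letting $\lambda\to 0$ along $\{(B,\lambda',E)\}$ to read off the values $n-1$ and $n$ on the flanking localization intervals, since there is no spectral gap to anchor them) to force a delocalized energy, with the transport lower bound from \cite[Theorem~2.11]{GKduke}. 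The quantitative input in your part (i)---the uniform-in-$\lambda$ Wegner estimate and the conversion of \eqref{rhodecay} into the window size---is only sketched, but that is precisely the content deferred to \cite{GKM}, so at the level at which this paper treats the theorem your outline is faithful.
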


We collect some technicalities in Section~\ref{secttech}.
 In Section~\ref{sectHall} we study the Hall conductance, proving Theorem~\ref{sigmathm0}.
Section~\ref{seccontHall} is devoted to the continuity of the Hall conductance:  Theorem~\ref{sigmathmN0} is proved in Subsection~\ref{subsecerg}, and  the stronger version for   Anderson-Landau Hamiltonians, Theorem~\ref{sigmathm20}, is proved in Subsection~\ref{secHallALH}.   Corollary~\ref{maincorErg0} is proven in Section~\ref{secdelergodic}. Dynamical delocalization (and a dynamical metal-insulator transition) for  Anderson-Landau Hamiltonians with closed spectral gaps is shown in  Section ~\ref{secApALH}, where we prove Theorem~\ref{closedgaps0}.  In Appendix~\ref{appLsp}  we prove a useful lemma about the spectrum of Landau Hamiltonians with bounded potentials.  The spectrum of the   Anderson-Landau Hamiltonian is discussed in  Appendix~\ref{apSp}.

\section{Technicalities}\label{secttech}

\subsection{Norms on random operators and Fermi projections}\label{aprandomop}
 Given $p \in [1,\infty)$,
$\T_p$ will denote  the Banach space of bounded operators $S$
on $\mathrm{L}^2(\mathbb{R}^2, {\mathrm{d}}x)$ with
$\| S \|_{\T_p}=\| S \|_p := \left(\tr |S|^p\right)^{\frac 1p} < \infty$. 
A random operator $S_\omega$ is a strongly measurable 
map from the probability
space $(\Omega,\P)$ to bounded operators on 
$\mathrm{L}^2(\mathbb{R}^2, {\mathrm{d}}x)$.   Given $p \in [1,\infty)$,
we set
\begin{equation}
\qnorm{S_\omega}_p:=
\left\{ \E \left\{  \| S_\omega \|_p^p   \right\}\right\}^{\frac 1p}= 
\left\lVert  \| S_\omega \|_{\T_p} \right\rVert_{\text{L}^p(\Omega,\P) },
\end{equation}
and
\begin{equation}
\qnorm{S_\omega}_\infty:=
\left\lVert  \| S_\omega \| \right\rVert_{\text{L}^\infty(\Omega,\P) }.
\end{equation}
These are norms on random operators, note that
\begin{equation}\label{qcomp}
\qnorm{S_\omega}_q \le \qnorm{S_\omega}_\infty^{\frac{q-p} q} \qnorm{S_\omega}_p^{\frac pq}
\quad \text{for $1\le p\le q <\infty$},
\end{equation}
and they satisfy Holder's inequality:
\begin{equation}\label{Holdersineq}
\qnorm{S_\omega T_\omega}_r\le  \qnorm{S_\omega}_p \qnorm{T_\omega}_q
\quad \text{for $r,p,q \in [1,\infty]$ with $\tfrac 1 r= \tfrac 1 p + \tfrac 1q$}.
\end{equation}
In particular, if $\qnorm{S_\omega}_\infty\le 1$, we have
\begin{equation}\label{qcomp5}
\qnorm{S_\omega}_q \le  \qnorm{S_\omega}_2^{\frac 2 q}
\quad \text{for $2\le p\le q <\infty$},
\end{equation}

\subsection{Operator kernels of Fermi projections}  Let $H_{B,\lambda,\omega}$ be an  ergodic Landau Hamiltonian 
for a given magnetic field $B >0$, disorder $\lambda \ge 0$ and energy $E \in \R$. 
We consider  the operator kernel  of
the Fermi projection $P_{B,\lambda,E,\omega}= \chi_{(-\infty,E]}(H_{B,\lambda,\omega})$,
$\left\{\chi_x P_{B,\lambda,E,\omega} \chi_y\right\}_{x,y \in\Z^2}$,
and set
\begin{equation}\begin{split}\label{trest0}
\kappa_p (B,\lambda, E) &\equiv 
\qnorm{\chi_0 P_{B,\lambda,E,\omega} \chi_0}_p \quad \text{for} \ p \in [1,\infty], \\
\kappa_{1,\infty}(B,\lambda, E) 
&\equiv \left\lVert \tr \left\{\chi_0 P_{B,\lambda,E,\omega} \chi_0\right\}
\right\rVert_{\text{L}^\infty (\Omega,\P)}.
\end{split}\end{equation}
Note that $\kappa_{1,\infty}(B,\lambda, E)$ is locally bounded on $\Xi$ 
(e.g., \cite{BGKS}), and hence also $\kappa_p (B,\lambda, E)$, since
$\kappa_\infty (B,\lambda, E) \le 1$ and for $p\in [1,\infty)$ we have
\begin{equation}
\kappa_p (B,\lambda, E)  \le
\qnorm{\chi_0 P_{B,\lambda,E,\omega} \chi_0}_1^{\frac 1p}\le
\left\{\kappa_{1,\infty}(B,\lambda, E)\right\}^ {\frac 1p} .
\end{equation}
In addition,  we have 
\begin{align}\label{trest9}
\qnorm{\chi_0  P_{B,\lambda,E,\omega}  }_{p} 
\begin{cases}
=\qnorm{\chi_0  P_{B,\lambda,E,\omega} \chi_0 }_{\frac p 2}^{\frac 12}
=\left\{\kappa_{\frac p2} (B,\lambda, E)\right\}^{\frac 12} 
&\text{if $p\in [2,\infty)$}\\[1.5mm]
= \qnorm{\abs{\chi_0 P_{B,\lambda,E,\omega}}^{\frac 12} }_{2p} ^2 
\le \kappa_{p}(B,\lambda, E) &\text{if $p\in [1,\infty)$}
\end{cases},
\end{align}
and thus, given $x\in \Z^2$,  for all $p\in [1,\infty)$ we have
\begin{align}\label{trest}
\qnorm{\chi_0  P_{B,\lambda,E,\omega} \chi_x }_p& \le
\qnorm{\chi_0  P_{B,\lambda,E,\omega}  }_{2p} 
\qnorm{  P_{B,\lambda,E,\omega} \chi_x }_{2p}=
\kappa_p (B,\lambda, E) .
\end{align}

It follows from \eq{IDS11} that  
\beq  \label{IDS12}
N(B,\lambda,E)= \kappa_1 (B,\lambda, E) .
\eeq
Note that  
\beq\label{N=0}
N (B,\lambda,E)=0 \quad \Longleftrightarrow  \quad  \qnorm{
\chi_x  P_{B,\lambda,E,\omega} \chi_0}_2=0 \quad \text{for all} \quad  x\in \Z^2.
\eeq

\subsection{"Localization lengths"}\label{subslocleng} We will now justify  the definitions \eq{locL0},  \eq{locL0+} and   \eq{locL0+B}.

To justify \eq{locL0}, we must show that the limit exists in $[0,\infty)$.  Given $\beta \in (0,1]$ and $(B,\lambda,E)\in \Xi$, let
 \beq \label{locLbeta}
\widetilde{L}_\beta (B,\lambda,E) :=N (B,\lambda,E)^{1 -\beta} L_\beta (B,\lambda,E),
\eeq
where  $N (B,\lambda,E)$ is as in \eq{IDS11}.   It follows from \eq{trest}
that $ \widetilde{L}_\beta (B,\lambda,E)$  is monotone decreasing in $\beta \in (0,1]$, so we can define 
\beq \label{locL}
\widetilde{L} (B,\lambda,E) := \inf_{\beta \in (0,1)} \widetilde{L}_\beta (B,\lambda,E)= \lim_{\beta \uparrow 1} \widetilde{L}_\beta (B,\lambda,E).
\eeq
It is an immediate consequence of \eq{locLbeta}   and  \eq{locL} (cf. \eq{N=0}) that $L (B,\lambda,E)$ is well defined and 
\beq
L (B,\lambda,E)= \widetilde{L} (B,\lambda,E).
\eeq

 The definitions \eq{locL0+} and   \eq{locL0+B} are justified in a similar way.    As before
\beq \begin{split}
\widetilde{L }_{\beta+}(B,\lambda,E)&:= N (B,\lambda,E)^{1 -\beta}L _{\beta+}(B,\lambda,E),\\ 
\widetilde{L} _{\beta +}\up{B,\lambda}(E)&:= N (B,\lambda,E)^{1 -\beta}L _{\beta +}\up{B,\lambda}(E),
\end{split}\eeq
are seen to be  monotone decreasing in $\beta \in (0,1]$, so we have
\begin{align}
L_+ (B,\lambda,E)& = \inf_{\beta \in (0,1)}\widetilde{L }_{\beta+}(B,\lambda,E)= \lim_{\beta \uparrow 1} \widetilde{L }_{\beta +} (B,\lambda,E), \\
L _{+}\up{B,\lambda}(E)& = \inf_{\beta \in (0,1)}\widetilde{L} _{\beta +}\up{B,\lambda}(E)= \lim_{\beta \uparrow 1} \widetilde{L} _{\beta +}\up{B,\lambda}(E).
\end{align}

It follows that that the sets $\Xi_{\#} $ and $\Xi_{\#}\bp{B,\lambda}$,   $\# = L_\beta, L _{\beta +}$, are monotone increasing in $\beta \in (0,1]$, and we have \eq{Xiunion}

\subsection{Auxiliary ``localization lengths"}
Although the ``localization lengths" \\  $L(B,\lambda,E)$ and $L_+(B,\lambda,E)$  give a convenient way to write our main theorems, in the proofs it will be more convenient to work with auxiliary  ``localization lengths"  based on the norms for  random operators introduced in \eq{opnorms} with  $p \in [2,\infty)$.  They can be thought  of an adaptation to the 
continuum 
(and to two parameters) of 
\cite[condition (5.4)]{AG}.  If  $q \in [1,\infty)$, $J \subset [1,\infty)$, we define the 
following ``localization lengths" for $(B,\lambda,E)\in \Xi$:
\begin{equation}\begin{split} \label{qcond0}
\ell_q(B,\lambda,E) &:= \sum_{x  \in \Z^2}\max\left\{  |x|,1\right\} \qnorm{
\chi_x  P_{B,\lambda,E,\omega} \chi_0}_q,\\
\ell_{q+}(B,\lambda,E)&:= \inf_{\substack{\Phi \ni (B,\lambda,E)\\
\Phi \subset \Xi\  \text{open} }}\
\sup_{(B^\prime,\lambda^\prime,E^\prime) \in \Phi} 
\ell_q(B^\prime,\lambda^\prime,E^\prime), \\
\ell_{q+}\up{B,\lambda}(E)&:= \inf_{\substack{I \ni E\\
I \subset \R \  \text{open} }}\
\sup_{E^\prime \in I} 
\ell_q(B,\lambda,E^\prime), \\
\ell_{J} (B,\lambda,E) &:=\inf_{q \in J} \ell_q(B,\lambda,E),\\
\ell_{J+} (B,\lambda,E) &:=\inf_{q \in J} \ell_{q+} (B,\lambda,E),\\
\ell_{J+} \up{B,\lambda}(E) &:=\inf_{q \in J} \ell_{q+}\up{B,\lambda}(E).
\end{split}\end{equation}

While  the quantity in
\cite[(5.4)]{AG} is monotone increasing in $q \in [1,\infty)$, 
the ``localization lengths"  $\ell_q(B,\lambda,E)$
cannot be compared for different $q$'s.  Another difference
is that \cite[condition (5.4)]{AG}  implies the equivalent of \eqref{K2} in the
lattice, but $\ell_q(B,\lambda,E) < \infty$  only
implies \eqref{K2} if $q=2$.  

We also
define the subsets of $\Xi$ where these ``localization lengths" are finite:
\begin{equation}\begin{split}
\Xi_{\#}&= \left\{ (B,\lambda,E) \in \Xi; \quad \ell_{\#}(B,\lambda,E) < \infty \right\},
\quad  \# = q, q+, J, J+\,,\\
\Xi_{\#}\bp{B,\lambda} &= \left\{ E \in \R; \quad \ell_{\#} \up{B,\lambda}(E) < \infty \right\},
\quad  \# =  q+,  J+\,.
\end{split} \end{equation}
Note that we may have  $\Xi_{\#}\bp{B,\lambda}\not= \Xi_{\#}\up{B,\lambda}$,  with $\Xi_{\#}\up{B,\lambda}$ defined as in \eqref{Phiup}. However, $\Xi_{\#}\bp{B,\lambda}\supset  \Xi_{\#}\up{B,\lambda}$  and 
\begin{equation}
\Xi_{J} = \bigcup_{q \in J} \Xi_q\, , \qquad 
\Xi_{J+} = \bigcup_{q \in J} \Xi_{q+}\, , \qquad \Xi_{J+}\bp{B,\lambda} = \bigcup_{q \in J} \Xi_{q+}\bp{B,\lambda}\, .
\end{equation}
$\Xi_{J+}$ is,
by definition,  a relatively
open subset of $\Xi$, and $\Xi_{J+}\bp{B,\lambda}$ is  an open subset of $\R$.

If $q \in [2,\infty)$,   it follows  immediately from \eq{qcomp5} and \eq{trest0} that for all $(B,\lambda,E)\in \Xi$ we have 
\begin{align}
\ell_q(B,\lambda,E) & \le  \kappa_q (B,\lambda, E) + L_{\frac 2 q}(B,\lambda,E),\\
\ell_{q+}(B,\lambda,E) & \le \kappa_q (B,\lambda, E)+  L_{\frac 2 q +}(B,\lambda,E),\\
\ell_{q+}\up{B,\lambda}(E) & \le \kappa_q (B,\lambda, E)+ L_{\frac 2 q +}\up{B,\lambda}(E) .
\end{align}
It follows that 
\beq
\Xi_L \subset   \bigcap_{r >2} \Xi_{(2,r]} \quad \text{and} \quad \Xi_{L_+} \subset   \bigcap_{r >2} \Xi_{(2,r]+} .
\eeq

For the Anderson-Landau Hamiltonian  $H\up{A}_{B,\lambda,\omega}$ the following holds for  all large $ q_0$  (recall \eq{fermidecay}-\eq{XiDLXiL}):
\beq \begin{split}
\Xi_{\mathrm{DL}} &=  \bigcap_{q \in [1,\infty)} \Xi_{q+}=  \Xi_{q_0+},\\
\Xi_{\mathrm{DL}}\up{B,\lambda} &=  \bigcap_{q \in [1,\infty)} \Xi_{q+}\bp{B,\lambda}=  \Xi_{q_0+}\bp{B,\lambda} .
\end{split}\eeq

\section{Existence and quantization of the Hall conductance}\label{sectHall}

Theorem~\ref{sigmathm0} is an immediate consequence of the following theorem.

\begin{theorem}\label{sigmathm}  Let  $H_{B,\lambda,\omega}$ be an ergodic  Landau Hamiltonian.  Then
 the Hall conductance $\sigma_H(B,\lambda,E)$ is defined on
$\Xi_{[2,\infty)}$ with the bound
\begin{equation}\label{uvsigmathm}
\abs{\sigma_H(B, \lambda,E)}  \le  4 \pi  \inf_{\substack{ q \in [2,\infty)\\ \frac 1 p +\frac 2 q=1}}\set{\kappa_{p} (B,\lambda, E)
\left\{\ell_q(B,\lambda,E)\right\}^2}
< \infty.
\end{equation}
It follows that $\sigma_H(B,\lambda,E)$ is
locally bounded on  $\Xi_{[2,\infty)+}$ and on each $\Xi_{[2,\infty)+}\bp{B,\lambda}$.
Moreover,    the Hall conductance $\sigma_H(B,\lambda,E)$ is
integer valued on $\Xi_{(2,3]}$.
\end{theorem}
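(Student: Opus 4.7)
The strategy rests on the algebraic identity
\begin{equation*}
P[[P,X_1],[P,X_2]] = PX_1PX_2P - PX_2PX_1P,
\end{equation*}
which I derive from $P[P,X_i][P,X_j] = -PX_i(1-P)X_jP$ (using $P^2=P$ and $P(1-P)=0$) and then observing that the antisymmetric combination in $(i,j)$ eliminates the $PX_iX_jP$ contributions since $X_1X_2=X_2X_1$. Because $X_i$ is unbounded this identity is initially only formal, so I would justify it by truncating $X_i\mapsto X_i\mathbf{1}_{\{\abs{x}\le R\}}$ and passing to the limit using the Schatten bounds below, uniform in $R$.

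For the bound \eqref{uvsigmathm}, fix $q\in[2,\infty)$ and $p$ with $\tfrac 1 p+\tfrac 2 q=1$. The lattice decomposition $X_i=\sum_{u\in\Z^2}X_i\chi_u$, with $\norm{X_i\chi_u}_\infty\le\abs u+\tfrac 12$ and the identity $X_i\chi_u=\chi_u X_i\chi_u$, gives
\begin{equation*}
\chi_0 PX_iPX_jP\chi_0=\sum_{u,v\in\Z^2}(\chi_0 P\chi_u)(X_i\chi_u)(\chi_u P\chi_v)(X_j\chi_v)(\chi_v P\chi_0).
\end{equation*}
H\"older in Schatten classes and in $\L^p(\Omega,\P)$, split as $\tfrac 1 q+\tfrac 1 p+\tfrac 1 q=1$, bounds each summand by $(\abs u+1)(\abs v+1)\qnorm{\chi_0 P\chi_u}_q\qnorm{\chi_u P\chi_v}_p\qnorm{\chi_v P\chi_0}_q$. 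The middle factor satisfies $\qnorm{\chi_u P\chi_v}_p\le\kappa_p$ uniformly in $u,v$: by H\"older $\qnorm{\chi_u P\chi_v}_p\le\qnorm{\chi_u P}_{2p}\qnorm{P\chi_v}_{2p}$, the identity $\qnorm{A}_{2p}^2=\qnorm{AA^*}_p$ applied to $A=\chi_u P$, and covariance of $P$ under the magnetic translations~\eqref{covariance} together with $\Z^2$-stationarity of $\P$ give $\qnorm{\chi_u P}_{2p}^2=\qnorm{\chi_u P\chi_u}_p=\kappa_p$. Combined with the Schatten-adjoint symmetry $\qnorm{\chi_0 P\chi_u}_q=\qnorm{\chi_u P\chi_0}_q$ and the definition of $\ell_q$, summation yields $\qnorm{\chi_0 PX_iPX_jP\chi_0}_1\le\kappa_p\ell_q^2$; the algebraic identity and~\eqref{sigmaH} then give~\eqref{uvsigmathm}. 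Local boundedness on $\Xi_{[2,\infty)+}$ and on each $\Xi_{[2,\infty)+}\bp{B,\lambda}$ follows from the local boundedness of $\kappa_p$ on $\Xi$ (via~\eqref{trest0} and that of $\kappa_{1,\infty}$) together with the fact that $\ell_{q+}$ and $\ell_{q+}\up{B,\lambda}$ are locally bounded on their respective domains by construction, since these domains are unions over $q\in[2,\infty)$.

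The integer-valuedness on $\Xi_{(2,3]}$ is the principal obstacle, and would require adapting the Avron--Seiler--Simon and Aizenman--Graf index-theoretic strategy from the lattice \cite{AG} to the continuum ergodic setting. Let $\phi(x)=\arg(x_1+ix_2)$ (smooth away from a branch ray), $U$ the (singular) unitary multiplication by $\e^{i\phi}$, and $P_U:=UPU^*$. Under $\ell_q(B,\lambda,E)<\infty$ for some $q\in(2,3]$ one shows that $P-P_U$ lies in the third Schatten class per unit volume, so that $\E\tr\chi_0(P-P_U)^3\chi_0$ is well defined. An ergodic version of the Fedosov--ASS index formula then identifies this ergodic trace with the index of $PUP$ regarded as Fredholm in the trace-per-unit-volume sense, hence integer valued; finally one matches this integer with $\sigma_H(B,\lambda,E)$ by expanding $(P-P_U)^3$ in commutators $[P,X_i]$ and comparing with~\eqref{sigmaH}. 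The restriction $q\le 3$ corresponds precisely to trace-class of the triple product $(P-P_U)^3$, the natural exponent for the $2$-dimensional Chern number; the shift from discrete matrix elements as in \cite{AG} to Schatten norms $\qnorm{\chi_x P\chi_y}_q$ is the main technical burden that the $\ell_q$-formalism in~\eqref{qcond0} is designed to absorb.
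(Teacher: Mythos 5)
Your argument for existence and for the bound is essentially the paper's: the paper likewise reduces $\sigma_H$ to an absolutely convergent double sum over $u,v\in\Z^2$ and applies H\"older with the splitting $\frac1q+\frac1p+\frac1q=1$, the middle factor being $\kappa_p$ by covariance. The one real difference is that the paper first replaces $X_i$ by the discretized $\hat X_i=\sum_u u_i\chi_u$ (proving $\sigma_H=\hat\sigma_H$ via centrality of the trace and covariance), so that the antisymmetric combination produces the scalar $u_1v_2-u_2v_1$ and the cancellation $\abs{u_1v_2-u_2v_1}\le\abs{u}\abs{v}$ is available. Keeping $X_i\chi_u$ as an operator of norm $\le\abs{u}+\tfrac12$, as you do, is legitimate and your truncation argument for the algebraic identity is fine, but it yields a constant strictly larger than the $4\pi$ asserted in \eqref{uvsigmathm}; as written you prove the inequality only with a worse constant. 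This is a minor quantitative shortfall. Your deduction of local boundedness is correct and is the paper's.

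The genuine gap is in the integrality argument. You propose to show that $P-P_U$ is in the third Schatten class \emph{per unit volume} and to extract an integer from ``the index of $PUP$ regarded as Fredholm in the trace-per-unit-volume sense.'' No such mechanism yields integers: a trace per unit volume of $(P-P_U)^3$ is just a real number with no quantization built in, and making the corresponding noncommutative-geometry/Dixmier-trace argument work on the continuum is exactly the open problem this paper is circumventing. What actually produces an integer is a \emph{pathwise}, full-space statement: one proves $\E\set{\norm{P_\omega-\Gamma_aP_\omega\Gamma_a^*}_3}<\infty$ (a genuine Schatten bound on all of $\L^2(\R^2)$ --- this is where $q>2$ enters through $\sum_x\abs{x-a}^{-q}<\infty$ and $q\le3$ through $\norm{\cdot}_3\le\norm{\cdot}_q$), so that for a.e.\ \emph{fixed} $\omega$ the operator $P_\omega-\Gamma_aP_\omega\Gamma_a^*$ lies in $\T_3$ and $\tr(P_\omega-\Gamma_aP_\omega\Gamma_a^*)^3$ is an honest Fredholm index, hence an integer $\omega$ by $\omega$. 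Covariance \eqref{covariance}, the independence of the index from the pole $a$, and ergodicity then make this integer almost surely constant and equal to its expectation. The second ingredient you are missing is the identification with $\sigma_H$: it is not obtained by expanding $(P-P_U)^3$ for a single pole, but by placing the pole at $a\in\Z^{2*}$ and summing the resulting combinatorial expression over the whole dual lattice, where Connes' formula $\sum_{a\in\Z^{2*}}(1-\gamma_a(0)\overline{\gamma_a}(u))(1-\gamma_a(u)\overline{\gamma_a}(v))(1-\gamma_a(v)\overline{\gamma_a}(0))=-2\pi i(u_1v_2-u_2v_1)$ converts the averaged index into the expression \eqref{uvsigma} for $\sigma_H$. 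Without the pathwise trace-class bound and without the average over poles, neither the integrality nor the identity between $\sigma_H$ and the expected index is established. (A further small point: your $U=\e^{i\arg(x_1+ix_2)}$ is singular at the origin; the paper builds $\gamma_a$ from the discretization $\hat x$ with $a$ in the dual lattice, so that $\abs{\hat x-a}\ge\tfrac{\sqrt2}{2}$, which is needed for the estimates.)
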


Theorem~\ref{sigmathm} will proved by the following lemmas.

Given $x \in \R^2$, we set
$\hat{x}$ to be the discretization of $x$, i.e., the unique element of $\Z^2$ such that
$x_i \in [\hat{x}_i - \frac 12,\hat{x}_i + \frac 12)$, $1=1,2$.  We let
$\hat{X}_i$ denote the operator given by multiplication by
$\hat{x}_i$,  and note that $\hat{X}_i \chi_u = u_i \chi_u$
for each $u \in \Z^2$, i.e., $ \hat{X}_i = \sum_{x \in \Z^2} x \chi_x$,  and note
\begin{equation}\label{xhat}
\norm{X_i -\hat{X}_i } \le \frac 12, \quad
\norm{\abs{X} -\lvert{\hat{X}}\rvert} \le \frac {\sqrt{2}}2.
\end{equation}
If $(B,\lambda,E) \in \Xi$ and $q\in [1,\infty) $, it follows that
\begin{equation}
\qnorm{\lvert\hat{X}\rvert\,  P_{B,\lambda,E,\omega} \chi_0 }_q
\le \ell_q(B,\lambda,E), 
\end{equation}
and hence, using \eqref{xhat}, and \eqref{trest9} we get
\begin{align} \label{qcond2}
\qnorm{\abs{X}  P_{B,\lambda,E,\omega} \chi_0 }_q \le 
\ell_q(B,\lambda,E) + \kappa_{ q }(B,\lambda, E)\le 2\ell_q(B,\lambda,E).
\end{align}
It follows  that, with $i=1,2$,
\begin{align}\label{[PXhat]}
\qnorm{[  P_{B,\lambda,E,\omega} , \hat{X}_i ]\chi_0}_q & 
\le \ell_q(B,\lambda,E),\\
\qnorm{[  P_{B,\lambda,E,\omega} , {X}_i ]\chi_0}_q & 
\le 3 \ell_q(B,\lambda,E).\label{[PX]}
\end{align}

We conclude, using covariance, that for  $\P$-a.e. $\omega$, 
$\hat{X}_i  P_{B,\lambda,E,\omega} \chi_u$
and ${X}_i  P_{B,\lambda,E,\omega} \chi_u$, and hence also
$[  P_{B,\lambda,E,\omega} , \hat{X}_i ]\chi_u$ and 
$[  P_{B,\lambda,E,\omega} , {X}_i ]\chi_u$,
are bounded operators for all $(B,\lambda,E) \in \Xi_{[1,\infty)}$, $u \in \Z^2$,
$i=1,2$.

We now define a modified Hall conductance, with $\hat{X}_i$ substituted
for ${X}_i$:
\begin{equation}\label{hatsigmaH}
\hat{\sigma}_H (\lambda,E) = - 2\pi i\,
\E \left\{\tr \left\{\chi_0 P_{B,\lambda,E,\omega}
\left[ \left[P_{B,\lambda,E,\omega},\hat{X}_1\right], 
\left[P_{B,\lambda,E,\omega},\hat{X}_2\right]\right]
\chi_0\right\}\right\},
\end{equation}
defined for $(B,\lambda, E) \in\Xi$ such that
\begin{equation} \label{hatwelldef}
\qnorm{\chi_0 P_{B,\lambda,E,\omega}
\left[\left[ P_{B,\lambda,E,\omega},\hat{X}_1\right],
\left[ P_{B,\lambda,E,\omega},\hat{X}_2\right]\right] \chi_0}_1
< \infty.
\end{equation}

\begin{lemma} \label{lemmacond}
The Hall conductances $\sigma_H(B,\lambda,E)$ and  
$\hat{\sigma}_H(B,\lambda,E)$ are  defined
on the set $\Xi_{[2,\infty)}$.
Moreover, for all $(B,\lambda,E) \in\Xi_{[2,\infty)}$ we have
\begin{align}\label{uvsigma}
\sigma_H(B,\lambda,E)&= \hat{\sigma}_H(B,\lambda,E) \\
\notag & = -2\pi i \!\!\!
\sum_{u,v\in \Z^2} (u_1 v_2 - u_2 v_1)
\E\left \{\tr \left \{\chi_0 P_{B,\lambda,E,\omega} \chi_u P_{B,\lambda,E,\omega}
\chi_v P_{B,\lambda,E,\omega}\chi_0 \right\}\right\},
\end{align}
with
\begin{equation}\begin{split}\label{uvsigma2}
\abs{\sigma_H(B, \lambda,E)} &\le
4 \pi
\sum_{u,v\in \Z^2} \left\lvert u\right\rvert \left\lvert v \right\rvert
\qnorm{\chi_0 P_{\lambda,E,\omega} \chi_u P_{\lambda,E,\omega}
\chi_v P_{\lambda,E,\omega}\chi_0}_1\\
& \le  4 \pi\kappa_{p} (B,\lambda, E)
\left\{\ell_q(B,\lambda,E)\right\}^2
< \infty
\end{split}\end{equation}
for all $ q \in  [2,\infty)$ and  $ \frac 1 p +\frac 2 q=1$.
\end{lemma}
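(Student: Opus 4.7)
Fix $(B,\lambda,E)\in\Xi_{[2,\infty)}$ and choose $q\in[2,\infty)$ with $\ell_q:=\ell_q(B,\lambda,E)<\infty$, setting $p$ by $1/p+2/q=1$. My plan is to identify the common double series of \eqref{uvsigma}, prove it is absolutely summable with the bound \eqref{uvsigma2}, and then match both $\sigma_H$ and $\hat\sigma_H$ to it. The summability and bound are obtained by inserting $\chi_u^2=\chi_u$ to factor $\chi_0 P\chi_u P\chi_v P\chi_0=(\chi_0 P\chi_u)(\chi_u P\chi_v)(\chi_v P\chi_0)$, applying the H\"older inequality \eqref{Holdersineq} with the triple of exponents $(q,p,q)$, bounding $\qnorm{\chi_u P\chi_v}_p\le\kappa_p(B,\lambda,E)$ via \eqref{trest} and the magnetic-translation covariance \eqref{covariance}, and using $|u_1v_2-u_2v_1|\le 2|u||v|$ together with $\sum_u|u|\,\qnorm{\chi_0 P\chi_u}_q\le\ell_q$. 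This yields the bound $2\kappa_p\ell_q^2$ on the series, hence \eqref{uvsigma2} after the factor $2\pi$.

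\textbf{Matching $\hat\sigma_H$.} Inserting $I=\sum_u\chi_u$ between each pair of adjacent operator factors and using $\hat X_i\chi_u=u_i\chi_u$, $\hat X_i\chi_0=0$, and $P^2=P$, a direct algebraic expansion gives
\begin{equation*}
\chi_0 P[[P,\hat X_1],[P,\hat X_2]]\chi_0=\sum_{u,v\in\Z^2}(u_1v_2-u_2v_1)\,\chi_0 P\chi_u P\chi_v P\chi_0,
\end{equation*}
the ``diagonal'' pieces $\sum_u u_1u_2\chi_0 P\chi_u P\chi_0$ coming from the two orderings cancelling by antisymmetry. The summability of the previous step lets us exchange $\E\tr$ with the sum, establishing \eqref{hatwelldef} and $\hat\sigma_H=-2\pi i\,S$, where $S$ is the double series of \eqref{uvsigma}.

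\textbf{Matching $\sigma_H$ and main obstacle.} The heart of the argument is to show $\sigma_H=\hat\sigma_H$. Write $X_i=\hat X_i+\Delta_i$, where $\Delta_i$ is a bounded $\Z^2$-invariant multiplication operator of norm $\le 1/2$; since $U_a\hat X_iU_a^*=\hat X_i-a_iI$ for $a\in\Z^2$, both $[P,\hat X_i]$ and $[P,\Delta_i]$ are $\Z^2$-covariant. Expanding the double commutator produces the pure $\hat X$--$\hat X$ contribution handled in the previous step, together with three ``cross'' terms $\E\tr\{\chi_0 P[[P,A],[P,B]]\chi_0\}$ in which at least one of $A,B$ equals a $\Delta_j$. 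Using the identity $P[P,A][P,B]=-PAP^\perp BP$ and the observation that every resulting product $PAP^\perp BP$ is itself $\Z^2$-covariant (the scalar shift $-a_iI$ produced by $\hat X_i$ is killed by $PP^\perp=0$), the cyclicity of the trace per unit volume $\E\tr\{\chi_0\,\cdot\,\chi_0\}$ on the algebra of $\Z^2$-covariant operators makes the two opposing pieces in each cross term cancel, so every cross term vanishes. Hence $\sigma_H=\hat\sigma_H$, which, combined with the first step, yields \eqref{welldef}, \eqref{uvsigma}, and \eqref{uvsigma2}. The main difficulty is precisely this last step: neither $X_i$ nor $\hat X_i$ is itself $\Z^2$-covariant, so trace cyclicity is not available for them directly; one must work with the covariant commutators $[P,X_i]=[P,\hat X_i]+[P,\Delta_i]$, exploit $PP^\perp=0$ to absorb the non-covariant scalar shifts, and keep the absolute convergence of the first step tight enough to legitimise the interchange of the trace with the doubly infinite resolutions of the identity $I=\sum_u\chi_u$.
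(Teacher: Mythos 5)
Your proposal follows essentially the same route as the paper's proof: discretize the position operators, establish absolute summability of the double series by H\"older with the exponent triple $(q,p,q)$ together with covariance and \eqref{trest9} (this is the paper's computation \eqref{manip}), identify $\hat\sigma_H$ with the series, and then prove $\sigma_H=\hat\sigma_H$ by reducing to cyclicity of the trace per unit volume $\mathcal{T}(S):=\E\tr\{\chi_0 S\chi_0\}$. The one step whose justification is incomplete as written is the claim that cyclicity alone makes the two pieces of each cross term cancel. For covariant $A,B$, cyclicity and $P^{2}=P$ give
\begin{equation*}
\mathcal{T}\left(PAP^{\perp}BP\right)-\mathcal{T}\left(PBP^{\perp}AP\right)
=\mathcal{T}\left(ABP\right)-\mathcal{T}\left(BAP\right)
=\mathcal{T}\left([A,B]\,P\right),
\end{equation*}
which does not vanish for general covariant $A,B$; you must additionally invoke $[A,B]=0$, true here because every $A,B$ occurring is a multiplication operator (and, to make the individual traces above legitimate, that $\hat X_i\chi_0=0$ and that $X_i\chi_0$ is bounded). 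Once that is said, your cancellation is correct and arguably cleaner than the paper's, which organizes the same computation via the identity $(1-P)[P,X_2]P+P[P,X_2](1-P)=[P,X_2]$ so that the two pieces collapse to $\E\tr\{\chi_0(X_1-\hat X_1)[P,X_2]\chi_0\}$, which then vanishes by ordinary centrality of the trace and commutativity of multiplication operators. Everything else --- the factorization $\qnorm{\chi_0P\chi_uP\chi_vP\chi_0}_1\le\kappa_p\,\qnorm{\chi_0P\chi_u}_q\qnorm{\chi_vP\chi_0}_q$, the bound $|u_1v_2-u_2v_1|\le 2|u|\,|v|$, and the care needed to interchange $\E\tr$ with the sums over $u,v$ --- matches the paper's argument.
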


\begin{proof}  Let  $(B,\lambda,E) \in \Xi_q$ 
for some $q \in [1,\infty)$.  Writing $P_\omega$ for $P_{B,\lambda,E,\omega}$, we have
\begin{align}
&\qnorm{\chi_0 P_{\omega}
\left[\left[ P_\omega,X_1\right],
\left[ P_\omega,X_2\right]\right] \chi_0}_1
\le \\
& \ \sum_{u \in \Z^2} \left\{\qnorm{\chi_0 P_\omega
\left[ P_\omega,X_1\right]\chi_u
\left[ P_\omega,X_2\right] \chi_0}_1
+ \qnorm{\chi_0 P_\omega
\left[ P_\omega,X_2\right]\chi_u
\left[ P_\omega,X_1\right] \chi_0}_1\right\} < \infty,\notag
\end{align}
since  may use the  Holder's inequality \eqref{Holdersineq}
with 
$ \frac 1 p +\frac 2 q=1$ to get
\begin{align}
& \sum_{u \in \Z^2} \qnorm{\chi_0 P_\omega
\left[ P_\omega,X_i\right]\chi_u
\left[ P_\omega,X_j\right] \chi_0}_1 \label{manip}\\
& \quad \le
\qnorm{\chi_0  P_{\omega}}_p
\sum_{u \in \Z^2} 
\qnorm{\left[ P_{\omega},X_i\right] \chi_u}_q
(|u| +1)
\qnorm{
\chi_u P_{\omega} \chi_0}_q  \notag\\
& \quad  \le\qnorm{\chi_0  P_{\omega}}_p
\qnorm{\left[ P_\omega, X_i\right] \chi_0}_q \sum_{u \in \Z^2} 
(|u| +1)\,
\qnorm{
\chi_u P_{\omega} \chi_0}_q  \notag\\
& \quad   \le  4 \kappa_{p} (B,\lambda, E)
\left\{\ell_q(B,\lambda,E)\right\}^2 \notag
< \infty
\end{align}
for $i.j=1,2$, where we used  covariance,  
\eqref{trest9}, \eqref{[PX]}, and \eqref{qcond0}.
Thus  $\sigma_H(B,\lambda,E)$ is
 defined
on the set $\Xi_{q}$, and similarly for 
$\hat{\sigma}_H(B,\lambda,E)$.

We will now show that
$\sigma_H(B,\lambda,E)= \hat{\sigma}_H(B,\lambda,E)$. 
To see that, note that
\begin{align}\label{sigma-sigma}
&\sigma_H(B,\lambda,E) - \hat{\sigma}_H(B,\lambda,E) =\\
& \quad
- 2\pi i\,
\E \left\{\tr \left\{\chi_0 P_\omega
\left[ \left[P_\omega,X_1 -\hat{X}_1\right], 
\left[P_\omega,{X}_2\right]\right]
\chi_0\right\}\right\}\notag\\
&\qquad \quad \quad +  2\pi i\,
\E \left\{\tr \left\{\chi_0 P_\omega
\left[ \left[P_\omega,\hat{X}_1\right], 
\left[P_\omega,X_2 -\hat{X}_2\right]\right]
\chi_0\right\}\right\}\notag .
\end{align}
We have
\begin{align}
&\E \left\{\tr \left\{\chi_0 P_\omega
\left[ \left[P_\omega,X_1 -\hat{X}_1\right], 
\left[P_\omega,{X}_2\right]\right]
\chi_0\right\}\right\} \notag
\\
& \quad \quad = \E \left\{\tr \left\{\chi_0 P_\omega(X_1 -\hat{X}_1)
\left(1 -P_\omega\right)
\left[P_\omega,{X}_2\right]
\chi_0\right\}\right\}
\label{cent1} \\
& \qquad \qquad \quad+ \E \left\{\tr \left\{\chi_0 
\left[P_\omega,{X}_2\right] \left(1 -P_\omega\right)
(X_1 -\hat{X}_1)P_\omega\chi_0\right\}\right\}\notag \\
& \quad \quad = \E \left\{\tr \left\{\chi_0(X_1 -\hat{X}_1)
\left(1 -P_\omega\right)
\left[P_\omega,{X}_2\right] P_\omega
\chi_0\right\}\right\} 
\label{cent2}\\
& \qquad \qquad \quad+ \E \left\{\tr \left\{\chi_0 
(X_1 -\hat{X}_1)P_\omega
\left[P_\omega,{X}_2\right] \left(1 -P_\omega\right)
\chi_0\right\}\right\}\notag \\
& \quad \quad = \E \left\{\tr \left\{\chi_0 
(X_1 -\hat{X}_1) \left[P_\omega,{X}_2\right] 
\chi_0\right\}\right\}=0,\label{cent3}
\end{align}
where in \eqref{cent3} we used centrality of trace, justified since $X_2 \chi_0$
is a bounded operator,  to go from  \eqref{cent2} to \eqref{cent3}
we used
\begin{equation}
(1- P_\omega)[P_\omega,X_2]P_\omega
+ P_\omega[P_\omega,X_2](1- P_\omega) =
[P_\omega,X_2],
\end{equation}
and
the passage from \eqref{cent1} to \eqref{cent2} can be justified as follows:
\begin{align} 
& \notag \E \left\{\tr \left\{\chi_0 P_\omega(X_1 -\hat{X}_1)
\left(1 -P_\omega\right)
\left[P_\omega,{X}_2\right]
\chi_0\right\}\right\}\\
& \notag\quad = \sum_{u \in \Z^2} 
\E \left\{\tr \left\{\chi_0 P_\omega\chi_u(X_1 -\hat{X}_1)
\left(1 -P_\omega\right)
\left[P_\omega,{X}_2\right]
\chi_0\right\}\right\}\\
&\quad = \sum_{u \in \Z^2} 
\E \left\{\tr \left\{\chi_u(X_1 -\hat{X}_1)
\left(1 -P_\omega\right)
\left[P_\omega,{X}_2\right]
\chi_0 P_\omega\chi_u\right\}\right\}\\
&\notag \quad = \sum_{u \in \Z^2} 
\E \left\{\tr \left\{\chi_0(X_1 -\hat{X}_1)
\left(1 -P_\omega\right)
\left[P_\omega,{X}_2\right]
\chi_{-u} P_\omega\chi_0\right\}\right\}\\
&\notag \quad =
\E \left\{\tr \left\{\chi_0(X_1 -\hat{X}_1)
\left(1 -P_\omega\right)
\left[P_\omega,{X}_2\right]
P_\omega\chi_0\right\}\right\},
\end{align}
with a similar  calculation for the other term in  \eqref{cent2}, where we used
the centrality 
of the trace and covariance (the absolute summability of all series can be
verified as in \eqref{manip}).  The second term in the right-hand-side of 
\eqref{sigma-sigma} is also equal to $0$ by a similar calculation, so we conclude
that  $\sigma_H(B,\lambda,E)= \hat{\sigma}_H(B,\lambda,E)$.

Since, with  $ \frac 1 p +\frac 2 q=1$, we have
\begin{align}
|u| |v|\qnorm{\chi_0 P_\omega \chi_u P_\omega
\chi_v P_\omega\chi_0}_1 \le 
|u| \qnorm{
\chi_0 P_\omega \chi_u}_q \qnorm{
\chi_0  P_\omega}_p |v| \qnorm{
\chi_v P_\omega \chi_0}_q,
\end{align}
the estimate \eqref{uvsigma2} follows from \eqref{qcond0} and \eqref{trest9}. 
The expression \eqref{uvsigma}
then follows for  ${\sigma}_H(B,\lambda,E)=\hat{\sigma}_H(B,\lambda,E)$
from \eqref{hatsigmaH}.
\end{proof}

Next, we will show that the Hall conductance $\sigma_H(\lambda,E)$ 
takes integer values on $\Xi_{(2,3]}$, following the approach  of 
Avron, Seiler and Simon  \cite{ASS}, as modified by 
Aizenman and Graf \cite{AG}.   Avron, Seiler and Simon  proved the result
for random Landau Hamiltonians at energies  
outside the spectrum, i.e., on   $\Xi_{\text{NS}}$.   Their argument was adapted 
to the lattice by Aizenman and Graf, who proved that the 
Hall conductance for the lattice model  takes integer values in the region where 
\cite[condition (5.4)]{AG} holds, i.e., on the lattice equivalent of
 $\Xi_{(2,3]}$. (On the lattice this result had been  proved earlier under
the  lattice equivalent of  condition \eqref{K2} 
by Bellissard, van Elst and Schulz-Baldes \cite{BES}.) 
We complete the circle by adapting 
Aizenman and Graf's argument back to the continuum.

Let $\Z^{2*}= (\frac 12,\frac 12) + \Z^2 $ denote the dual lattice to $\Z^2$.
Given $a \in \Z^{2*}$ we define  the complex valued function 
$\gamma_a(x)$ on $\R^2$ by
\begin{equation}
\gamma_a(x) = \frac {\hat{x}_1- a_1 + i(\hat{x}_2- a_2)}
{\left\lvert {\hat{x}}- {a}\right \rvert},
\end{equation}
and let
$\Gamma_a$ denote the unitary operator given by multiplication by the
function $\gamma_a(x)$.  Note that  
$\left\lvert {\hat{x}}- {a}\right \rvert \ge \frac {\sqrt{2}}{2}$
for all $x \in \R^2$.  We have the following estimate:
\begin{equation}\label{gammaest}
\left\lvert \gamma_a(x) - \gamma_a(y)\right \rvert \le \min \left\{
\left\lvert \hat{x}-\hat{y}\right \rvert 
\max \left\{\frac1{\left\lvert {\hat{x} - a}\right \rvert}, 
\frac1{\left\lvert {\hat{y} - a}\right \rvert}\right\},2\right\}\le
\min \left\{  4 \frac {\left\lvert \hat{x}-\hat{y}\right \rvert}
{\left\lvert {\hat{x} - a}\right \rvert},2 \right\}.
\end{equation}
(The first inequality can be found in  \cite{ASS}.  The second inequality can be seen as
follows:
if $ |\hat{x} - \hat{y}|\le \frac 12  |\hat{x}-a |$
we have $   |\hat{x}-a| - |\hat{y}-a|\le  |\hat{x} - \hat{y}| \le  \frac 12  |\hat{x}-a |$,
and hence $ |\hat{x}-a| \le2 |\hat{y}-a|$;  if 
 $ |\hat{x} - \hat{y}|  >\frac 12  |\hat{x}-a |$ we have
$ \frac {\left\lvert \hat{x}-\hat{y}\right \rvert}
{\left\lvert {\hat{x} - a}\right \rvert} > \frac 12 $, and hence 
$ 4 \frac {\left\lvert \hat{x}-\hat{y}\right \rvert}
{\left\lvert {\hat{x} - a}\right \rvert} > 2$.)

Given two orthogonal projections $P$ and $Q$ in a Hilbert space, such that $P-Q$ is compact,  the index of $P$ and $Q$ is defined by (cf. \cite[Section~2]{ASS})
\beq
\text{Index} (P,Q):= \dim \Ker  (P-Q-1)-  \dim \Ker  (Q-P-1).
\eeq
The index is a well defined integer since $P-Q$  compact implies that $ \dim \Ker (P-Q \pm 1)$ are both finite.  Note that in the case $P$ and $Q$ have finite rank we have
\beq
\text{Index} (P,Q)= \dim \Ran P - \dim \Ran Q=\tr (P-Q).
\eeq

\begin{lemma}  The Hall conductance $\sigma_H(B,\lambda,E)$ 
takes integer values on  $\Xi_{(2,3]}$. 
\end{lemma}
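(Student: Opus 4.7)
The plan is to identify the Hall conductance on $\Xi_{(2,3]}$ with the index of a pair of Fermi projections conjugated by the dual--lattice winding unitary $\Gamma_a$, using the Avron--Seiler--Simon cubic--trace formula as adapted by Aizenman--Graf to the setting of decaying (but not compactly supported) Fermi projections. Since the index of a pair of projections is automatically an integer, this immediately yields quantization.

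Concretely, I would fix $a\in \Z^{2*}$ and set $T_a(\omega):= P_{B,\lambda,E,\omega}-\Gamma_a P_{B,\lambda,E,\omega}\Gamma_a^*$. The kernel of $T_a$ in the $\{\chi_x\cdot\chi_y\}_{x,y\in\Z^2}$ decomposition is $\big(\gamma_a(x)-\gamma_a(y)\big)\chi_x P_\omega \chi_y$, so the key estimate \eqref{gammaest} gives
\begin{equation*}
\qnorm{\chi_x T_a \chi_y}_q \le \min\!\left\{4\tfrac{|\hat x-\hat y|}{|\hat x-a|},2\right\}\qnorm{\chi_x P_\omega \chi_y}_q.
\end{equation*}
Using magnetic covariance to reduce $\qnorm{\chi_x P_\omega \chi_y}_q$ to $\qnorm{\chi_0 P_\omega \chi_{y-x}}_q$, then summing, one controls $\qnorm{T_a}_q$ by a sum bounded in terms of $\ell_q(B,\lambda,E)$, which is finite for some $q\in(2,3]$ since $(B,\lambda,E)\in\Xi_{(2,3]}$. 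Thus $T_a(\omega)\in\T_q$ almost surely, and since $q\le 3$, Hölder's inequality on Schatten classes gives $T_a(\omega)^3\in\T_1$ a.s., so $\tr T_a(\omega)^3$ is well-defined.

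Next I would invoke the ASS formula: whenever $P-UPU^*\in\T_{2n+1}$ for a unitary $U$ and projection $P$, one has $\operatorname{Index}(P,UPU^*)=\tr(P-UPU^*)^{2n+1}$. Taking $n=1$ gives
\begin{equation*}
\tr T_a(\omega)^3 = \operatorname{Index}\!\big(P_\omega,\Gamma_a P_\omega \Gamma_a^*\big)\in\Z \quad \text{a.s.}
\end{equation*}
It remains to identify $\sigma_H(B,\lambda,E)$ with this integer (or its expectation / spatial average). Expanding $\tr T_a(\omega)^3$ as a sum over $u,v\in\Z^2$ of $\big(\gamma_a(0)-\gamma_a(u)\big)\big(\gamma_a(u)-\gamma_a(v)\big)\big(\gamma_a(v)-\gamma_a(0)\big)\tr\!\big(\chi_0 P_\omega\chi_u P_\omega\chi_v P_\omega\chi_0\big)$, taking expectation, and exploiting the covariance relation $U_b^*\Gamma_a U_b=\Gamma_{a-b}$ to deduce that $\E\tr T_a^3$ is independent of $a$, I would then average $a$ over a large box $\Lambda_L^*\subset\Z^{2*}$ and use the classical geometric identity
\begin{equation*}
\frac{1}{L^2}\sum_{a\in\Lambda_L^*}\operatorname{Im}\!\big[(\gamma_a(0)-\gamma_a(u))(\gamma_a(u)-\gamma_a(v))(\gamma_a(v)-\gamma_a(0))\big] \longrightarrow c\,(u_1v_2-u_2v_1),
\end{equation*}
with a universal constant $c$ (the Bellissard--Aizenman--Graf winding identity) to reproduce, up to the constant, the expression \eqref{uvsigma} for $\sigma_H$. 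Since an average of a constant integer sequence is an integer, the conclusion follows.

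The main obstacle I anticipate is the uniform summability needed to interchange the three limits/sums (in $u,v$, in $a$, and in the expectation $\E$). The trivial bound $|\gamma_a(x)-\gamma_a(y)|\le 2$ is certainly not sufficient for convergence in $a$, so one must use the refined decay $|\hat x-\hat y|/|\hat x-a|$ from \eqref{gammaest}, and this is precisely where the condition $\ell_q<\infty$ with $q\in(2,3]$, as opposed to the weaker $\Xi_{[2,\infty)}$, becomes indispensable: for $q\le 3$ the cubic power is integrable and the ASS formula with $n=1$ applies, whereas larger $q$ would require higher odd powers and hence stronger kernel decay. A secondary technical point is to check that replacing $X_i$ by $\hat X_i$ (as in Lemma~\ref{lemmacond}) is compatible with the $\Gamma_a$ construction; this should be transparent since $\Gamma_a$ already uses $\hat x$ in its definition.
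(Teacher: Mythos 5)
Your overall strategy coincides with the paper's: show $T_a(\omega):=P_\omega-\Gamma_aP_\omega\Gamma_a^*\in\T_3$ almost surely using $\ell_q<\infty$ for some $q\in(2,3]$, invoke the Avron--Seiler--Simon identity $\mathrm{Index}(P_\omega,\Gamma_aP_\omega\Gamma_a^*)=\tr T_a(\omega)^3$, and match the expanded trace against the expression \eqref{uvsigma} for $\sigma_H$ by a geometric identity. The roles of $q>2$ (summability of $\sum_{a}|a|^{-q}$ in two dimensions) and $q\le 3$ (so that the cube is trace class) are correctly identified. A harmless slip: the kernel of $T_a$ is $\bigl(1-\gamma_a(x)\overline{\gamma_a}(y)\bigr)\chi_xP_\omega\chi_y$, not $\bigl(\gamma_a(x)-\gamma_a(y)\bigr)\chi_xP_\omega\chi_y$; since $|\gamma_a|=1$ the moduli agree and your norm estimate survives.

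The identification step, however, contains a genuine error. One has $\tr T_a^3=\sum_{x,u,v}\tr\bigl(\chi_xT_a\chi_{x+u}T_a\chi_{x+v}T_a\chi_x\bigr)$, and the sum over the base point $x$ cannot be dropped: after taking expectations and using covariance, it is precisely this $x$-sum which, for fixed $u,v$, becomes the \emph{unnormalized} sum over $a'\in\Z^{2*}$ of the geometric factors at $(0,u,v)$ (substitute $a'=a-x$), and Connes' formula \eqref{connes} evaluates that convergent sum as exactly $-2\pi i(u_1v_2-u_2v_1)$ --- each factor decays like $|a'|^{-1}$ by \eqref{gammaest}, so the product is absolutely summable over $\Z^{2*}$. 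Consequently your proposed identity $\frac1{L^2}\sum_{a\in\Lambda_L^*}(\cdots)\to c\,(u_1v_2-u_2v_1)$ is false: the $L^{-2}$-normalized average of a convergent sum tends to $0$, and averaging the ($a$-independent) index over $a$ cannot substitute for the sum over $x$. Two further points. First, you identify $\sigma_H$ with $\E\{\mathrm{Index}(P_\omega,\Gamma_aP_\omega\Gamma_a^*)\}$, but the expectation of an integer-valued random variable need not be an integer; you must first show the index is almost surely constant in $\omega$, which follows from its independence of $a$, the covariance relation \eqref{covariance}, and ergodicity. Second, the interchange of $\E$, $\sum_{u,v}$ and $\sum_x$ that you flag as ``the main obstacle'' is where the paper does real work (splitting one geometric factor into a $q/r$ power, $\frac1r+\frac2q=1$, so the triple sum is dominated by $\ell_q^{2+q/r}$ times a convergent lattice sum); naming the obstacle is not the same as overcoming it.
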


\begin{proof}
Let  $(B,\lambda,E) \in \Xi_q$ 
for some $q \in (2,3]$, and write $P_\omega$ for $P_{B,\lambda,E,\omega}$. 
As in  \cite{ASS,AG}, we prove that
for all $a \in  \Z^{2*}$ we have
\begin{equation}\label{|Index|}
\E\left( \left \lVert 
P_{\omega} - \Gamma_a P_\omega \Gamma_a^* \right\rVert_3\right) < \infty, 
\end{equation}
and hence for $\P$-a.e. $\omega$ 
the index of the orthogonal projections  $P_{\omega}$ and
$\Gamma_a P_\omega \Gamma_a^*$ (see \cite[Section 2]{ASS}), 
$\text{Index} (P_{\omega},\Gamma_a P_\omega \Gamma_a^* )  $, is the  finite 
integer given by
\begin{equation}\label{Index}
\text{Index} (P_{\omega},\Gamma_a P_\omega \Gamma_a^* ) = \tr 
\left(P_{\omega} - \Gamma_a P_\omega \Gamma_a^* \right)^3.
\end{equation}
Note that   $\text{Index} (P_{\omega},\Gamma_a P_\omega \Gamma_a^* )$
is independent of $a \in  \Z^{2*}$ \cite[Proposition 3.8]{ASS}, and hence it follows from 
the covariance relation \eqref{covariance} and properties of the index
(use \cite[Proposition 2.4]{ASS}) that for all $b \in  \Z^{2}$ we have
\begin{equation}\begin{split}
&\text{Index} (P_{\tau_b\omega},\Gamma_a P_{\tau_b\omega} \Gamma_a^* )
=\text{Index}
(U_b P_{\omega}U_b^*,\Gamma_a U_b P_{\omega}U_b^* \Gamma_a^* )\\
& \qquad = 
\text{Index} (P_{\omega},\Gamma_{a+b} P_\omega \Gamma_{a+b}^* )
= \text{Index} (P_{\omega},\Gamma_{a} P_\omega \Gamma_a^* ).
\end{split}\end{equation}
Since $\text{Index} (P_{\omega},\Gamma_a P_\omega \Gamma_a^* )$
is a measurable function by \eqref{Index}, it follows from
ergodicity that  it must be constant almost surely 
(see \cite[Proposition 8.1]{ASS}). 
In particular, this constant must be an integer, and, since constants are integrable,
\begin{equation}
\E \left\{\text{Index} (P_{\omega},\Gamma_a P_\omega \Gamma_a^* ) \right\}
=\text{Index} (P_{\omega},\Gamma_a P_\omega \Gamma_a^* ) 
\quad \text{for  $\P$-a.e. $\omega$}.
\end{equation}
is an integer, and the lemma will follow if we show
\begin{equation}\label{sigmaindex}
\sigma_H(B,\lambda,E) = \E \left\{
\text{Index} (P_{\omega},\Gamma_a P_\omega \Gamma_a^* ) \right\}.
\end{equation}

Let $T_\omega = P_{\omega} - \Gamma_a P_\omega \Gamma_a^* $.
We have
\begin{equation}
\left \lVert T_\omega \right\rVert_q \le
\sum_{y \in \Z^2}  \left \lVert \sum_{x \in \Z^2} 
\chi_{x +y}T_\omega \chi_x \right\rVert_q ,
\end{equation}
where
\begin{equation}\begin{split}
&\left \lVert \sum_{x \in \Z^2}  \chi_{x +y}T_\omega \chi_x \right\rVert_q^q
= \tr \left\lvert \sum_{x \in \Z^2}
\chi_x T_\omega^*\chi_{x +y}T_\omega \chi_x \right\rvert^{\frac q2}\\
&\qquad=\sum_{x \in \Z^2} \tr \left\lvert 
\chi_x T_\omega^*\chi_{x +y}T_\omega \chi_x \right\rvert^{\frac q2}
= \sum_{x \in \Z^2} \left \lVert   \chi_{x +y}T_\omega \chi_x \right\rVert_q^q,
\end{split}\end{equation}
and hence
\begin{equation}\label{AG411}
\left \lVert T_\omega \right\rVert_q \le
\sum_{y \in \Z^2}
\left( \sum_{x \in \Z^2} \left \lVert   \chi_{x +y}T_\omega \chi_x 
\right\rVert_q^q \right)^{\frac1q},
\end{equation}
which is the extension of \cite[Lemma 1]{AG} to the continuum.  
Note that if the right hand side of \eqref{AG411} is finite, then 
\begin{equation}\label{AG4119}
T_\omega = 
\sum_{y \in \Z^2}
\left( \sum_{x \in \Z^2}   \chi_{x +y}T_\omega \chi_x \right) \quad
\text{in $\T_q$},
\end{equation}
where  $\T_q$ is the Banach space of
compact operators with the norm $\| \ \|_q$, in the sense that for each
${y \in \Z^2}$
the series $ \sum_{x \in \Z^2} 
\chi_{x +y}T_\omega \chi_x  $ converges in $\T_q$, to, say, $T^{(y)}$ (but the series is not 
necessarily absolutely summable),  the series $\sum_{y \in \Z^2} T^{(y)}$
converges absolutely in $\T_q$, and $T= \sum_{y \in \Z^2} T^{(y)}$.

It follows from  \eqref{gammaest} that
\begin{equation}
\qnorm{ \chi_{x +y}T_\omega \chi_x }_q \le   
4\tfrac {\left\lvert {y}\right \rvert}
{\left\lvert {{x} - a}\right \rvert}
\qnorm{ \chi_{y}P_\omega \chi_0 }_q,
\end{equation}
and hence
\begin{equation}\begin{split} \label{|Index|2}
&\E \left(\left \lVert T_\omega \right\rVert_q \right) 
\le \sum_{y \in \Z^2}
\left( \sum_{x \in \Z^2}\qnorm{\chi_{x +y}T_\omega \chi_x }_q^q
\right)^{\frac1q}\\ 
& \quad \le  4 
\left(\sum_{x \in \Z^2} 
\tfrac 1 {\left\lvert {{x} - a}\right \rvert^q} \right)^{\frac1q}
\ell_q(B,\lambda,E)  < \infty,
\end{split}\end{equation}
where we used  $q >2$.
Since we also have  $q \le 3$, and $\|S \|_r \le \|S\|_s$ for any
 $ 1 \le s \le r < \infty$,
we note that \eqref{|Index|} follows from \eqref{|Index|2}.

It remains to prove \eqref{sigmaindex}. To do so, note that it follows from
\eqref{AG4119} and \eqref{Index} that
\begin{equation}\label{traceT3}
\text{Index} (P_{\omega},\Gamma_a P_\omega \Gamma_a^* ) =
\tr\,  T_\omega^3 = \sum_{u,v \in \Z^2}
 \left\{\sum_{x \in \Z^2}\tr \left(\chi_x T_\omega \chi_{x+u}
T_\omega \chi_{x+v} T_\omega \chi_{x}\right)\right\} 
\end{equation}
where the series in $x$ is at first only known to be convergent for each $u,v$, 
but not absolutely convergent, to, say, $\zeta(u,v)$,
and $ \sum_{u,v \in \Z^2}| \zeta(u,v) | < \infty   $.

To show that the series is actually absolutely convergent, we let $r$ be given
by  $\frac 1 r +\frac2 q =1$, so in particular $ q <r $, and note that, using \eqref{gammaest}, we have
\begin{align}
\sum_{u,v,x \in \Z^2}
&\E \left\{\tr \left\lvert\chi_x T_\omega \chi_{x+u}
T_\omega \chi_{x+v} T_\omega \chi_{x}\right\rvert\right\}\\
&\notag  \le
\sum_{u,v,x \in \Z^2} \qnorm{\chi_0 P_\omega \chi_{u}
P_\omega \chi_{v} P_\omega \chi_{0}}_1\tfrac {4|u|}{|x-a|}
\left\{2^{1 -\frac q r}\left(\tfrac {4|u-v|}{|x+u-a|}\right)^{\frac q r}\right\} 
\tfrac {4|v|}{|x-a|}\\
&\notag \le
64\sum_{u,v \in \Z^2} |u| |u-v|^{\frac q r} |v| \qnorm{\chi_0 P_\omega \chi_{u}
P_\omega \chi_{v} P_\omega \chi_{0}}_1
\sum_{a \in \Z^{2*}}\tfrac {1}{{|a|^2} {|u-a|^{\frac q r}} }< \infty,
\end{align}
since
\begin{equation}
\sum_{a \in \Z^{2*}}\tfrac {1}{{|a|^2} {|u-a|^{\frac q r}} }
\le \left(\sum_{a \in \Z^{2*}}\tfrac {1}{|a|^{\frac {6r}{3r-q}}}\right)^
{\frac{3r -q}{3r} }
\left(\sum_{a \in \Z^{2*}}\tfrac {1}{|a|^3}\right)^{\frac q {3r}} < \infty,
\end{equation}
and
\begin{align}\notag
&\sum_{u,v \in \Z^2}\!\! |u| |u-v|^{\frac q r} |v| \qnorm{\chi_0 P_\omega \chi_{u}
P_\omega \chi_{v} P_\omega \chi_{0}}_1\le 
\left\{\!\sup_{x \in \Z^2} |x|^{\frac q r} \qnorm{ \chi_{x}P_\omega \chi_0 
}_r \!\right\}\!\left\{\ell_q(B,\lambda,E)\right\}^2 \\
&  \quad \le 
\left\{\sup_{x \in \Z^2}  |x|\qnorm{ \chi_{x}P_\omega \chi_0 
}_q\right\}^{\frac q r}
\left\{\ell_q(B,\lambda,E)\right\}^2 
\le \left\{\ell_q(B,\lambda,E)\right\}^{2+ \frac q r}< \infty.
\end{align}

We can thus take expectations in \eqref{traceT3} obtaining
\begin{align}\label{traceT34} 
&\E \left\{
\text{Index} (P_{\omega},\Gamma_a P_\omega \Gamma_a^* ) \right\}
=
\sum_{u,v \in \Z^2} \E \left\{\tr \left(\chi_0 P_\omega \chi_{u}
P_\omega \chi_{v} P_\omega \chi_{0}\right)\right\} \times \\
&\qquad \times
\sum_{x \in\Z^2} (1- \gamma_a(x)\overline{\gamma_a}(x+u) ) 
(1- \gamma_a(x+u)\overline{\gamma_a}(x+v)) (1- \gamma_a(x+v)\overline{\gamma_a}(x)).\notag
\end{align}
On the other hand,
\begin{align}\label{connes}
&\sum_{x \in\Z^2}(1- \gamma_a(x)\overline{\gamma_a}(x+u) ) 
(1- \gamma_a(x+u)\overline{\gamma_a}(x+v)) (1- \gamma_a(x+v)\overline{\gamma_a}(x))\\
&\notag \quad= \sum_{a \in\Z^{2*}} (1- \gamma_a(0)\overline{\gamma_a}(u) ) 
(1- \gamma_a(u)\overline{\gamma_a}(v)) (1- \gamma_a(v)\overline{\gamma_a}(0))=- 2\pi i (u_1v_2 -u_2v_1)
\end{align}
by Connes formula as in \cite[Appendix F]{AG} -- see also 
\cite[Eqs. (4.14) and (5.1)]{AG}.

Thus \eqref{sigmaindex} follows from  \eqref{traceT34}, \eqref{connes},
and \eqref{uvsigma}.
\end{proof}

This completes the proof of Theorem~\ref{sigmathm}.

\section{Continuity of the Hall conductance}\label{seccontHall}

\subsection{Ergodic Landau Hamiltonians}\label{subsecerg}

Theorem~\ref{sigmathmN0}  follows immediately from the following   theorem.

\begin{theorem}\label{sigmathmN}  Let  $H_{B,\lambda,\omega}$ be an ergodic  Landau Hamiltonian.  If for a given $(B,\lambda) \in (0,\infty)\times [0,\infty)$  the integrated density of states $N\up{B,\lambda}(E)$ is continuous in $E$,  then
 the Hall conductance $\sigma_H\up{B,\lambda}(E)$ is  continuous on  $\Xi_{(2,\infty)+}\bp{B,\lambda}$.  In particular,
$\sigma_H\up{B,\lambda}(E)$ is constant on each connected component 
of $\Xi_{(2,3]+}\bp{B,\lambda}$.
\end{theorem}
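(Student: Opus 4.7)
The plan is to establish continuity of $\sigma_H\up{B,\lambda}$ at an arbitrary $E_0 \in \Xi_{(2,\infty)+}\bp{B,\lambda}$; the local constancy assertion on $\Xi_{(2,3]+}\bp{B,\lambda}$ then follows from this continuity together with Theorem~\ref{sigmathm}, which guarantees that $\sigma_H(B,\lambda,E)$ is integer valued for every $(B,\lambda,E)$ with $E\in \Xi_{(2,3]+}\bp{B,\lambda}$ (any continuous $\Z$-valued function on an open subset of $\R$ is locally constant).

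By the definition of $\Xi_{(2,\infty)+}\bp{B,\lambda}$, I can pick $q_0>2$ and a bounded open interval $I\ni E_0$ with $M:=\sup_{E\in I}\ell_{q_0}(B,\lambda,E)<\infty$. Since $\kappa_r(B,\lambda,\cdot)$ is locally bounded on $\Xi$ for every $r\in[1,\infty]$, the quantities appearing in the bound of Theorem~\ref{sigmathm} are uniformly bounded on $\{B\}\times\{\lambda\}\times I$. For $E\in I$ set $Q_{E,\omega}:=P_{B,\lambda,E,\omega}-P_{B,\lambda,E_0,\omega}$; up to a sign, this is the spectral projection of $H_{B,\lambda,\omega}$ onto the interval between $E$ and $E_0$, so that $|Q_E|$ is itself a projection. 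Using \eq{IDS11} and covariance one gets
\beq
\qnorm{\chi_u Q_{E}}_2^2=\E\set{\tr(Q_E\chi_u Q_E)}=\abs{N\up{B,\lambda}(E)-N\up{B,\lambda}(E_0)}\quad\text{for every }u\in\Z^2,
\eeq
and since $\qnorm{\chi_u Q_E}_\infty\le 1$, the interpolation \eq{qcomp5} yields
\beq
\qnorm{\chi_u Q_E\chi_v}_r\le\qnorm{\chi_u Q_E}_r\le\abs{N\up{B,\lambda}(E)-N\up{B,\lambda}(E_0)}^{1/r}\quad\text{for all }r\ge 2,\; u,v\in\Z^2,
\eeq
which tends to $0$ as $E\to E_0$ by continuity of $N\up{B,\lambda}$.

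Writing $P_{B,\lambda,E,\omega}=P_{B,\lambda,E_0,\omega}+Q_{E,\omega}$ in the representation \eq{uvsigma} of $\sigma_H(B,\lambda,E)$ and subtracting the corresponding expression at $E_0$, I expand $\sigma_H\up{B,\lambda}(E)-\sigma_H\up{B,\lambda}(E_0)$ as a sum of seven terms, each of the form $-2\pi i\sum_{u,v\in\Z^2}(u_1v_2-u_2v_1)\,\E\set{\tr(\chi_0 A_1\chi_u A_2\chi_v A_3\chi_0)}$, where each $A_j\in\set{P_{B,\lambda,E_0,\omega},\,Q_{E,\omega}}$ and at least one $A_j$ equals $Q_E$. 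Each such term is bounded by the three-fold Hölder inequality \eq{Holdersineq} with exponents $(r,q_0,q_0)$, $\tfrac1r+\tfrac2{q_0}=1$, placing $Q_E$ in the factor with exponent $r$ and the two $P_{E_0}$'s with exponent $q_0$. The manipulations in the proof of Lemma~\ref{lemmacond} provide absolute summability in $u,v$ through the uniform bound $\sup_I\ell_{q_0}\le M$, while the $Q_E$ factor contributes $\abs{N\up{B,\lambda}(E)-N\up{B,\lambda}(E_0)}^{1/r}$ by the interpolation above. Summing the seven contributions gives a Hölder-type estimate
\beq
\abs{\sigma_H\up{B,\lambda}(E)-\sigma_H\up{B,\lambda}(E_0)}\le C_{E_0}\,\abs{N\up{B,\lambda}(E)-N\up{B,\lambda}(E_0)}^{1/r},
\eeq
which proves continuity at $E_0$.

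The principal technical difficulty lies in organizing each term of the expansion so that the $Q_E$-factor is always assigned a Hölder exponent $\ge 2$: this is automatic when $q_0\in(2,4]$ (then $r=q_0/(q_0-2)\ge 2$), but for $q_0>4$ one needs to exploit the identity $Q_E^2=\pm Q_E$ to insert an extra copy of $Q_E$ and redistribute Hölder exponents, at the cost of splitting a $P_{E_0}$ factor further. Once this combinatorial bookkeeping is carried out, everything else—checking summability in $u,v$, summing the seven terms, and passing from continuity to local constancy via the integer-valuedness from Theorem~\ref{sigmathm}—is routine.
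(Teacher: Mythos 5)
Your overall strategy is the paper's: Theorem~\ref{sigmathmN} is obtained there from Lemma~\ref{lemcont}, which bounds $\abs{\sigma_H(E)-\sigma_H(E_0)}$ by a power of $\sup_u\qnorm{\chi_0(P_{E}-P_{E_0})\chi_u}_1$ times $\ell_{q+}^2$, combined with $\qnorm{\chi_0(P_{E_2}-P_{E_1})\chi_u}_1\le N(E_2)-N(E_1)$; the passage to local constancy via integer values is identical. However, your execution has a genuine gap in the summability step. In the four of your seven terms where $Q_E$ occupies an \emph{outer} position (adjacent to $\chi_0\cdots\chi_u$ or $\chi_v\cdots\chi_0$), placing $Q_E$ in the H\"older slot with exponent $r$ and using only the uniform bound $\qnorm{\chi_0Q_E\chi_u}_r\le\abs{N(E)-N(E_0)}^{1/r}$ destroys all decay in $u$ from that factor, and the weight $\abs{u_1v_2-u_2v_1}$ then forces a second moment $\sum_v\abs{v}^2\qnorm{\chi_vP_{E_0}\chi_0}_{q_0}$, which $\ell_{q_0}<\infty$ does not control: the double sum diverges as written. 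The missing (and non-routine) idea is to exploit the antisymmetry of the weight, rewriting $u_1v_2-u_2v_1=(u_1-v_1)v_2-(u_2-v_2)v_1$ (or its analogues) so that the two first-moment weights always attach to the two factors estimated in $\qnorm{\cdot}_{q_0}$ while the factor supplying smallness carries no weight; this is exactly what the paper's estimate \eq{sigma1} does. One must also note that for the terms with two or three $Q_E$'s only one copy can supply smallness, the others being summed via $\sum_u\abs{u}\qnorm{\chi_0Q_E\chi_u}_{q_0}\le\ell_{q_0}(E)+\ell_{q_0}(E_0)$. All seven terms can be repaired this way, but the bookkeeping you defer is precisely the content of the estimate.

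Your second difficulty, the restriction $r\ge2$ (hence $q_0\le4$) and the proposed $Q_E^2=\pm Q_E$ workaround for $q_0>4$, is self-inflicted and should be removed rather than patched. Interpolate from the trace norm instead of the Hilbert--Schmidt norm: by Cauchy--Schwarz and covariance,
\begin{equation*}
\qnorm{\chi_uQ_E\chi_v}_1\le\qnorm{\chi_u\abs{Q_E}^{1/2}}_2\,\qnorm{\abs{Q_E}^{1/2}\chi_v}_2=\qnorm{\chi_0Q_E\chi_0}_1=\abs{N\up{B,\lambda}(E)-N\up{B,\lambda}(E_0)},
\end{equation*}
and then \eq{qcomp} with $\qnorm{\chi_uQ_E\chi_v}_\infty\le1$ gives $\qnorm{\chi_uQ_E\chi_v}_r\le\abs{N\up{B,\lambda}(E)-N\up{B,\lambda}(E_0)}^{1/r}$ for \emph{every} $r\in[1,\infty)$. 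This is what the paper does (the exponent $\frac1p$ in Lemma~\ref{lemcont}), it eliminates the case distinction at $q_0=4$ entirely, and it yields the H\"older modulus claimed in \eq{contestimate}.
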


To prove Theorem~\ref{sigmathmN} we will use the following lemma. 

\begin{lemma}\label{lemcont}
Let  $(B,E,\lambda) \in  \Xi_{q+}$ with   $q \in{(2,\infty)}$; set  $\frac 1 p + \frac 2 q =1$. Then there exists a neighborhood
$\Phi$   of $(B,E,\lambda)$ in $\Xi$, such that $\Phi \subset  \Xi_{q+}$,  and for all   $(B^\prime,\lambda^\prime,E^\prime) \in\Phi$  we have, with  $\sigma_H, \sigma_H^\prime,P_\omega, P_\omega^\prime$ for  
$\sigma_H(B,\lambda,E), \sigma_H(B^\prime,\lambda^\prime,E^\prime)$,
$P_{B,\lambda,E,\omega},P_{B^\prime,\lambda^\prime,E^\prime,\omega}$,
respectively.
\begin{equation}\label{contestimate}
\abs{ 
\sigma_H^\prime- \sigma_H} \le  C_{B,\lambda,E,q}  \left\{\sup_{u \in \Z^2}\qnorm{
\chi_0  \left(P_{\omega}^\prime -P_{\omega}\right) \chi_u}_1^{\frac 1p}\right\} \left\{\ell_{q+}(B,\lambda,E) \right\}^2.
\end{equation}
\end{lemma}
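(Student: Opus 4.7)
Starting from the absolutely convergent series representation \eqref{uvsigma} established in Lemma~\ref{lemmacond}, I would set $Q:=P_{B',\lambda',E',\omega}-P_{B,\lambda,E,\omega}$ and telescope
\[
P'\chi_u P'\chi_v P' - P\chi_u P\chi_v P
= Q\chi_u P'\chi_v P' + P\chi_u Q\chi_v P' + P\chi_u P\chi_v Q,
\]
so that $\sigma_H(B',\lambda',E')-\sigma_H(B,\lambda,E)$ decomposes into three $(u,v)$-series, each carrying a single copy of $Q$ weighted by the antisymmetric factor $u_1v_2-u_2v_1$.

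For each summand $\chi_0 A_1\chi_u A_2\chi_v A_3\chi_0$ I would insert $\chi_u^2=\chi_u$ and $\chi_v^2=\chi_v$, apply the three-factor H\"older inequality~\eqref{Holdersineq} with exponents $(q,p,q)$ -- placing the $p$-norm on the factor containing $Q$ -- and use the covariance relation~\eqref{covariance} to replace $\qnorm{\chi_u M\chi_v}_q$ by $\qnorm{\chi_0 M\chi_{v-u}}_q$ (applied separately to $P$ at field $B$ and to $P'$ at field $B'$). The arithmetic identity $\det(u,v)=\det(u,v-u)$ gives the sharp bound $|u_1v_2-u_2v_1|\le \min(|u|,|v|)\,|v-u|$. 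For the two outer telescoped terms, choosing either $|v|\,|v-u|$ or $|u|\,|v-u|$ and changing variable $w=v-u$ factors the double series as
\[
\sup_w \qnorm{\chi_0 Q\chi_w}_p\cdot\Big(\sum_w |w|\,\qnorm{\chi_0 P\chi_w}_q\Big)\Big(\sum_v |v|\,\qnorm{\chi_0 P'\chi_v}_q\Big)
\le \sup_w \qnorm{\chi_0 Q\chi_w}_p\cdot \ell_q(B,\lambda,E)\,\ell_q(B',\lambda',E').
\]
The elementary interpolation $\qnorm{\chi_0 Q\chi_w}_p\le 2^{1-1/p}\qnorm{\chi_0 Q\chi_w}_1^{1/p}$ (using $\|Q\|\le 2$) then produces the advertised $1/p$ exponent, and shrinking $\Phi$ so that $\ell_q(B',\lambda',E')\le 2\,\ell_{q+}(B,\lambda,E)$ on $\Phi$ -- possible by definition of $\ell_{q+}$ -- closes these two contributions.

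The main obstacle is the middle telescoped term $P\chi_u Q\chi_v P'$: the naive use of $|\det(u,v)|\le\min(|u|,|v|)\,|v-u|$ after extracting $\sup_w\qnorm{\chi_0 Q\chi_w}_p$ forces either a divergent $\sum_u|u|$ or an uncontrolled $\sum_u|u|^2\,\qnorm{\chi_0 P\chi_u}_q$. My plan to resolve this is to exploit the sharper symmetric bound $|u_1v_2-u_2v_1|\le |v-u|\sqrt{|u||v|}$, rearrange by $w=v-u$, and apply Cauchy--Schwarz in the $u$-sum so that the factors $\sqrt{|u|}$ and $\sqrt{|v|}=\sqrt{|u+w|}$ redistribute to the $P$ and $P'$ sums; the resulting two factors are then bounded by $\ell_q$ and $\ell_q(P')$ via the pointwise inequality $|u|\,\qnorm{\chi_0 P\chi_u}_q\le \sum_{u'}|u'|\,\qnorm{\chi_0 P\chi_{u'}}_q=\ell_q$ combined with the self-dominance $\sum_u a_u^2\le(\sum_u a_u)^2$ applied to $a_u=|u|\,\qnorm{\chi_0 P\chi_u}_q$. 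A secondary technical point is that direct covariance on $Q$ requires $B=B'$; when $B'\neq B$ one uses $U_a^{(B)}\chi_0(U_a^{(B)})^*=\chi_a$ (valid for every $B$) to apply covariance separately to $P$ and $P'$ and absorb the residual phase $U_u^{(B')}(U_u^{(B)})^*=e^{-i\frac{B'-B}{2}(X_2u_1-X_1u_2)}$, which remains uniformly close to the identity on each $\chi_u$-cube throughout a sufficiently narrow $B$-neighborhood.
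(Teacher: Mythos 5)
Your overall route -- telescoping the triple product inside the absolutely convergent $(u,v)$-series of \eqref{uvsigma}, applying H\"older with exponents $(q,p,q)$ so that the single $Q:=P'-P$ block carries the $p$-norm, using the determinant identities to place the weights, interpolating $\qnorm{\chi_0 Q\chi_w}_p\le 2^{1-1/p}\qnorm{\chi_0 Q\chi_w}_1^{1/p}$, and shrinking $\Phi$ so that $\ell_q(B',\lambda',E')\le 2\,\ell_{q+}(B,\lambda,E)$ -- is essentially the paper's proof (the paper telescopes at the level of the double commutator with $\hat X_1,\hat X_2$ before expanding, which produces the same three families of terms, and only writes out the $\sigma_1$ estimate in detail). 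Your treatment of the two outer terms is correct. The genuine gap is in the middle term $\sum_{u,v}(u_1v_2-u_2v_1)\,\E\tr\{\chi_0 P\chi_u Q\chi_v P'\chi_0\}$. With your bound $|u_1v_2-u_2v_1|\le|v-u|\sqrt{|u||v|}$ the factor $|v-u|$ is glued to the separation of the $Q$-block; once you replace that block by $\sup_w\qnorm{\chi_0 Q\chi_w}_p$ (which you must, since that is the only quantity appearing in \eqref{contestimate}), Cauchy--Schwarz in $u$ redistributes only $\sqrt{|u|}$ and $\sqrt{|u+w|}$ and you are left with a free factor $\sum_{w\in\Z^2}|w|=\infty$: there is no decaying function of $w=v-u$ remaining to make that sum converge, and keeping $|w|$ attached to the $Q$-block instead would require $\sum_w|w|\qnorm{\chi_0 Q\chi_w}_p$, which is not controlled by the hypotheses. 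The obstacle you flag is, however, illusory: for the middle term the correct choice is the most naive bound of all, $|u_1v_2-u_2v_1|\le|u|\,|v|$, which places the two weights exactly on the outer blocks $\chi_0 P\chi_u$ and $\chi_v P'\chi_0$ (whose separations from the origin are $u$ and $v$), giving directly $\sup_w\qnorm{\chi_u Q\chi_v}_p\cdot\ell_q(B,\lambda,E)\,\ell_q(B',\lambda',E')$ with no Cauchy--Schwarz at all. The three determinant rewritings $\det(u,v)=\det(u-v,v)=\det(u,v-u)$ together with $|\det(u,v)|\le|u||v|$ are precisely what lets each of the three telescoped terms put both weights on its two non-$Q$ blocks.

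A secondary but real problem is your handling of $B'\ne B$. The residual phase $U_u^{(B')}(U_u^{(B)})^*=\e^{-i\frac{B'-B}{2}(x_2u_1-x_1u_2)}$ is \emph{not} uniformly close to the identity on the relevant cubes: its exponent grows linearly in $|u|$ and in the distance of the cube from the origin, so no shrinking of the $B$-neighborhood makes it small uniformly over the $(u,v)$-sum. The correct observation is that no smallness is needed: covariance is applied to $P$ and $P'$ \emph{separately}, each with its own magnetic translations, and since the residual phase is a unitary multiplication operator it commutes with every $\chi_x$ and conjugation by it leaves all the norms $\qnorm{\cdot}_q$ invariant; hence $\qnorm{\chi_u P'\chi_v}_q=\qnorm{\chi_0 P'\chi_{v-u}}_q$ exactly. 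For the outer telescoped terms the $Q$-block sits adjacent to $\chi_0$ (as $\chi_0 Q\chi_u$ or $\chi_v Q\chi_0=(\chi_0 Q\chi_v)^*$), so no covariance is ever applied to the non-covariant difference $Q$; only for the middle term does the block $\chi_u Q\chi_v$ with $u,v\ne 0$ appear, and there one should either move $Q$ next to $\chi_0$ by cyclicity of the trace before estimating, or bound $\qnorm{\chi_u Q\chi_v}_p$ directly rather than through a translated copy.
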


\begin{proof}
Given   $(B,E,\lambda) \in  \Xi_{q+}$ with   $q \in{(2,\infty)}$, there exists a neighborhood
$\Phi$   of $(B,E,\lambda) $ in $\Xi$
such that
\begin{equation} \label{qcond7}
\ell_q(B^\prime,\lambda^\prime,E^\prime)\le 
2\ell_{q+}(B,\lambda,E)<\infty
\end{equation} 
for any $(B^\prime,\lambda^\prime,E^\prime) \in \Phi$. 
(It follows that $\Phi \subset  \Xi_{q,+}$.)
We write $\sigma_H, \sigma_H^\prime,P_\omega, P_\omega^\prime$ for  
$\sigma_H(B,\lambda,E), \sigma_H(B^\prime,\lambda^\prime,E^\prime)$,
$P_{B,\lambda,E,\omega},P_{B^\prime,\lambda^\prime,E^\prime,\omega}$,
respectively. Using Lemma~\ref{lemmacond} and \eqref{hatsigmaH}, we have
\begin{align}\label{sig-sig}
\frac i {2\pi} \left(\sigma_H^\prime - \sigma_H\right)& = 
\E \left\{\tr \left\{\chi_0 \left(P_{\omega}^\prime -P_{\omega}\right)
\left[ \left[P_{\omega}^\prime,\hat{X}_1\right], 
\left[P_{\omega}^\prime,\hat{X}_2\right]\right]
\chi_0\right\}\right\}\\\notag
&\qquad +\E \left\{\tr \left\{\chi_0 P_{\omega}
\left[ \left[\left(P_{\omega}^\prime -P_{\omega}\right),\hat{X}_1\right], 
\left[P_{\omega}^\prime,\hat{X}_2\right]\right]
\chi_0\right\}\right\} \\  \notag
&\qquad + \E \left\{\tr \left\{\chi_0 P_{\omega}
\left[ \left[P_{\omega},\hat{X}_1\right], 
\left[\left(P_{\omega}^\prime -P_{\omega}\right),\hat{X}_2\right]\right]
\chi_0\right\}\right\}\\  \notag
& \equiv \sigma_1 + \sigma_2 + \sigma_3, 
\end{align}
where $\sigma_1, \sigma_2 , \sigma_3$ can be shown to be  well defined as in the proof of
Lemma~\ref{lemmacond}, and can be written similarly
to \eqref{uvsigma}.  Thus, with $\frac 1p + \frac 2 q=1$, where $p < \infty$ 
since $q >2$, we have
\begin{align} \notag
&\lvert \sigma_1 \rvert \le
\sum_{u,v\in \Z^2} \lvert (u_1- v_1) v_2 - (u_2-v_2) v_1\rvert
\E\left \{\tr \left \lvert\chi_0 \left(P_{\omega}^\prime -P_{\omega}\right)
\chi_u P_{\omega}^\prime
\chi_v P_{\omega}^\prime \chi_0 \right\rvert\right\}\\
& \quad \le 8 \left\{\sup_{u \in \Z^2}\qnorm{
\chi_0  \left(P_{\omega}^\prime -P_{\omega}\right) \chi_u
}_p\right\} \left\{\ell_{q+}(B,\lambda,E) \right\}^2   \label{sigma1}\\   \notag
& \quad \le 16 \left\{\sup_{u \in \Z^2}\qnorm{
\chi_0  \left(P_{\omega}^\prime -P_{\omega}\right) \chi_u}_1^{\frac 1p}\right\} \left\{\ell_{q+}(B,\lambda,E) \right\}^2,
\end{align}
with similar estimates for $|\sigma_2|$ and $|\sigma_3|$. 
The desired estimate \eq{contestimate}  now follows from
\eqref{sig-sig} and  \eqref{sigma1}.
\end{proof}

\begin{proof}[Proof of Theorem~\ref{sigmathmN}]

In view of Theorem~\ref{sigmathm}, it suffices to show that if for  a  given $(B,\lambda) \in (0,\infty)\times [0,\infty)$  the integrated density of states $N\up{B,\lambda}(E)$ is continuous in $E$,  then
 the Hall conductance $\sigma_H\up{B,\lambda}(E)$ is  continuous on  $\Xi_{(2,\infty)+}\bp{B,\lambda}$.  This follows immediately  from Lemma~\ref{lemcont}, since for $E_1 \le E_2$ we have, for all  $u \in \Z^2$,  
 \beq \begin{split}
 \qnorm{\chi_0  \left(P_{B,\lambda,E_2,\omega} -P_{B,\lambda,E_1,\omega} \right) \chi_u}_1
& \le \qnorm{\chi_0  \left(P_{B,\lambda,E_2,\omega} -P_{B,\lambda,E_1,\omega} \right) \chi_0}_1\\
&= N\up{B,\lambda}(E_2) -N\up{B,\lambda}(E_1).
\end{split}  \eeq
\end{proof}

\subsection{The Anderson-Landau Hamiltonian}\label{secHallALH} 

Theorem~\ref{sigmathm20} follows from the following theorem.

\begin{theorem}\label{sigmathm2}  Let  $H\up{A}_{B,\lambda,\omega}$ be the Anderson-Landau Hamiltonian.  Then
 the Hall conductance $\sigma_H(B,\lambda,E)$ is defined on
$\Xi_{[2,\infty)}$,  
integer valued on $\Xi_{(2,3]}$, and H\"older-continuous on $\Xi_{(2,\infty)+}$.
In particular,
$\sigma_H(B,\lambda,E)$ is constant on each connected component 
of $\Xi_{(2,3]+}$.
\end{theorem}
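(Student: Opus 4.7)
Statements (i) ($\sigma_H$ defined on $\Xi_{[2,\infty)}$) and (ii) (integer-valued on $\Xi_{(2,3]}$) are direct consequences of Theorem~\ref{sigmathm}, which applies to any ergodic Landau Hamiltonian, and once H\"older continuity on the open set $\Xi_{(2,\infty)+}$ has been established, statement (iv) (constancy on the connected components of $\Xi_{(2,3]+}\subset\Xi_{(2,\infty)+}$) follows automatically because a continuous integer-valued function on a connected open set is constant. Thus the only real content of the theorem is to upgrade Theorem~\ref{sigmathmN}, which gave continuity in $E$ alone for fixed $(B,\lambda)$, to joint H\"older continuity in $(B,\lambda,E)$, exploiting the extra structure of the Anderson--Landau model.

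My plan is to apply Lemma~\ref{lemcont} at an arbitrary reference point $(B,\lambda,E)\in\Xi_{(2,\infty)+}$, which reduces the problem to a H\"older bound on $\sup_{u\in\Z^2}\qnorm{\chi_0(P_{B',\lambda',E',\omega}-P_{B,\lambda,E,\omega})\chi_u}_1$ as $(B',\lambda',E')\to(B,\lambda,E)$. H\"older's inequality \eq{Holdersineq} combined with the magnetic-translation identity $U_u^*\chi_u U_u=\chi_0$ and the $\tau$-invariance of $\P$ give $\qnorm{(P'_\omega-P_\omega)\chi_u}_2=\qnorm{(P'_\omega-P_\omega)\chi_0}_2$ for every $u\in\Z^2$, so it suffices to bound the single scalar quantity $\qnorm{(P_{B',\lambda',E',\omega}-P_{B,\lambda,E,\omega})\chi_0}_2^2$. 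I would split this difference through the intermediate point $(B',\lambda',E)$: the pure $E$-increment $P_{B',\lambda',E',\omega}-P_{B',\lambda',E,\omega}$ is itself a spectral projection, so its $\qnorm{\cdot\chi_0}_2^2$ equals the IDS increment $|N\up{B',\lambda'}(E')-N\up{B',\lambda'}(E)|$, bounded by $C|E'-E|$ uniformly for nearby $(B',\lambda')$ thanks to the Lipschitz IDS from \cite{CHK2}. For the $(B,\lambda)$-increment at fixed $E$, I would use smooth functional calculus: introduce $f_\eps\in C^\infty(\R)$ with $f_\eps\equiv 1$ on $(-\infty,E-\eps]$ and supported in $(-\infty,E+\eps]$, and decompose $P_{\cdot,\cdot,E,\omega}=f_\eps(H_{\cdot,\cdot,\omega})+g_\eps(H_{\cdot,\cdot,\omega})$ with $g_\eps$ supported in $[E-\eps,E+\eps]$. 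The $g_\eps$-contribution to $\qnorm{\cdot\chi_0}_2^2$ is $O(\eps)$ by the Lipschitz IDS, while the smooth difference $f_\eps(H_{B',\lambda',\omega})-f_\eps(H_{B,\lambda,\omega})$ is controlled via a Helffer--Sj\"ostrand representation, the resolvent identity, and the local trace-class kernel bounds from \cite{BGKS}. Balancing $\eps$ against $|B'-B|+|\lambda'-\lambda|$ then produces the H\"older exponent in the $(B,\lambda)$-piece, and combining with the $E$-piece feeds back through Lemma~\ref{lemcont} to the H\"older continuity of $\sigma_H$.

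The main obstacle I anticipate is the $B$-variation: $H_B$ depends on $B$ through the unbounded vector potential $\mathbf{A}=\tfrac{B}{2}(x_2,-x_1)$, so $H_{B'}-H_B$ is not a bounded perturbation and the naive resolvent identity $(H_B-z)^{-1}-(H_{B'}-z)^{-1}=(H_{B'}-z)^{-1}(H_{B'}-H_B)(H_B-z)^{-1}$ involves an unbounded middle factor growing like $|x||\nabla|+|x|^2$. I expect this to be resolved by first sandwiching the resolvent difference between the spectral cut-off $f_\eps(H_{\cdot,\cdot,\omega})$ on both sides (so that the relevant spectral range is confined to a bounded interval) and then exploiting the quadratic-form identity $H_{B'}-H_B=(B'-B)\bigl(\mathbf{A}\cdot(-i\nabla)+(-i\nabla)\cdot\mathbf{A}\bigr)+\tfrac{1}{4}(B'^2-B^2)|x|^2$, together with the joint H\"older continuity of the IDS in $(B,E)$ from \cite{CHKR} as an independent input to absorb the $|x|^2$ growth in locally trace-class norms near the origin. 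Once these estimates are assembled and inserted into Lemma~\ref{lemcont}, one obtains an explicit H\"older exponent for $\sigma_H$ on $\Xi_{(2,\infty)+}$, depending on the exponents $p,q$ in Lemma~\ref{lemcont} and on the Helffer--Sj\"ostrand balance.
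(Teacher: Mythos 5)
Your proposal follows essentially the same route as the paper: reduce via Lemma~\ref{lemcont} to a H\"older bound on $\sup_{u}\qnorm{\chi_0(P_\omega'-P_\omega)\chi_u}_1$, split the increment through intermediate points, control the spectral-projection (energy) pieces by the H\"older/Lipschitz continuity of the integrated density of states, and control the smooth functional-calculus piece by Helffer--Sj\"ostrand plus the resolvent bounds of \cite{BGKS}, optimizing the width of the smoothing window. This is precisely the content of the paper's Lemmas~\ref{pjointHolder} and \ref{oldsublemma}; for the $B$-variation (the unbounded-perturbation issue you correctly flag) the paper does not redo the estimate but invokes \cite[Proposition 5.1]{CHKR}, observing that its proof yields the trace-norm bound \eqref{boundgB}.

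One step of your reduction is not quite right as stated: the claim that it suffices to bound $\qnorm{(P_\omega'-P_\omega)\chi_0}_2^2$. In the continuum $\chi_0$ is not Hilbert--Schmidt, so one cannot peel it off to pass from $\qnorm{\chi_0(P_\omega'-P_\omega)\chi_u}_1$ to $\qnorm{(P_\omega'-P_\omega)\chi_0}_2$; the square-root/covariance trick giving $\qnorm{\chi_0 T\chi_u}_1\le\qnorm{\chi_0 T\chi_0}_1$ requires $T\ge 0$, which fails for the full difference of Fermi projections at distinct $(B,\lambda)$. This is repairable, and your own plan effectively repairs it: the energy increments and the $g_\eps$-remainders are sign-definite (so the positivity argument applies, as in \eqref{bounduseW}), while the Helffer--Sj\"ostrand representation bounds $\qnorm{\chi_0\left(f_\eps(H')-f_\eps(H)\right)\chi_u}_1$ directly via $\left\lVert\chi_0 R'(z)\right\rVert_2\left\lVert R(z)\chi_u\right\rVert_2$, with no need to pass through $\qnorm{\cdot\,\chi_0}_2^2$ at all. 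With that adjustment the argument matches the paper's.
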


In view of Theorems~\ref{sigmathm}  and \ref{sigmathmN}, all that remains  to finish the proof of Theorem~\ref{sigmathm2} is to show that for the  Anderson-Landau Hamiltonian
the Hall conductance  $\sigma_H(B,\lambda,E)$ is  H\"older-continuous on $\Xi_{(2,\infty)+}$. 
This will  follow from Lemma~\ref{lemcont} and the following lemma, which improves
 on a result of  Combes, Hislop, Klopp, and Raikov \cite{CHKR}:   the integrated density  of states  of the   Anderson-Landau Hamiltonian
$N(B,\lambda,E)$ is jointly H\"older continuous in $(B,E)$ for $\lambda > 0$.
More precisely, they proved that  given given $\lambda > 0$, 
$\alpha,\delta  \in (0,1)$, and a compact set $Y \subset (0,\infty] \times \R$,
there exists a constant $C_{Y,\alpha,\delta}(\lambda)$ such that
\begin{equation}\label{NEB}
\abs{N(B^\prime,\lambda,E^\prime) - N(B,\lambda,E)}
\le C_{Y,\alpha,\delta}(\lambda)
\left( |B^\prime -B|^{\frac \alpha 4} + |E ^\prime - E|^{\delta } \right) 
\end{equation}
for all $(B,E), (B^\prime,E^\prime) \in Y$,  and the constant 
$ C_{Y,\alpha,\delta}(\lambda)$  is  locally bounded for$\lambda >0$.
(Although the fact that  
$ C_{Y,\alpha,\delta}(\lambda)$ is locally bounded is not explicitly
stated in \cite{CHKR}, it is implicit in the proof.)]
H\"older continuity in the energy was previously known 
in special cases \cite{CH,Wa,HLMW,CHK}. We
strengthen this result, proving  joint H\"older-continuity 
of $\chi_0 P_{B,\lambda,E,\omega}  \chi_0$ {in the
$\qnorm{ \ }_1$ norm with respect to $(B,E,\lambda)$.

\begin{lemma} \label{pjointHolder}   Let  $H\up{A}_{B,\lambda,\omega}$ be the Anderson-Landau Hamiltonian.
Fix  $\alpha,\delta, \eta  \in (0,1)$. Then, given a compact
subset 
$ K $ of  $ \Xi$,
there exists a constant
$C_{K,\alpha,\delta, \eta}$ such that
\begin{equation}\begin{split}
\label{jointHolder}
&\sup_{u \in \Z^2}\qnorm{ \chi_0\left(
P_{B^\prime,\lambda ^\prime,E ^\prime,\omega} - 
P_{B^{\prime\prime},\lambda^{\prime\prime},E ^{\prime\prime},\omega} \right) \chi_u}_1\\
& \qquad   \qquad  \qquad 
\le 
C_{K,\alpha,\delta,\eta}\left( |B^\prime -B|^{\frac \alpha 5}+
|E ^\prime - E^{\prime\prime }|^{\delta } 
+  |\lambda^\prime - \lambda^{\prime\prime }|^{\frac \eta 3}\right) 
\end{split}\end{equation} 
for all  
$(B^\prime,\lambda ^\prime,E ^\prime), (B^{\prime\prime},\lambda^{\prime\prime},E ^{\prime\prime}) 
\in K$.
\end{lemma}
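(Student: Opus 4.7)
The plan is to split the difference by a three-term triangle inequality,
\[
P_{B',\lambda',E',\omega} - P_{B'',\lambda'',E'',\omega} = \Delta_E + \Delta_\lambda + \Delta_B,
\]
where each summand varies only one of the three parameters (the other two being held at the most convenient value), and then estimate the three pieces separately.

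For $\Delta_E$ and $\Delta_\lambda$ the underlying magnetic field is common, so covariance under a single family of magnetic translations $U_u(B)$ can be exploited. For any self-adjoint random operator $A_\omega$, the trace-class H\"older inequality followed by Cauchy--Schwarz in $\P$ yields
\[
\qnorm{\chi_0 A_\omega \chi_u}_1 \le \qnorm{\chi_0 |A_\omega| \chi_0}_1^{1/2}\,\qnorm{\chi_u |A_\omega| \chi_u}_1^{1/2},
\]
and covariance of $|A_\omega|$ under $U_u(B)$ together with $\tau$-invariance of $\P$ equalizes the two factors. For $\Delta_E$, monotonicity of the spectral family then reduces the estimate to $|N(B,\lambda,E') - N(B,\lambda,E'')|$, which is at most $C|E' - E''|^\delta$ by \eqref{NEB}. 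For $\Delta_\lambda$ I would introduce a smooth $f_\eps \in C^\infty(\R)$ that interpolates between $\chi_{(-\infty, E-\eps]}$ and $\chi_{(-\infty, E+\eps]}$ with $\|f_\eps^{(k)}\|_\infty \le C_k \eps^{-k}$, and split
\[
\Delta_\lambda = \bigl[P_{B,\lambda',E} - f_\eps(H_{B,\lambda'})\bigr] - \bigl[P_{B,\lambda'',E} - f_\eps(H_{B,\lambda''})\bigr] + \bigl[f_\eps(H_{B,\lambda'}) - f_\eps(H_{B,\lambda''})\bigr].
\]
The first two brackets are dominated by spectral projections onto windows of width $2\eps$ around $E$ and each contribute $\le C\eps^\delta$ as in the $\Delta_E$ case. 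The third is controlled by the Helffer--Sj\"ostrand formula, since the perturbation $(\lambda' - \lambda'')V_\omega\up{A}$ is bounded; the resolvent identity combined with Combes--Thomas decay of $\chi_a(z - H)^{-1}\chi_b$ delivers $C|\lambda'-\lambda''|\eps^{-N}$ uniformly in $u$ for some fixed $N$. Optimizing $\eps$ between $\eps^\delta$ and $|\lambda'-\lambda''|\eps^{-N}$ then produces the exponent $\eta/3$ after choosing $\delta$ close enough to $1$.

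For the harder piece $\Delta_B$ the covariance reduction fails outright: the two projections transform under \emph{different} magnetic translations $U_u(B')$ and $U_u(B'')$, and no symmetry collapses $\chi_u$ onto $\chi_0$. I would again apply the smoothing decomposition; the two spectral-window pieces each involve a single magnetic field and still contribute $C\eps^\delta$, but the smoothed difference $f_\eps(H_{B',\lambda}) - f_\eps(H_{B'',\lambda})$ is the real obstacle, because $H_{B'} - H_{B''}$ is \emph{unbounded}, growing linearly in $|x|$ through the symmetric-gauge vector potential. The plan is to insert the partition of unity $\sum_a \chi_a = I$ between each resolvent and the perturbation in the Helffer--Sj\"ostrand integrand and use Combes--Thomas exponential decay of $\chi_a(z - H_B)^{-1}\chi_b$, with careful control in $|\mathrm{Im}\,z|$, to beat the polynomial growth of $H_{B'} - H_{B''}$ and retain absolute convergence of the $a$-sum and the $z$-integral uniformly in $u$. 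This should produce a bound of the form $C_{K,\eps}|B'-B''|$ uniform in $u \in \Z^2$, with $C_{K,\eps}$ of inverse-polynomial order in $\eps$; balancing against $\eps^\delta$ then delivers the exponent $\alpha/5$, the extra loss relative to the $\alpha/4$ of \cite{CHKR} being the price paid for the Cauchy--Schwarz step needed to absorb the $\chi_0$--$\chi_u$ sandwich into single-site traces.

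The main obstacle, therefore, will be this last step: producing a Helffer--Sj\"ostrand estimate for $f_\eps(H_{B'}) - f_\eps(H_{B''})$ that is \emph{simultaneously} uniform in $u \in \Z^2$, despite the absence of a common magnetic translation, \emph{and} of controlled inverse-polynomial order in $\eps$, despite the unbounded nature of $H_{B'} - H_{B''}$. Once this is in hand, the full estimate \eqref{jointHolder} follows by the triangle inequality and an $\eps$-optimization in each term, and the constant $C_{K,\alpha,\delta,\eta}$ is locally bounded on $K$ because the constant in \eqref{NEB} and the Combes--Thomas constants are locally bounded on $\Xi$.
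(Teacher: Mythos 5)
Your decomposition and your treatment of the $E$- and $\lambda$-pieces coincide with the paper's argument, which is packaged as Lemma~\ref{oldsublemma}: positivity of the spectral-window pieces plus Cauchy--Schwarz and magnetic covariance collapse $\chi_0(\cdot)\chi_u$ to $\chi_0(\cdot)\chi_0$; the windows are controlled by the H\"older continuity of the integrated density of states from \eqref{NEB}; the smoothed difference in $\lambda$ is handled by Helffer--Sj\"ostrand with a quasi-analytic extension of order $3$ together with the Hilbert--Schmidt resolvent bound \eqref{Rtrace}, giving the power $\gamma^{-2}$; and one optimizes the window width. One bookkeeping remark: rather than ``choosing $\delta$ close enough to $1$'', the clean way to obtain the three stated exponents with three \emph{independent} parameters is to run the window argument with window exponent $\eta$ for the $\lambda$-piece and $\alpha$ for the $B$-piece (legitimate, since \eqref{NEB} gives H\"older continuity in $E$ of every order in $(0,1)$), which yields $\eta/(\eta+2)\ge\eta/3$ and $\alpha/(\alpha+4)\ge\alpha/5$.

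The genuine gap is exactly where you place it: the estimate
\[
\qnorm{\chi_0\left(g(H_{B',\lambda,\omega}) - g(H_{B'',\lambda,\omega})\right)\chi_u}_1 \le \tilde C\,\frac{|B'-B''|}{\gamma^4},
\]
uniformly in $u\in\Z^2$, which is \eqref{boundgB}. Your diagnosis of the difficulty is correct (no common family of magnetic translations; $H_{B'}-H_{B''}$ is a first-order operator whose coefficients grow linearly in $|x|$), and your proposed remedy --- a partition of unity and Combes--Thomas decay inside the Helffer--Sj\"ostrand integrand --- is plausible in outline, but it is not carried out. In particular you do not establish that the resulting inverse power of the window width is $\gamma^{-4}$, which is precisely what the exponent $\alpha/5$ in the statement requires (a larger power would force a strictly smaller H\"older exponent in $B$), and you do not address the conversion of the operator-norm Combes--Thomas bounds into the trace-norm bounds needed for $\qnorm{\cdot}_1$. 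The paper does not prove this estimate from scratch either: it observes that the proof of \cite[Proposition 5.1]{CHKR} (see their Eqs.\ (5.12) and (5.13)) already yields the off-diagonal version \eqref{boundgB} with the power $\gamma^{-4}$. To complete your argument you must either carry out your sketch quantitatively to that precision or, as the paper does, import the estimate from \cite{CHKR}.
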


Lemma~\ref{pjointHolder}   will follow from the above stated result of \cite{CHKR}
and  Lemma~\ref{oldsublemma} below.  Note that if $E^{\prime\prime} \le E^\prime$ we have
$ P_{B,\lambda,E ^\prime,\omega} - P_{B,\lambda,E ^{\prime\prime},\omega}\ge 0 $,
so the hypothesis of  Lemma~\ref{oldsublemma}  follow from \eqref{NEB}.

\begin{lemma}\label{oldsublemma}  Let  $H\up{A}_{B,\lambda,\omega}$ be the Anderson-Landau Hamiltonian. Let $\delta \in (0,1)$. 
Suppose that for every bounded interval 
$I$ and $(B,\lambda) \in (0,\infty)^2$ there exists a constant
$C_{I}(B,\lambda)$, locally bounded in $(B,\lambda)$, such that for
all  $E ^\prime, E^{\prime\prime }\in I$
we have
\begin{equation}\label{Ncont}
\qnorm{\chi_0 \left(
P_{B,\lambda,E ^\prime,\omega} - P_{B,\lambda,E ^{\prime\prime},\omega} \right)
\chi_0}_1
\le 
C_{I}(B,\lambda) |E ^\prime - E^{\prime\prime }|^\delta . 
\end{equation}
Given $ K= [B_1, B_2] \times [\lambda_1,\lambda_2]\times [E_1,E_2] \subset
\Xi   $,
there is a constant $C_{K}$, such that for all $ E \in [E_1, E_2]$ and $u \in \Z^2$
we have
\begin{align}\label{Hcdis}
\qnorm{\chi_0\left(
P_{B,\lambda ^\prime,E,\omega} - 
P_{B,\lambda^{\prime\prime},E,\omega} \right) \chi_u}_1
\le
C_{K}   |\lambda^\prime - \lambda^{\prime\prime} |^{\frac \delta {\delta +2}},
\end{align}
for all $B \in [B_1,B_2]$ and 
$ \lambda ^\prime,  \lambda^{\prime\prime}  \in   [\lambda_1,\lambda_2] $, and
\begin{align}\label{HcB}
\qnorm{\chi_0\left(
P_{B^\prime,\lambda ,E,\omega} - 
P_{B^{\prime\prime},\lambda,E,\omega} \right) \chi_u}_1
\le
C_{K}   |B^\prime-B^{\prime\prime} |^{\frac \delta {\delta +4}},
\end{align}
for all $B^\prime, B^{\prime\prime} \in [B_1,B_2]$ and 
$ \lambda   \in   [\lambda_1,\lambda_2] $.
\end{lemma}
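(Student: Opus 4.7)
The plan is to establish both bounds by the same smoothing argument; I concentrate on the structure rather than the arithmetic of exponents. Fix $\eps>0$ and a monotone non-increasing $g_\eps\in C^\infty(\R)$ interpolating from $1$ to $0$ on $[E-\eps,E]$, with the standard bounds $\|g_\eps^{(k)}\|_\infty\le C_k\eps^{-k}$. Writing $\xi',\xi''$ for the two parameter triples that differ (either in the $\lambda$'s or in the $B$'s), I decompose
\[
P_{\xi',\omega}-P_{\xi'',\omega} \;=\; \bigl[P_{\xi',\omega}-g_\eps(H_{\xi',\omega})\bigr] \,-\, \bigl[P_{\xi'',\omega}-g_\eps(H_{\xi'',\omega})\bigr] \,+\, \bigl[g_\eps(H_{\xi',\omega})-g_\eps(H_{\xi'',\omega})\bigr].
\]

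For the first two summands, the operator $A_\xi := P_{\xi,E,\omega}-g_\eps(H_{\xi,\omega})$ belongs to the functional calculus of a single Hamiltonian $H_{\xi,\omega}$, is non-negative, and is dominated in the operator sense by the spectral projection $P_{\xi,E,\omega}-P_{\xi,E-\eps,\omega}$. Factoring $A_\xi=A_\xi^{1/2}A_\xi^{1/2}$, applying Cauchy--Schwarz in the Hilbert--Schmidt norm, and using magnetic-translation covariance (which gives $\qnorm{\chi_u A_\xi\chi_u}_1=\qnorm{\chi_0 A_\xi\chi_0}_1$) one obtains
\[
\qnorm{\chi_0 A_\xi \chi_u}_1 \,\le\, \qnorm{\chi_0 A_\xi \chi_0}_1 \,\le\, \qnorm{\chi_0 (P_{\xi,E,\omega}-P_{\xi,E-\eps,\omega}) \chi_0}_1 \,\le\, C_K\,\eps^\delta
\]
uniformly in $u\in\Z^2$, where the last step is the hypothesis \eqref{Ncont}.

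The third piece is the principal one, and is handled by the Helffer--Sjöstrand formula. With an almost-analytic extension $\tilde g_\eps$ of order $N$, satisfying $|\bar\partial \tilde g_\eps(z)|\le C_N\eps^{-N-1}|\Im z|^N$ on its support, one writes
\[
g_\eps(H')-g_\eps(H'') \,=\, -\tfrac{1}{\pi}\int \bar\partial\tilde g_\eps(z)\,(H'-z)^{-1}(H'-H'')(H''-z)^{-1}\,dz\,d\bar z.
\]
For case \eqref{Hcdis}, $H_{B,\lambda'}-H_{B,\lambda''}=(\lambda'-\lambda'')V_\omega$ is bounded by $C|\lambda'-\lambda''|$; combined with the standard uniform Hilbert--Schmidt bounds $\qnorm{\chi_x(H_{B,\lambda}-z)^{-1}}_2\le C\langle z\rangle|\Im z|^{-a}$ (valid for bounded-potential Landau Hamiltonians and independent of $x$ by covariance), the choice $N=1$ yields $\qnorm{\chi_0(g_\eps(H_{\lambda'})-g_\eps(H_{\lambda''}))\chi_u}_1\le C_K|\lambda'-\lambda''|\eps^{-2}$. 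Summing with the smoothing error from the previous paragraph and optimizing $\eps=|\lambda'-\lambda''|^{1/(\delta+2)}$ produces \eqref{Hcdis}.

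The difficult case is \eqref{HcB}, and it is the main technical obstacle, because $H_{B'}-H_{B''}$ is unbounded: in the symmetric gauge it contains a first-order piece $\propto (B'-B'')(x_2\partial_1-x_1\partial_2)$ together with a zeroth-order piece $\propto ((B')^2-(B'')^2)|x|^2$ that grows quadratically in $x$. Controlling this requires polynomial-in-$x$ trace-norm estimates for products of localized Landau resolvents, of the form $\qnorm{\chi_0\langle x\rangle^{k}(H_{B,\lambda}-z)^{-m}}_2\le C_{k,m}\langle z\rangle^{\cdot}|\Im z|^{-\cdot}$ for $m$ large enough; such estimates are obtained by combining Combes--Thomas decay of the resolvent with the Hilbert--Schmidt bounds above, in the same spirit as the proof of joint H\"older continuity of the integrated density of states in \cite{CHKR}. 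Choosing $N=3$ in the Helffer--Sjöstrand expansion then yields $\qnorm{\chi_0(g_\eps(H_{B'})-g_\eps(H_{B''}))\chi_u}_1\le C_K|B'-B''|\eps^{-4}$, and optimizing $\eps=|B'-B''|^{1/(\delta+4)}$ gives \eqref{HcB}. The need to take $N=3$ rather than $N=1$ is exactly what produces the exponent $\delta/(\delta+4)$ in place of $\delta/(\delta+2)$, and absorbing the unbounded $|x|^2$-perturbation into products of Landau resolvents while keeping the bounds uniform in $u$ and $\omega$ is the delicate part of the whole argument.
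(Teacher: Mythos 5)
Your proposal follows essentially the same route as the paper: the same three-term decomposition with a smoothed step function at scale $\eps$, the same domination of the two smoothing errors by $P_{E}-P_{E-\eps}$ combined with Cauchy--Schwarz, covariance and the hypothesis \eqref{Ncont}, the same Helffer--Sj\"ostrand treatment of the middle term (with the $B$-case deferred to the resolvent estimates behind \cite[Proposition 5.1]{CHKR}), and the same optimization in $\eps$ yielding the exponents $\delta/(\delta+2)$ and $\delta/(\delta+4)$. The one slip is your claim that an order-$N=1$ almost-analytic extension suffices in the $\lambda$-case: the integrand there is of size $|\Im z|^{N-2}$ near the real axis, so one needs $N\ge 2$ (the paper takes order $3$), though this does not change the $\eps^{-2}$ bound or any of the final exponents.
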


\begin{proof} 
It suffices to consider the case  when
$ B_2 - B_1 <1$ and $ \lambda_2 -\lambda_1 <1$,
We set  $I= \left[E_1 - 1 , E_2\right] $.
Note that
\eqref{Ncont} holds for $(B,\lambda) \in [B_1, B_2] \times [\lambda_1,\lambda_2]$
and $E ^\prime, E^{\prime\prime }\in I$ with 
$C_I \equiv \sup_{(B,\lambda) \in [B_1, B_2] \times [\lambda_1,\lambda_2]}
C_{I}(B,\lambda)  < \infty$.
(This includes the case $\lambda_1=0$ with a slightly modified interval $I$, 
although this case is not included in
the hypothesis \eqref{Ncont}.  The reason is that since $K \subset \Xi$,
if $\lambda_1=0$ the interval 
$[E_1, E_2]$ cannot  contain any Landau level for $B \in [B_1,B_2]$. In this case we set
$I= \left[E_1 - \rho , E_2\right]$, where $0<\rho\le1$ is chosen  so $I$ also does not 
contain a Landau level for some  $B \in [B_1,B_2]$.  The proof applies also in this case
except that we take 
$ B_2 - B_1 <\rho $ and $ \lambda_2 -\lambda_1 <\rho$.)

We  fix a function 
$ f \in C^\infty(\R)$, such that $0 \le f(t) \le 1$,  $f(t)=1$ if $ t \le 0$,
and $f(t) =0$ if $t\ge 1$.

We prove \eqref{Hcdis} first.
Let $ E \in [E_1, E_2] $, $B \in [B_1,B_2]$, and 
$ \lambda ^\prime,  \lambda^{\prime\prime}  \in   [\lambda_1,\lambda_2] $.
We let   $\gamma=  |\lambda^\prime - \lambda^{\prime\prime} |^{\alpha}$,
where $\alpha \in (0,1)$ will be chosen later.
We set $g(t) = f\left( \frac { t- (E-\gamma)} {\gamma}\right)$; note
$ g \in C^\infty(\R)$, with $0 \le g(t) \le 1$, 
$g(t)=1$ if $ t \le
 E - \gamma$,
$g(t) =0$ if $t\ge E$. We write
\begin{align}\label{Pg}
& P_{B,\lambda ^\prime,E ,\omega}-  P_{B,\lambda ^{\prime\prime},E,\omega}= 
\left\{ P_{B,\lambda ^\prime,E ,\omega}-g(H_{B,\lambda ^\prime,\omega})\right\}\\
& \qquad  \qquad  \qquad 
+ \left\{g(H_{B,\lambda ^\prime,\omega}) -
g(H_{B,\lambda  ^{\prime\prime},\omega})\right\} +
\left\{g(H_{B,\lambda  ^{\prime\prime},\omega}) - 
P_{B,\lambda ^{\prime\prime},E,\omega}\right\}\notag .
\end{align}
By construction, for any $\lambda \ge 0$ we have
\begin{equation}
0 \le P_{B,\lambda,E ,\omega} - 
g(H_{B,\lambda ,\omega}) \le 
P_{B,\lambda ,E ,\omega} -  
P_{B,\lambda , E - \gamma,\omega}\, ,
\end{equation} 
and thus, for  $\lambda^\#= \lambda^\prime, \lambda^{\prime\prime}  $
and any  $u \in \Z^2$,
we have
\begin{align} \label{bounduseW}
&\qnorm{ \chi_0\left(
 P_{B,\lambda ^{\#},E ,\omega} - 
g(H_{B,\lambda ^\#,\omega}) \right) \chi_u}_1\\
& \notag \ \le \qnorm{ \chi_0\left(
P_{B,\lambda ^{\#},E ,\omega} - 
g(H_{B,\lambda ^\#,\omega}) \right)^{\frac 12} }_2
\qnorm{ \left(
 P_{B,\lambda ^{\#},E ,\omega} - 
g(H_{B,\lambda ^\#,\omega})  \right)^{\frac 12}  \chi_u
}_2\\
&\ \notag  =\qnorm{\chi_0\left(
 P_{B,\lambda ^{\#},E ,\omega} - 
g(H_{B,\lambda ^\#,\omega}) \right) \chi_0 }_1\\
& \ \notag \le \qnorm{\chi_0\left(
 P_{B,\lambda ^{\#},E ,\omega} - 
P_{B,\lambda ^{\#}, E - \gamma,\omega}\right) \chi_0 }_1 
\le  C_I \gamma^\delta.
\end{align}

We now estimate the middle term in the right hand side of \eqref{Pg}. 
Let $ R_{B,\lambda,B\omega}(z) = \left( H_{B,\lambda,\omega} -z\right)^{-1}$
be the resolvent.  Recall (e.g., \cite{BGKS}) that
\begin{equation}\label{Rtrace}
\left\lVert \chi_v R_{\lambda,B,\omega}(z) \right\rVert_2 \le
c_\lambda \frac { 1 + |z|}{\text{Im} z},
\end{equation}
with a constant $c_\lambda$ independent of $B$,  $v \in \Z^2$, and $\omega$, 
and locally bounded  in $\lambda$.
The
Helffer-Sj\"ostrand formula with a quasi analytic extension of $g$ of order $3$
(e.g., \cite{D}), combined with the resolvent equation and  \eqref{Rtrace},
yields
\begin{equation}\label{boundg}
\qnorm{\chi_0\left( g(H_{B,\lambda ^\prime,\omega}) -
g(H_{B,\lambda  ^{\prime\prime},\omega})\right) \chi_u }_1
\le
C \frac{|\lambda^\prime-\lambda^{\prime\prime}|}{\gamma^2},
\end{equation} 
where the constant $C$ depends only on $E_1,E_2,\lambda_1,\lambda_2$, our choice of the function $f$, and 
fixed parameters.

Thus, combining \eqref{Pg}, \eqref{bounduseW}, and \eqref{boundg}.
we get
\begin{align}\label{Cdeltaplus}
& \qnorm{\chi_0\left(
P_{\lambda ^\prime,E ^\prime,\omega} - 
P_{\lambda^{\prime\prime},E ^{\prime\prime},\omega} \right) \chi_u}_1\le
2   C_I \gamma^\delta +
 C \frac{|\lambda^\prime-\lambda^{\prime\prime}|}{\gamma^2}\\
& \quad = 2   C_I  |\lambda^\prime - 
\lambda^{\prime\prime} |^{\alpha \delta} +
C  |\lambda^\prime - \lambda^{\prime\prime} |^{1 - 2 \alpha} = (2C_I
+C)  |\lambda^\prime - \lambda^{\prime\prime} |^{\frac \delta {\delta +2}},
\notag
\end{align}
where we chose  $\alpha= \frac 1 {\delta +2}$ to
 optimize the bound.

To prove \eqref{HcB}, we start by repeating the above proof 
varying $B$ instead of $\lambda$.  The only difference is in the equivalent
of the estimate \eqref{boundg}.  Here we use \cite[Proposition 5.1]{CHKR},
observing that its proof   (note \cite[Eqs. (5.12) and (5.13)]{CHKR}) actually
proves the stronger result
\begin{equation}\label{boundgB}
\qnorm{\chi_0\left( g(H_{B^\prime,\lambda ,\omega}) -
g(H_{B ^{\prime\prime},\lambda ,\omega})\right) \chi_u }_1
\le
\tilde{C} \frac{|B^\prime-B^{\prime\prime}|}{\gamma^4},
\end{equation} 
where now  $\gamma = |B^\prime-B^{\prime\prime}|^\alpha$, and
the constant $\tilde{C}$ depends only on
$E_1,E_2,\lambda_1,\lambda_2, B_1,B_2$, our choice of the function $f$, and 
fixed parameters.  Proceeding as before, we see that in this case we should 
choose  $\alpha= \frac 1 {\delta +4}$, in which case we get \eqref{HcB}.
\end{proof}

\section{Delocalization for ergodic Landau Hamiltonians with open gaps}\label{secdelergodic}

We now prove  Corollary~\ref{maincorErg0} by proving the following theorem

\begin{theorem} \label{maincorErg} Let  $H_{B,\lambda,\omega}$ be an ergodic  Landau Hamiltonian.  Suppose  the integrated density of states $N\up{B,\lambda}(E)$ is continuous in $E$ for  all $(B,\lambda) \in (0,\infty)\times [0,\infty)$  satisfying  
the disjoint bands condition \eq{gapcond}. Then  for  all such  $(B,\lambda)$  the  ``localization length"
$\ell_{(2,3]+}\bp{B,\lambda}$  diverges near each Landau level:
for
each $n=1,2,\ldots$ there exists an energy 
$E_n(B,\lambda) \in \mathcal{B}_n(B,\lambda) $ such that
\begin{equation}\label{divergent}
\ell_{(2,3]+}\bp{B,\lambda}(E_n(B,\lambda)) = \infty.
\end{equation}
\end{theorem}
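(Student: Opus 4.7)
The plan is to argue by contradiction, combining the integer-valuedness of $\sigma_H$ on $\Xi_{(2,3]}$ from Theorem~\ref{sigmathm} with its continuity on each $\Xi_{(2,\infty)+}\bp{B,\lambda}$ from Theorem~\ref{sigmathmN}, and a deformation in the disorder parameter $\lambda$ to identify the value of $\sigma_H$ in each Landau gap. Fix $n\ge 1$ and $(B,\lambda)$ satisfying the disjoint bands condition~\eqref{gapcond}. By~\eqref{splandau}, the band $\mathcal{B}_n(B,\lambda)$ is sandwiched between two open spectral gaps
\beq
G_-(B,\lambda)=(B_{n-1}+\lambda M_2,\,B_n-\lambda M_1),\quad G_+(B,\lambda)=(B_n+\lambda M_2,\,B_{n+1}-\lambda M_1),
\eeq
with $G_-(B,\lambda)=(-\infty,B_1-\lambda M_1)$ when $n=1$. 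Any energy $E$ in $G_\pm(B,\lambda)$ lies in $\Xi_{\mathrm{NS}}\up{B,\lambda}$, and hence in $\Xi_{(2,3]+}\bp{B,\lambda}$ by~\eqref{NSq}, the inclusion $\Xi_{L_+}\subset\bigcap_{r>2}\Xi_{(2,r]+}$, and the fiberwise reduction noted in Subsection~\ref{subslocleng}; in particular $\sigma_H\up{B,\lambda}(E)$ is a well-defined integer on $G_\pm(B,\lambda)$ by Theorem~\ref{sigmathm}.

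Next, I would pick $E^\pm\in G_\pm(B,\lambda)$ and show $\sigma_H\up{B,\lambda}(E^-)=n-1$ and $\sigma_H\up{B,\lambda}(E^+)=n$ by a deformation in the disorder. For every $\lambda'\in[0,\lambda]$ the condition~\eqref{gapcond} is preserved and $G_\pm(B,\lambda')\supset G_\pm(B,\lambda)$, so $E^\pm$ remains in a spectral gap of $H_{B,\lambda',\omega}$ throughout the deformation. The map $\lambda'\mapsto\sigma_H(B,\lambda',E^\pm)$ is integer-valued by Theorem~\ref{sigmathm} and continuous in $\lambda'$: this is the Str\u{e}da-type constancy of the Hall conductance in a persistent spectral gap, which for a general ergodic potential can be derived from the Kubo formula of~\cite{BGKS} via a Helffer--Sj\"ostrand argument in the spirit of Lemma~\ref{oldsublemma}, and which was obtained in different settings in~\cite{ES,GKS}. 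A continuous integer-valued function on $[0,\lambda]$ being constant, we obtain $\sigma_H(B,\lambda,E^\pm)=\sigma_H(B,0,E^\pm)$, and since the free Landau Hamiltonian takes the classical value $k$ on $(B_k,B_{k+1})$, we conclude $\sigma_H\up{B,\lambda}(E^-)=n-1\neq n=\sigma_H\up{B,\lambda}(E^+)$.

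Finally, suppose for contradiction that $\ell_{(2,3]+}\bp{B,\lambda}(E)<\infty$ for every $E\in\mathcal{B}_n(B,\lambda)$. Together with $G_\pm(B,\lambda)\subset\Xi_{(2,3]+}\bp{B,\lambda}$ from the first paragraph, this places the connected open interval
\beq
G_-(B,\lambda)\cup\mathcal{B}_n(B,\lambda)\cup G_+(B,\lambda)=(B_{n-1}+\lambda M_2,\,B_{n+1}-\lambda M_1)
\eeq
inside the open set $\Xi_{(2,3]+}\bp{B,\lambda}$. By Theorem~\ref{sigmathmN} the Hall conductance $\sigma_H\up{B,\lambda}$ is continuous there, and by Theorem~\ref{sigmathm} it is integer-valued, hence constant on this connected subset, contradicting $\sigma_H\up{B,\lambda}(E^-)\neq\sigma_H\up{B,\lambda}(E^+)$. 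The main obstacle is the deformation step just described: verifying the continuity in $\lambda'$ of $\sigma_H(B,\lambda',E)$ on a persistent spectral gap for a \emph{general} ergodic Landau Hamiltonian (rather than the Anderson--Landau case of~\cite{GKS}). This reduces to controlling $\chi_0 P_{B,\lambda',E,\omega}[[P_{B,\lambda',E,\omega},X_1],[P_{B,\lambda',E,\omega},X_2]]\chi_0$ in $\qnorm{\,\cdot\,}_1$ uniformly in $\lambda'$ with $E$ held fixed inside the gap, using the resolvent bound~\eqref{Rtrace}.
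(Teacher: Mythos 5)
Your argument follows the paper's proof essentially step for step: the open spectral gaps flanking $\mathcal{B}_n(B,\lambda)$ lie in $\Xi_{\text{NS}}\subset\Xi_{(2,3]+}$, the Hall conductance is pinned to the free values $n-1$ and $n$ on those gaps by a deformation in the disorder parameter, and the contradiction comes from integer-valuedness (Theorem~\ref{sigmathm}) together with continuity on the connected interval $(B_{n-1}+\lambda M_2,\,B_{n+1}-\lambda M_1)$ (Theorem~\ref{sigmathmN}). The only place your sketch drifts is the step you yourself flag as the main obstacle. Continuity of $\lambda'\mapsto\sigma_H(B,\lambda',E)$ on the persistent gap does not reduce to controlling $\chi_0 P[[P,X_1],[P,X_2]]\chi_0$ in $\qnorm{\,\cdot\,}_1$ \emph{uniformly} in $\lambda'$: a uniform bound controls the size of $\sigma_H$, not its oscillation, and an integer-valued function that is merely bounded need not be constant. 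What is actually needed, and what the paper uses, is the \emph{difference} estimate of Lemma~\ref{lemcont}, which bounds $\abs{\sigma_{\lambda'}-\sigma_{\lambda''}}$ by $\sup_{u}\qnorm{\chi_0(P_{\lambda'}-P_{\lambda''})\chi_u}_1^{1/p}$ times $\{\ell_{q+}\}^2$; in the persistent gap the difference of Fermi projections is computed exactly from the contour-integral representation and the resolvent identity as in \eqref{taylor}, whose $\qnorm{\,\cdot\,}_1$ norm is $O(\abs{\lambda'-\lambda''})$ by the Hilbert--Schmidt bound \eqref{HSest}. (Your alternative Helffer--Sj\"ostrand route would also work here, since inside a gap one can choose the smooth function to coincide exactly with the spectral projection and no Wegner-type input \`a la Lemma~\ref{oldsublemma} is required, but the target must still be the difference $P_{\lambda'}-P_{\lambda''}$, not the triple commutator itself.) With that substitution your proof coincides with the paper's.
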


We start th eproof of Theorem~\ref{maincorErg}  by 
setting, for $n=1,2,\ldots$,
\begin{align}
\mathbb{G}_n & =\left\{ (B,\lambda,E) \in \Xi;\;  \lambda {(M_1 + M_2)}<  {2B},\, 
E \in (B_{n-1} +\lambda M_2, B_n -\lambda M_1 )\right\}.
 \end{align}
In view of
\eqref{splandau} and \eq{NSq}, we have
\begin{equation}
\bigcup_{n=1}^\infty \mathbb{G}_n = 
\Xi \setminus \bigcup_{B \in (0,\infty) }
\bigcup_{\lambda \in [0,\infty)}\bigcup_{n=1}^\infty 
\{(B,\lambda)\}\times\mathcal{B}_n(B,\lambda) \subset  {\Xi}_{\text{NS}}
\subset \Xi_{(2,3]+}\,.
\end{equation}

It is well known that 
$\sigma_H(B,0,E)=n$ if $E\in]B_{n},B_{n+1}[$ for all $n=0,1,2\dots$ 
\cite{ASS,BES}.
Given $n\in \N$ and $(B,\lambda_1,E)\in  \mathbb{G}_n$, we can find $\lambda_E > \lambda_1$ such that
$E\in \mathbb{G}_n\up{B,\lambda}$ for all $\lambda \in I= [0,\lambda_E[$.
It follows that, with probability one, 
\begin{equation}\label{contourint}
P_\lambda = - \tfrac 1 {2\pi i} \int_\Gamma R_\lambda(z) \,\di z  \quad 
\text{for all} \ \lambda \in I,
\end{equation}
where $P_\lambda = P_{B,\lambda,E,\omega}  $,
$R_\lambda(z) = (H_{B,\lambda,\omega}-z)^{-1}$, and $\Gamma$ is a
bounded contour such that  $\dist (\Gamma,\sigma (H_{B,\lambda,\omega})) \ge \eta >0$
for all $\lambda \in I$.
(Note $H_{B,\lambda,\omega} \ge B- \lambda_E M_1$ for all 
$\lambda \in I$.) 
It follows that there is a constant $K$ such that (cf. \cite[Proposition 2.1]{BGKS})
\begin{align}
\label{HSest}
\| R_\lambda(z) \chi_x\|_2 \le K \ \quad \text{for all} \ x \in \Z^2,
z \in \Gamma, \lambda \in I .
\end{align}
Given  $\lambda,\xi\in I$, it follows from \eqref{contourint} and the resolvent 
identity that
\begin{align}\label{taylor}
Q_{\lambda,\xi}:= P_{\xi} -P_\lambda =
\tfrac {(\xi -\lambda)} {2\pi i} \int_\Gamma R_\lambda(z) V R_\xi (z)\,\di z,
\end{align}
with $V= V_\omega$ (recall $\|V\| \le  \widetilde{M}:=\max \{M_1,M_2\}$).
Letting $\sigma_\lambda=\sigma_H(B,\lambda,E)$, it follows from Lemma~\ref{lemcont}
that for all $\lambda \in I$, 
 taking $\xi \in I$ in a suitable neighborhood of $\lambda$, we have 
\begin{equation}\label{contestimate3}
\abs{\sigma_\lambda- \sigma_\xi} \le  C_{B,\lambda,E}^\prime  \left\{\sup_{u \in \Z^2}\qnorm{
\chi_0  Q_{\lambda,\xi} \chi_u}_1^{\frac 13}\right\}\le C_{B,\lambda,E}^\prime \set{ \tfrac {\abs{\xi -\lambda}} {2\pi } \widetilde{M}\abs{\Gamma}K^2 }^{\frac 13},
\end{equation}
so $\sigma_\lambda$ is a continuous function of $\lambda$ in the interval $I$. By Theorem~\ref{sigmathm},  $\sigma_\lambda$ is constant in $I$, and hence we conclude that
\beq\label{sigma=n}
\sigma_H(B,\lambda,E)=\sigma_H(B,0,E)=n  \quad \text{for all} \quad (B,\lambda,E)\in  \mathbb{G}_n.
\eeq

Now, let $(B,\lambda)$  satisfy \eqref{gapcond}, and suppose
$\mathcal{B}_n(B,\lambda)\subset \Xi_{(2,3]+}\bp{B,\lambda} $
for some $n\in \N$.   We then have
\beq
(B_{n-1} +\lambda M_1,B_{n+1} - \lambda M_2)  
=\mathbb{G}_{n-1}\up{B,\lambda} \cup \mathcal{B}_n(B,\lambda) 
\cup  \mathbb{G}_{n}\up{B,\lambda}\subset  \Xi_{(2,3]+}\bp{B,\lambda}\, .
\eeq
Since  the integrated density of states $N\up{B,\lambda}(E)$ is assumed to be continuous in $E$, it follows from Theorem~\ref{sigmathmN} that the Hall
conductance 
$\sigma_{H} (B,\lambda,E)$ is constant on the interval $(B_{n-1} +\lambda M_1,B_{n+1} - \lambda M_2) $,and hence has the same value on the spectral gaps
$\mathbb{G}_{n-1}\up{B,\lambda}$ and $\mathbb{G}_{n}\up{B,\lambda}$,
which contradicts \eq{sigma=n}.  Thus we conclude that $\mathcal{B}_n(B,\lambda)$ cannot be a subset of $\Xi_{(2,3]+}\bp{B,\lambda}$, which proves Theorem~\ref{maincorErg}.

\section{Dynamical delocalization for the Anderson-Landau Hamiltonian with closed gaps}\label{secApALH}

In this section we prove Theorem~\ref{closedgaps0}.

Let  $H\up{A}_{B,\lambda,\omega}$ be an Anderson-Landau Hamiltonian as in \eq{ALH}-\eq{potVL}, with a   common probability
distribution $\mu$ with 
$\supp \mu=[-M_1, M_2]$ with $M_1,M_2 \in (0,\infty)$.
As shown in Appendix~\ref{apSp}, we have
\beq \label{splandau66777}
\Sigma_{B,\lambda} = \bigcup_{n\in \N} I_n(B,\lambda), \quad \text{where} \quad  I_n(B,\lambda)= [E_-(n,B,\lambda), E_+(n,B,\lambda)],
\eeq
where, for all $B>0$ and   $n \in\N$,   $\pm E_\pm(n,B,\lambda)$ are increasing, continuous functions of $\lambda >0$, depending on $u$ and   $M_1,M_2$, but not on other details of the measure $\mu$.      We   set $ E_+(0,B,\lambda)=-\infty$. We have \begin{gather} 
B_{n-  1} \le E_-(n,B,\lambda)  < B_n<   E_+(n,B,\lambda)\le B_{n+ 1} \quad \text{for all} \quad n\in \N,\\
B- \lambda M_1 \le  E_-(1,B,\lambda)= E_{0}(B,\lambda):=\inf \Sigma_{B,\lambda} < B ,\notag
\end{gather}
(Note that $ B- \lambda M_1 \le E_{0}(B,\lambda)$ follows from \eq{splandau}.)  In

   If \eqref{gapcond} holds, then   $E_+(n,B,\lambda) < E_-(n+1,B,\lambda)$ for all $n \in \N$  and the spectral gaps do not close.  If for some $n \in \N$ we have  $E_+(n,B,\lambda) \ge  E_-(n+1,B,\lambda)$, the $n$-th spectral gap $(B_n,B_{n+1})$ has closed, i.e.,
$[B_n,B_{n+1}]\subset \Sigma_{B,\lambda} $.

Let us now assume that the single-site potential  $u$ in \eq{potVL}  satisfies 
\beq\label{covcond}
0< U_- \le U(x):= \sum_{i \in\mathbb{Z}^2}  u(x-i) \le  1,
\eeq
for some constant $U_-$. (The upper bound is simply a normalization we had already assumed.) Then, as shown in Appendix~\ref{apSp}, we have
\begin{align}\label{locsp1}
B_n + \lambda M_2U_ -  & \le E_+(n,B,\lambda)  \quad \text{for} \quad \lambda \in \pa{0, \tfrac {2B}{M_2U_ -}},\\
B_n - \lambda  M_1 U_-   & \ge E_-(n,B,\lambda) \quad \text{for} \quad \lambda \in \pa{0, \tfrac {2B}{M_1 U_- }},\\
B - \lambda  M_1 U_-   & \ge E_-(1,B,\lambda)= E_{0}(B,\lambda) \quad \text{for all} \quad \lambda \ge 0.
\end{align}
It follows that if
\beq \label{gapclose1}
\lambda (M_1 + M_2) U_- \ge 2B,
\eeq
all the internal spectral gaps close, i.e.,
\beq
\Sigma_{B,\lambda} = [ E_{0}(B,\lambda), \infty).
\eeq
Theorem~\ref{closedgaps0}(i) is proven.

To prove  Theorem~\ref{closedgaps0}(ii), we assume \eq{rhoeta} 
and fix   $\widehat{\lambda} >  \frac 1 {U_-} B  $,  and $ \delta \in (0,B)$.  Let $J_n(B)$ be as in \eq{JnB}, we 
set
\beq \begin{split}\label{JnB1}
\widehat{J}_n(B)&:=\pa{{B}_n + \tfrac \delta 2, {B}_{n+1} - \tfrac \delta 2},  \quad n\in \N ,\\
\widehat{J}_0(B)&:=\pa{ -\infty , B -  \tfrac \delta 2} \subset (-\infty, B).
\end{split}\eeq

We will prove \eq{Jloc}  by a multiscale analysis.  The multiscale analysis is carried on for the finite volume operators defined in \cite[Section~4 and 5]{GKS}; the Anderson-Landau Hamiltonian satisfies all the requirements for the multiscale analysis plus  a Wegner estimate  \cite[Sections~4 and 5]{GKS}.  We take scales $L \in L_B \N$, where  $L_B\ge 1$ is defined in \cite[Eq.~(5.1)]{GKS}, and consider boxes $\Lambda_L(x)= x + [-\frac L 2, \frac L 2)^2$, $x \in \R^2$, and let   $\widetilde{\Lambda}_L(x)= \Lambda_L(x)\cap \Z^2$. We define finite volume operators $H_{B,\lambda,0,L,\omega}$ on  $ \mathrm{L}^2(\Lambda_L(0))$ as in \cite[Eq.~(5.2)]{GKS}:
\begin{equation}\begin{split}  \label{landauh2} 
H_{B,\lambda,0,L,\omega}& = H_{B,0,L} +
\lambda  V_{0,L,\omega} \quad \mathrm{on} \quad
\mathrm{L}^2(\Lambda_L(0)),\\
V_{0,L,\omega}(x)&=  \sum_{i\in  \widetilde{\Lambda}_{L-\delta_u}(0)}
\omega_i \,u(x-i),
\end{split}\end{equation}
where $ H_{B,0,L}$ is defined in  \cite[Sections 5]{GKS} and  $\supp u \subset  \pa{-\frac {\delta_u} 2, \frac {\delta_u} 2}^2$, and then define  $H_{B,\lambda,\omega,x,L}$ for all $ x \in \Z^2$ 
by   \cite[Eq.~(4.3)]{GKS}.  (We prescribed periodic boundary condition for the
(free) Landau Hamiltonian  at the square centered at
$0$, and used the magnetic translations to define the finite volume
operators in all other squares by  \cite[Eq.~(4.3)]{GKS}; in the square
centered at  $ x \in \Z^2$ the potential  $V_{x,L,\omega}$ is exactly as 
in \eqref{landauh2} except that the sum is now over 
$i\in  \widetilde{\Lambda}_{L-\delta_u}(x) $.) 

A   Wegner estimate is given in    \cite[Theorem~5.1]{GKS} and extended in \cite[Theorem~4.3]{CHK2}; 
note that the constants in the Wegner estimate  can be chosen uniformly in $\lambda \in [\lambda_1.\lambda_2]$  if $\lambda_1>0$.   It follows that for a closed interval  
$I \subset (B_n,B_{n+1})$, $n=0,1,2,\ldots$, they can be chosen uniformly in $\lambda \in 
[0,\widehat{\lambda} ]$. (But note that the constants will depend on the interval $I$, and hence for $I=\widehat{J}_n(B)$ they will depend on $n$.)
But one has to be careful in the multiscale analysis, since $\norm{\rho}_\infty$ appears in the Wegner estimate,   \eq{rhoeta}  gives $\norm{\rho}_\infty= \frac {\eta +1} 2$, and we will prove \eq{Jloc} for $\eta$ sufficiently large.

All these issues can be taken in consideration by applying the  finite volume criterion for localization given in
\cite[Theorem~2.4]{GKgafa},  in a similar  way  to the application in 
 \cite[Proof of Theorem~3.1]{GKgafa}. 

We write  $\Lambda=\Lambda_L(x)$, $H_{B,\lambda,L,\omega}=H_{B,\lambda,x,L,\omega}$, etc. 
If  $ \lambda \abs{\omega_i}  \le \frac \delta 2$ for all $i \in \widetilde{\Lambda}$, then we have by Lemma~\ref{lemmaLsp} (it also applies to finite volume operators) that 
\begin{equation} \label{splandau5599}
\sigma\pa{H_{B,\lambda,L,\omega}} \subset
 \bigcup_{n=1}^\infty \pb{B_n -\tfrac \delta 2, {B}_n +\tfrac \delta 2 }.
  \end{equation}
We have    
 \beq\begin{split}
\inf_{\lambda \in [0, \widehat{\lambda}]}  \P\set{\lambda \abs{ \omega_i} \le \tfrac \delta 2 \quad \text{for all  $i \in \widetilde{\Lambda}$}}
&\ge 1 - L^2 \,  \P\set{\widehat{\lambda} \abs{\omega_0} >\tfrac \delta 2}\\
& = 1- L^2 \pa{1- \tfrac \delta{2 \widehat{\lambda}}}^\eta 
\end{split}\eeq
where  $\frac \delta{2 \widehat{\lambda}} < \frac {U_- }{2}  \le\frac 1 2$

 Given $\omega$ satisfying \eq{splandau5599},  $E \in J_n(B)$ implies  $\dist\pa{E, \sigma\pa{H_{B,\lambda,L,\omega}}}> \frac \delta 2$. Let $R_{B,\lambda,L,\omega}(E) =\pa{H_{B,\lambda,L,\omega}-E}^{-1}$.
  It follows from the Combes estimate (cf. \cite[Theorem~1]{GKdecay}; note that the estimate holds for finite volume operators with periodic boundary condition with uniform constants for large enough volumes  using the distance on the torus, cf. 
   \cite[Lemma 18]{FKac} and  \cite[Theorem 3.6]{KK1}) that 
   \beq
   \|\chi_x R_{B,\lambda,L,\omega}(E) \chi_y \| \le \tfrac {C_1}{\delta} \e^{-C_2 \delta L }\quad \text{for all $x,y \in \widetilde{\Lambda}$ with $\abs{x-y} \ge \tfrac L {10}$},
   \eeq 
 where $C_1,C_2 >0$ are constants, depending only  on $n$, $B$, $u$.

Let us fix $n\in \N$ and prove that 
$ J_n(B)  \subset \Xi_{\mathrm{DL}}\up{B,\lambda}$  for all  $\lambda \in [0, \widehat{\lambda}]$. (The case $n=0$ can be handled in a similar manner.)   We take the constants in the Wegner estimate  valid for subintervals of $\widehat{J}_n(B)$, uniformly in $\lambda \in [0,\widehat{\lambda} ]$.    Thus, if we have \eq{splandau5599}, we will have the condition whose probability is estimated in 
\cite[Eq.~(2.17)]{GKS}  if
\beq\label{217}
L^9 \tfrac {C_1}{\delta} \e^{-C_2 \delta L } <  \frac { C_3}{\eta +1},
\eeq
where $C_3$ is another constant  depending only  on $n$, $B$, $u$, and $\delta$.

We now take $L_0(n)$ satisfying  \cite[Eq.~(2.16)]{GKS} and large enough for the Wegner estimate, and for $L_0 \ge L_0(n)$ we set 
\beq
\eta(n,L_0)= 1 + \tfrac{C_3 \delta} {2 C_1} L_0^{-9}  \e^{C_2 \delta L_0 },
\eeq
so \eq{217}  holds with $L=L_0$ and $\eta=\eta(n,L_0)$ . Since
\beq
\lim_{L_0 \to \infty} L_0^2 \pa{1- \tfrac \delta{2 \widehat{\lambda}}}^{ \eta(n,L_0)}=0
\eeq
Thus we can find $\eta(n) >0$ such that for all $\eta \ge \eta(n)$ there exists $L_0(\eta)\ge L_0(n)$
for which we have \cite[Eq.~(2.17)]{GKS}, so $E \in J_n(B)$  implies  $E \in \Xi_{\mathrm{DL}}\up{B,\lambda}$.

Thus given $N \in \N$, letting 
$\eta_N =\max_{n=0,1,2,\ldots,N} {\eta}(n)$, we have \eq{Jloc} for $\eta \ge \eta_N$.

Since  the Hall conductance 
$\sigma_H(B,0,E)=n$ if $E\in(B_{n},B_{n+1})$ for all $n=0,1,2\dots$ 
\cite{ASS,BES},
it follows from  Theorem~\ref{sigmathm20} that for $\eta \ge \eta_N$ we have 
\beq
\sigma_H(B,\lambda,E)= n \quad \text{for all} \quad (\lambda,E) \in 
[0, \widehat{\lambda}]\times J_n(B).
\eeq
We now proceed as in \cite[Proof of Theorem~2.2]{GKS}, using again  Theorem~\ref{sigmathm20} (here we could also use Theorem~\ref{sigmathmN0}), to conclude that
for   $n=1,2,\ldots,N$ we have $E_n(B,\lambda) \in
\pb{B_n -\delta, {B}_n +\delta }$ with  $L_{+}\bp{B,\lambda}(E_n(B,\lambda)) = \infty$, so we have \eq{EnB}, and \eq{momentgrowth9} follows from \cite[Theorem~2.11]{GKduke}, as in  \cite[Theorem~2.2]{GKS}.

Theorem~\ref{closedgaps0} is proven.

\appendix

\section{The spectrum of Landau Hamiltonians with bounded potentials}\label{appLsp}

In the appendix we justify  \eq{splandau}.
\begin{lemma}\label{lemmaLsp} Let $H= H_B  +W$, where $H_B$ is the free Landau Hamiltonian as in \eq{free Landau}, and $- M_1\le W\le  M_2$, where $M_1, M_2 \in [0,\infty )$.  Then
\begin{equation} \label{splandau500}
\sigma(H) \subset
 \bigcup_{n=1}^\infty [B_n -  M_1, B_n +M_2].
\end{equation}
\end{lemma}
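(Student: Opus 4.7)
My plan is to reduce the asymmetric bound to the standard symmetric stability of spectrum under bounded perturbations, via a simple shift trick.

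First, I would center the potential. Set $c = (M_2 - M_1)/2$ and $W' = W - c$, so that $-\tfrac{M_1+M_2}{2} \le W' \le \tfrac{M_1+M_2}{2}$; in particular, $\|W'\| \le M := (M_1+M_2)/2$. Write $H = H_B' + W'$ where $H_B' := H_B + c$ is self-adjoint with $\sigma(H_B') = \{B_n + c : n = 1,2,\dots\}$.

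Next I would invoke the standard resolvent perturbation bound: for any self-adjoint $A$ and bounded self-adjoint $V$, one has $\sigma(A+V) \subset \{E \in \R : \dist(E, \sigma(A)) \le \|V\|\}$. The proof is one line: if $\dist(E, \sigma(A)) > \|V\|$, then $\|(A-E)^{-1}\| < \|V\|^{-1}$, so the Neumann series for $(I + (A-E)^{-1}V)^{-1}$ converges and $A+V-E = (A-E)(I+(A-E)^{-1}V)$ is boundedly invertible. Applying this with $A = H_B'$ and $V = W'$ gives
\begin{equation*}
\sigma(H) \subset \bigcup_{n=1}^\infty \pb{B_n + c - M,\ B_n + c + M}.
\end{equation*}

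Finally I would just do the arithmetic: $B_n + c + M = B_n + \tfrac{M_2-M_1}{2} + \tfrac{M_1+M_2}{2} = B_n + M_2$, and $B_n + c - M = B_n - M_1$. This yields \eqref{splandau500}. There is no real obstacle here — the only mildly nonobvious point is the centering step, since a naive resolvent perturbation estimate using $\|W\| \le \max(M_1,M_2)$ would only give the symmetric bound $B_n \pm \max(M_1,M_2)$; subtracting the mean of the allowed range for $W$ before applying the symmetric estimate is what produces the sharp asymmetric widths $-M_1$ on the left and $+M_2$ on the right.
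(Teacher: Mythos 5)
Your proof is correct and is essentially identical to the paper's: the paper writes $H=\pa{H_B-\tfrac{M_1-M_2}{2}}+\pa{W+\tfrac{M_1-M_2}{2}}$, which is exactly your centering trick, and then cites Kato's Theorem~V.4.10 for the symmetric perturbation bound that you prove directly via the Neumann series.
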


\begin{proof}
THe lemma follows from  \cite[Theorem~V.4.10]{Ka} by writing
\beq
H=  \pa{H_B -  \tfrac {M_1-M_2}2} + \pa{W+ \tfrac {M_1-M_2}2}.
\eeq
\end{proof}

\section{The spectrum of Anderson-Landau Hamiltonians}  \label{apSp} 

Consider an Anderson-Landau Hamiltonian  $H_{B,\lambda,\omega}=H\up{A}_{B,\lambda,\omega}$ as in \eq{ALH}-\eq{potVL}, and 
suppose that 
\beq  \label{suppmu}
\supp \mu=[-M_1, M_2]
\quad \text{with 
$ M_1,M_2 \in (0,\infty)$} . 
\eeq 
(The argument applies also to the case $ M_1,M_2 \in [0,\infty)$ {with} $ M_1 + M_2 >0$, with the obvious modifications.) In this appendix we make no other hypotheses on the common probability distribution  $\mu$. It follows from  \cite[Theorem 4]{KM2}, which applies also to  Anderson-Landau Hamiltonians, that under these hypotheses we have
\beq\label{KMsp}
\Sigma_{B,\lambda} = \bigcup_{\omega \in \Omega_{\supp}}  \sigma\pa{H_{B,\lambda,\omega}},\quad  \text{where}\quad \Omega_{\supp}:= [-M_1, M_2]^{\Z^2}.
\eeq
We consider squares $\Lambda_L:= [- \frac L 2,  \frac L 2)$  centered at the origin with  side $L>0$. Given such a square $\Lambda$, we define $\omega\up{\Lambda}$ by $\omega\up{\Lambda}_j=\omega_j$ if $j \in \Lambda$ and   $\omega\up{\Lambda}_j=0$ otherwise, and set
\beq \label{ALHfin}
H_{B,\lambda,\omega}\up{\Lambda}:= H_B + \lambda V_\omega\up{\Lambda}, \quad \text{where} \quad V_\omega\up{\Lambda}=V_{\omega\up{\Lambda}}.
\eeq
Note that $ V_\omega\up{\Lambda}$ is relatively compact   with respect to $H_B$, so $\Sigma_B$ is also  the essential spectrum of $H_{B,\lambda,\omega}\up{\Lambda}$.  In particular, $H_{B,\lambda,\omega}\up{\Lambda}$ has discrete spectrum in  the spectral gaps $\set{\mathcal{G}_n(B):=(B_{n},B_{n+1}), \; n=0,1, \dotsc}$ of $H_B$.   Since $\omega\up{\Lambda} \in \Omega_{\supp}$  if $\omega \in \Omega_{\supp}$, it follows that
\beq \label{splandau44}
\Sigma_B\subset \Sigma_{B,\lambda}=\overline{ \bigcup_{n=1}^\infty \bigcup_{\omega \in \Omega_{\supp}} \sigma\pa{H_{B,\lambda,\omega}\up{\Lambda_{L_n}}}},
\eeq
for any $L_n \to \infty$.  (This uses \eq{KMsp} plus the fact that 
$H_{B,\lambda,\omega}\up{\Lambda_{L_n}}$ converges to $H_{B,\lambda,\omega}$ in the strong resolvent sense.)  In particular, it follows from \eq{suppmu}
that  $\Sigma_{B,\lambda}$ is increasing with $\lambda$.

Let $\omega \in \Omega_{\supp}$,  $\omega\up{\Lambda} > 0$, that is,   $\omega_j \ge 0 $  for all $j \in \Lambda$ and  $\sum_{j\in \Lambda} \omega_j >0$. In this case $V_\omega\up{\Lambda}\ge 0$,  and 
\begin{equation} \label{splandau55}
\Sigma_B \subset \sigma\pa{H_{B,\lambda,\omega}\up{\Lambda}}\subset
 \bigcup_{n=1}^\infty [B_n , B_n +\lambda M_2].
\end{equation}

We now  use a modified Birman-Schwinger method,  following \cite[Section~4]{FKmg}. We fix  $n\in \N$ and set
\beq
\mathcal{R}(E)= -\sqrt{ V_\omega\up{\Lambda}} \pa{H_B - E}^{-1}\sqrt{ V_\omega\up{\Lambda}} \quad \text{for} \quad E \in (B_n, B_{n+1}),
\eeq
a compact self-adjoint operator. Let $r^+(E)= \max \sigma\pa{ \mathcal{R}(E)}$.  We claim
\beq \label{Rlim}
\lim_{E \downarrow B_n} r^+(E)=  \infty.
\eeq
To see this, let $\Pi_n= \chi_{\{B_n\}}(H_B)$.  Then
\beq \label{Rlim2}
\mathcal{R}(E)=  \tfrac {1}{E- B_n}\sqrt{ V_\omega\up{\Lambda}} \Pi_n\sqrt{ V_\omega\up{\Lambda}} - \sqrt{ V_\omega\up{\Lambda}}\pa{1- \Pi_n} \pa{H_B - E}^{-1}\sqrt{ V_\omega\up{\Lambda}}.
\eeq
Since 
\beq \norm{\sqrt{ V_\omega\up{\Lambda}}\pa{1- \Pi_n} \pa{H_B - E}^{-1}\sqrt{ V_\omega\up{\Lambda}}} \le \frac {M_2 }{B} \quad\text{for} \quad E \in (B_n, B_n+B) ,
\eeq 
\eq{Rlim} follows if we show that $\sqrt{ V_\omega\up{\Lambda}} \Pi_n\sqrt{ V_\omega\up{\Lambda}}\not=0$.  But otherwise we would conclude that  $\sqrt{ V_\omega\up{\Lambda}} \Pi_n=0$  ($A^*A=0$ implies $A=0$), and, since $V_\omega\up{\Lambda}>0$ in an nonempty open set, we would contradict the unique continuation principle.  Now, using \eq{Rlim}, we conclude, as in \cite[Proposition~4.3]{FKmg}, that
$H_{B,\lambda,\omega}\up{\Lambda}$ has an eigenvalue in  $(B_n , B_n +\lambda M_2]$
for all sufficiently small $\lambda >0$. 

Now, let us replace $\omega$ by $M_2$ in the notation if $\omega_j=M_2$ for all $j$, and consider $H_{B,\lambda,M_2}\up{\Lambda}$. Fix $n \in \N$, and let  $E_+\up{\Lambda}(n,B,\lambda)$ denote the biggest eigenvalue of $H_{B,\lambda,M_2}\up{\Lambda}$ in the open interval $(B_n,B_{n+1})$. We have shown  the existence of $E_+\up{\Lambda}(n,B,\lambda)$ for   small $\lambda>0$. By the argument in \cite[Section~VII.3.2]{Ka}, $E_+\up{\Lambda}(n,B,\lambda)$ then exists for  $\lambda \in (0, \lambda_+\up{\Lambda}(n,B))$, with $ \lambda_+\up{\Lambda}(n,B)>0$,  where it is continuous and increasing in $\lambda$. In view of \eq{splandau55}, we have $\lim_{\lambda \downarrow 0} E_+\up{\Lambda}(n,B,\lambda) = B_n$ and  $ \lambda_+\up{\Lambda}(n,B) \ge  \frac {2B}{M_2}$.  In addition, we must either have $ \lambda_+\up{\Lambda}(n,B)=\infty$ or
$\lim_{\lambda \uparrow \lambda_+\up{\Lambda}(n,B)} E_+\up{\Lambda}(n,B,\lambda) = B_{n+1}$. In the latter case we may thus extend $E_+\up{\Lambda}(n,B,\lambda)$ as an increasing, continuous function for $\lambda \in (0,\infty)$ by setting $E_+\up{\Lambda}(n,B,\lambda)= B_{n+1}$ for $\lambda \ge  \lambda_+\up{\Lambda}(n,B)$.

 A similar argument produces a smallest  eigenvalue $E_-\up{\Lambda}(n,B,\lambda) \in [B_{n-1} ,B_n)$ of $H_{B,\lambda,-M_1}\up{\Lambda}$ in $(B_{n-1},B_{n})$
for $\lambda \in (0, \lambda_-\up{\Lambda}(n,B))$, where $ \lambda_-\up{\Lambda}(n,B) \ge  \frac {2B}{M_1}$,   continuous and decreasing in $\lambda$,  with 
$\lim_{\lambda \downarrow 0} E_-\up{\Lambda}(n,B,\lambda) = B_n$.  Moreover, $ \lambda_-\up{\Lambda}(1,B)=\infty$, and, for $n=2,3,\ldots$,  either  $ \lambda_-\up{\Lambda}(n,B)=\infty$ or
$\lim_{\lambda \uparrow \lambda_-\up{\Lambda}(n,B)} E_-\up{\Lambda}(n,B,\lambda) = B_{n-1}$. In the latter case we extend $E_-\up{\Lambda}(n,B,\lambda)$ as a decreasing, continuous function for $\lambda \in (0,\infty)$ by setting $E_-\up{\Lambda}(n,B,\lambda)= B_{n-1}$ for $\lambda \ge  \lambda_-\up{\Lambda}(n,B)$.

 For an arbitrary  $\omega \in \Omega_{\supp}$ and $\lambda>0$, the eigenvalues of  $H_{B,\lambda,\omega}\up{\Lambda}$ in the intervals $(B_n, B_n + \lambda M_2)$ and  $(B_n -\lambda M_1, B_n)$ (if they exist) are separately continuous and increasing in each $\omega_j\in [-M_1,M_2]$, $j \in \Lambda$, and hence they must be in the interval $ I_n\up{\Lambda}(B,\lambda)= [E_-\up{\Lambda}(n,B,\lambda), E_+\up{\Lambda}(n,B,\lambda)]$. Thus we conclude that for each
square $\Lambda$ we have
\beq \label{splandau66}
\bigcup_{\omega \in \Omega_{\supp}} \sigma\pa{H_{B,\lambda,\omega}\up{\Lambda}}
= \bigcup_{n\in \N} I_n\up{\Lambda}(B,\lambda).
\eeq
In addition, the same argument shows that for fixed $\lambda$ and $B$ we have  $\pm E_\pm\up{\Lambda}(n,B,\lambda) $   increasing with $\Lambda$.
We  set $ E_+ (n,B,\lambda):= \sup_{\Lambda} E_+ \up{\Lambda}(n,B,\lambda)\le B_{n + 1}  $,  $ E_-  (n,B,\lambda):= \inf_{\Lambda} E_- \up{\Lambda}(n,B,\lambda)\ge B_{n-1}  $, and conclude from \eq{splandau44} and \eq{splandau66} that
(cf. \cite[Eq.~(2.11)]{GKS}
\beq \label{splandau667}
\Sigma_{B,\lambda} = \bigcup_{n\in \N} I_n(B,\lambda), \quad \text{where} \quad  I_n(B,\lambda)= [E_-(n,B,\lambda), E_+(n,B,\lambda)].
\eeq
Note that the intervals $ I_n(B,\lambda)$ depend on $\supp \mu=[-M_1,M_2]$, but not on other details of the measure $\mu$.

Now assume that $u$ in \eq{potVL} satisfies 
\beq\label{covcond5}
0< U_- \le U(x):= \sum_{i \in\mathbb{Z}^2}  u(x-i) \le  1,
\eeq
for some constant $U_-$. (The upper bound is simply a normalization we had already assumed.)  In this case,  for all $n \in \N$ we have 
\begin{align}\label{locsp}
B_n + \lambda M_2 U_- & \le E_+(n,B,\lambda)  \quad \text{for} \quad \lambda \in \pa{0, \tfrac {2B}{M_2 U_-}},\\
B_n - \lambda M_1 U_-  & \ge E_-(n,B,\lambda) \quad \text{for} \quad \lambda \in \pa{0, \tfrac {2B}{M_1 U_-}}.\label{locsp25}
\end{align}
We also have
\beq \label{locsp256}
B - \lambda  M_1 U_-    \ge E_-(1,B,\lambda) \quad \text{for all} \quad \lambda \ge 0.
\eeq
This can be seen as follows.  Take $ \lambda \in(0, \frac {2B}{M_2 U_-})$, then
\beq
H_{B,\lambda,M_2}=  H_B + \lambda M_2 U_- +  \lambda M_2 (U-U_-), \;\; \text{with} \;\;0\le U-U_-\le 1- U_-.
\eeq
Since $\sigma\pa{ H_B + \lambda M_2 U_-}=
\Sigma_B + \lambda M_2U_-=\set{B_n +\lambda M_2 U_- ;\  n\in \N }  $, it follows from  \cite[Theorem~4.10]{Ka}  (as in Lemma~\ref{lemmaLsp}), and the definition of $ E_+(n,B,\lambda)$, that
\begin{equation} \label{splandau66666}
\sigma\pa{H_{B,\lambda,M_2}}\subset
 \bigcup_{n=1}^\infty 
[B_n +\lambda M_2 U_- ,  E_+(n,B,\lambda)] . 
\end{equation}
Since by the same argument
\beq\label{samearg}
\Sigma_B + \lambda M_2U_-\subset   \bigcup_{n\in \N_{\not=\emptyset}  }
[B_n +\lambda M_2 U_- -\lambda M_2(1- U_-)  , E_+(n,B,\lambda) ] ,
\eeq
where $ \N_{\not=\emptyset} :=\set{n\in \N;\; \sigma\pa{H_{B,\lambda,M_2}} \cap [B_n +\lambda M_2 U_- ,  E_+(n,B,\lambda)] \not= \emptyset }$,
we conclude that   $ \N_{\not=\emptyset}=\N$. 
It then follows from \eq{splandau667} that \eq{locsp} holds.  \eq{locsp25} and  \eq{locsp256} are proved in a similar manner.

Under the condition \eq{gapcond} the spectral gaps never close.  On the other hand, if 
we have \eq{covcond5},  if
\beq \label{gapclose}
\lambda U_- (M_1 + M_2) \ge 2B,
\eeq
all the internal spectral gaps close, i.e.,
\beq
\Sigma_{B,\lambda} = \pa{E_- (1,B,\lambda), \infty}.
\eeq

\end{document}